\newtheorem{lemma}{Lemma}
\pgfplotsset{compat=1.18}
\def\myurl#1{\setbox0\vbox{\hsize.5\maxdimen
\url{#1}\par
\setbox0\lastbox
\global\setbox1\hbox{\unhbox0\unskip\unskip\unpenalty}}\unhbox1 }
\newif\ifeprint
\newif\ifsubmission
\newif\iffullversion
\newif\ifanonymous
	\newcommand{\TODO}[1]{}
	\newcommand{\TODO}[1]{\marginpar{\textcolor{red}{TODO: #1}}}
\let\oldS\S
\newcommand{\sect}[1]{\oldS\ref{#1}}
\newcommand{\fig}[1]{Fig.~\ref{#1}}
\newcommand{\tab}[1]{Tab.~\ref{#1}}
\newcommand{\ci}[1]{~\cite{#1}}
\newcommand{\circledfig}[1]{\tikz[baseline=(char.base)]{\node[shape=circle,draw,inner sep=0.5pt] (char) {#1};}}
\newcommand{\bfcircled}[1]{\circledfig{\textbf{#1}}}
\newcommand{\accept}{\texttt{Accept}}
\newcommand{\abort}{\texttt{abort}}
\newcommand{\ourname}{{RIPPLE}} 
\newcommand{\teename}{{RIPPLE\textsubscript{\sf TEE}}}
\newcommand{\pirname}{{RIPPLE\textsubscript{\sf PIR}}}
\newcommand{\participantset}{\mathcal{P}}
\newcommand{\participant}[1]{\mathcal{P}_{#1}}
\newcommand{\numparticipants}{\mathsf{p}}
\newcommand{\numservers}{\mathsf{M}}
\newcommand{\mpcservers}{\ensuremath{\mathcal{C}}}
\newcommand{\mpcserver}[1]{\ensuremath{\mathsf{S}_{#1}}}
\newcommand{\ri}{\ensuremath{\mathsf{RI}}}
\newcommand{\funcCollection}{\ensuremath{\mathcal{F}_{{\sf gen}}}}
\newcommand{\funcSimulation}{\ensuremath{\mathcal{F}_{{\sf esim}}}}
\newcommand{\funcAggregation}{\ensuremath{\mathcal{F}_{{\sf agg}}}}
\newcommand{\funcAnon}{\ensuremath{\mathcal{F}_{{\sf anon}}}}
\newcommand{\funcPirSum}{\ensuremath{\mathcal{F}_{{\sf pirsum}}}}
\newcommand{\funcPirS}{\ensuremath{\mathcal{F}^{\sf 2S}_{\sf pir}}}
\newcommand{\funcVerify}{\ensuremath{\mathcal{F}_{{\sf vrfy}}}}
\newcommand{\PIRSUM}{\ensuremath{\mathsf{PIR_{sum}}}}
\newcommand{\PIRSUMA}{\ensuremath{\mathsf{PIR^{I}_{sum}}}}
\newcommand{\PIRSUMB}{\ensuremath{\mathsf{PIR^{II}_{sum}}}}
\newcommand{\numberofsimulations}{\ensuremath{N_{\sf sim}}}
\newcommand{\simsteps}{\ensuremath{N_{\sf step}}}
\newcommand{\currstep}{\ensuremath{\mathsf{s}}}
\newcommand{\simcurrstep}{\ensuremath{\mathsf{s}_{\sf sim}}}
\newcommand{\infclass}{\mathsf{class_{inf}}}
\newcommand{\numinfclass}{\ensuremath{N_{\sf inf}}}
\newcommand{\infclassval}[1]{\mathsf{class^{#1}_{inf}}}
\newcommand{\simclassval}[2]{\mathsf{v^{#1}_{#2}}}
\newcommand{\aggclassval}[1]{\mathsf{C^{#1}_{inf}}}
\newcommand{\likelihood}{\delta}
\newcommand{\rlikelihood}{\hat{\delta}} 
\newcommand{\clikelihood}{\Delta}
\newcommand{\blindedclikelihood}{\hat{\Delta}}
\newcommand{\encounterset}[1]{\ensuremath{\mathcal{E}_{#1}}}
\newcommand{\maxencounters}[1]{\ensuremath{{E}^{\sf max}_{#1}}}
\newcommand{\avgencounters}[1]{\ensuremath{{E}^{\sf avg}_{#1}}}
\newcommand{\hybrid}[1]{\ensuremath{\textbf{HYB}_{\sf #1}}}
\newcommand{\entrynode}{\ensuremath{\mathcal{N}_{\texttt{entry}}}}
\newcommand{\exitnode}{\ensuremath{\mathcal{N}_{\texttt{exit}}}}
\newcommand{\pkey}{\ensuremath{\mathrm{pk}}}
\newcommand{\skey}{\ensuremath{\mathrm{sk}}}
\newcommand{\Enc}[1]{\ensuremath{\mathsf{Enc}_{#1}}}
\newcommand{\query}{q}
\newcommand{\querySet}{\mathcal{Q}}
\newcommand{\qshift}{q'}
\newcommand{\qshiftcorr}[1]{\ensuremath{\theta}_{#1}}
\newcommand{\threshold}{\tau}
\newcommand{\db}{\ensuremath{\mathsf{D}}} 
\newcommand{\dbV}[1]{\ensuremath{\mathsf{D}[#1]}} 
\newcommand{\dbsize}{N}
\newcommand{\maskeddb}[1]{\ensuremath{\mathsf{D}^{#1}}} 
\newcommand{\maskeddbV}[2]{\ensuremath{\maskeddb{#1}[#2]}}
\newcommand{\pirresult}{\ensuremath{\mathsf{res}}} 
\newcommand{\pirmask}{m}
\newcommand{\pirmaskquery}{\ensuremath{\mathsf{r}}} 
\newcommand{\pirkey}{k}
\newcommand{\Gen}{{\sf Gen}}
\newcommand{\Eval}{{\sf Eval}}
\newcommand{\Ver}{{\sf Ver}}
\newcommand{\Adv}{\mathcal{A}}
\newcommand{\ashr}[2]{\ensuremath{\langle #1 \rangle}_{#2}}
\newcommand{\bitvector}[1]{\ensuremath{\vec{\mathsf{b}^{#1}}}}
\newcommand{\bitV}[2]{\ensuremath{\mathsf{b}^{#1}_{#2}}}
\newcommand{\Z}[1]{\ensuremath{\mathbb{Z}}_{2^{#1}}}
\newcommand{\Hash}{\ensuremath{\mathsf{H}}}
\newcommand{\permutation}{\ensuremath{\pi}}
\definecolor{darkred}{rgb}{.6,0,0}
\definecolor{darkgreen}{rgb}{0,.4,0}
\definecolor{darkblue}{rgb}{0,0,.6}
\newcommand{\xor}{\oplus}
\newcommand{\Xor}{\bigoplus}
\newcommand{\makeparafit}{\looseness=-1}
\setlist[description]{style=unboxed,leftmargin=0cm}
\newenvironment{tifs_enumerate}{
	\begin{list}{{$\bullet$}}{
			\setlength\partopsep{0pt}
			\setlength\parskip{0pt}
			\setlength\parsep{2pt}
			\setlength\topsep{0pt}
			\setlength\itemsep{1pt}
			\setlength{\itemindent}{5pt}
			\setlength{\leftmargin}{1pt}
		}
	}{
	\end{list}
}
\newenvironment{tifs_innerenumerate}{
	\begin{list}{{$\bullet$}}{
			\setlength{\itemindent}{15pt}
			\setlength{\leftmargin}{0pt}
		}
	}{
	\end{list}
}
\newenvironment{tifs_item_text}{
	\begin{list}{{$\bullet$}}{
			\setlength\partopsep{2pt}
			\setlength\parskip{0pt}
			\setlength\parsep{2pt}
			\setlength\topsep{3pt}
			\setlength\itemsep{1pt}
			\setlength{\itemindent}{20pt}
			\setlength{\leftmargin}{0pt}
		}
	}{
	\end{list}
}
\newenvironment{tifs_item_prot}{
	\begin{list}{{$\bullet$}}{
			\setlength\partopsep{2pt}
			\setlength\parskip{0pt}
			\setlength\parsep{2pt}
			\setlength\topsep{3pt}
			\setlength\itemsep{0pt}
			\setlength{\itemindent}{5pt}
			\setlength{\leftmargin}{3pt}
		}
	}{
	\end{list}
}
\newenvironment{tifs_innerprot}{
	\begin{list}{{$\bullet$}}{
			\setlength\partopsep{2pt}
			\setlength\parskip{0pt}
			\setlength\parsep{1pt}
			\setlength\topsep{2pt}
			\setlength\itemsep{1pt}
			\setlength{\itemindent}{12pt}
			\setlength{\leftmargin}{10pt}
		}
	}{
	\end{list}
}
\newenvironment{acm_item_text}{
	\begin{list}{{$\bullet$}}{
			\setlength\partopsep{2pt}
			\setlength\parskip{0pt}
			\setlength\parsep{2pt}
			\setlength\topsep{3pt}
			\setlength\itemsep{1pt}
			\setlength{\itemindent}{20pt}
			\setlength{\leftmargin}{0pt}
		}
	}{
	\end{list}
}
\newenvironment{acm_inneritem}{
	\begin{list}{{$\bullet$}}{
			\setlength\partopsep{2pt}
			\setlength\parskip{0pt}
			\setlength\parsep{1pt}
			\setlength\topsep{2pt}
			\setlength\itemsep{1pt}
			\setlength{\itemindent}{16pt}
			\setlength{\leftmargin}{8pt}
		}
	}{
	\end{list}
}
\newenvironment{acm_text_list}{
	\begin{list}{{$\bullet$}}{
			\setlength\partopsep{2pt}
			\setlength\parskip{0pt}
			\setlength\parsep{2pt}
			\setlength\topsep{3pt}
			\setlength\itemsep{1pt}
			\setlength{\itemindent}{5pt}
			\setlength{\leftmargin}{3pt}
		}
	}{
	\end{list}
}
\newenvironment{acm_text_sublist}{
	\begin{list}{{$\bullet$}}{
			\setlength\partopsep{2pt}
			\setlength\parskip{0pt}
			\setlength\parsep{2pt}
			\setlength\topsep{3pt}
			\setlength\itemsep{1pt}
			\setlength{\itemindent}{12pt}
			\setlength{\leftmargin}{10pt}
		}
	}{
	\end{list}
}
\definecolor{LightGray}{gray}{0.97}
\newcommand{\signature}{\sigma}
\newcommand{\securityparam}{\lambda}
\newcommand{\metadata}{m}
\newcommand{\encounterlist}{E}
\newcommand{\rencounterlist}{R}
\newcommand{\infectionclass}{I}
\newcommand{\initialinfectionclass}{I^{\mathsf{init}}}
\newcommand{\token}{r}
\newcommand{\user}{\mathcal{P}}
\newcommand{\psim}{\mathsf{param_{sim}}}
\newcommand{\covid}{COVID-19}
\newtheorem{definition}{Definition}
\begin{document}

\title{Privacy-Preserving Epidemiological Modeling on Mobile Graphs\iffullversion~(Full Version)$^*$\thanks{$^*$Please cite the journal version of this work published at IEEE Transactions on Information Forensics and Security (TIFS)~\cite{RIPPLE}.}\fi}

\author{
    \IEEEauthorblockN{
        Daniel Günther\IEEEauthorrefmark{2}, 
        Marco Holz\IEEEauthorrefmark{2}, 
        Benjamin Judkewitz\IEEEauthorrefmark{3}, 
        Hellen Möllering\IEEEauthorrefmark{2}, 
        Benny Pinkas\IEEEauthorrefmark{4}, 
        Thomas Schneider\IEEEauthorrefmark{2}, 
        Ajith Suresh\IEEEauthorrefmark{5}
    }
    
    \IEEEauthorblockA{
        \IEEEauthorrefmark{2}Technical University of Darmstadt, Germany
        \IEEEauthorrefmark{3}Charité-Universitätsmedizin, Germany
        \IEEEauthorrefmark{4}Bar-Ilan University, Israel
        \IEEEauthorrefmark{5}Technology Innovation Institute~(TII), Abu Dhabi
    }
    \iffullversion\IEEEauthorblockA{\\\{\href{mailto:guenther@encrypto.cs.tu-darmstadt.de}{guenther}, 
    \href{mailto:holz@encrypto.cs.tu-darmstadt.de}{holz},
    \href{mailto:moellering@encrypto.cs.tu-darmstadt.de}{moellering}, 
    \href{mailto:guenther@encrypto.cs.tu-darmstadt.de}{schneider}\}@encrypto.cs.tu-darmstadt.de,
    \href{mailto:benjamin.judkewitz@charite.de}{benjamin.judkewitz@charite.de},
    \href{mailto:benny@pinkas.net}{benny@pinkas.net},
    \href{mailto:ajith.suresh@tii.ae}{ajith.suresh@tii.ae}}\fi  
}

\maketitle

\begin{abstract}
\makeparafit
The latest pandemic \covid{} brought governments worldwide to use various containment measures to control its spread, such as contact tracing, social distance regulations, and curfews. Epidemiological simulations are commonly used to assess the impact of those policies before they are implemented. Unfortunately, the scarcity of relevant empirical data, specifically detailed social contact graphs, hampered their predictive accuracy. As this data is inherently privacy-critical, a method is urgently needed to perform powerful epidemiological simulations on real-world contact graphs without disclosing any sensitive~information.

\makeparafit
In this work, we present RIPPLE, a privacy-preserving epidemiological modeling framework enabling standard models for infectious disease on a population's real contact graph while keeping all contact information locally on the participants' devices. As a building block of independent interest, we present PIR-SUM, a novel extension to private information retrieval for secure download of element sums from a database. Our protocols are supported by a proof-of-concept implementation, demonstrating a 2-week simulation over half a million participants completed in 7 minutes, with each participant communicating less than 50 KB.
\end{abstract}

\begin{IEEEkeywords}
	epidemiological modeling, private information retrieval, secure multi-party computation
\end{IEEEkeywords}
\section{Introduction}
\label{sec:intro}
The \covid{} pandemic has profoundly impacted daily life, leading to heightened mental illness and domestic abuse cases~\iffullversion\cite{vindegaard2020covid,maison2021challenges,taub2020new}\else\cite{maison2021challenges}\fi.
Governments globally have taken steps, such as lockdowns and institution closures, to curb the virus while supporting the economy. Despite these measures, infections surged, and many lives were lost. Afterwards, new infectious diseases like monkeypox have spread, resulting in quarantines in Europe~\iffullversion\cite{caulcutt,cooney2022,ecdpc2022}\else\cite{cooney2022}\fi.

In the context of \covid, contact tracing apps were used all over the world to notify contacts of potential infections~\cite{reichert2020privacy,ahmed2020survey,TroncosoPHSLLSP20,vaudenay2020,trieu2020epione, pinkas2021hashomer,LueksGVBSPT21,hogan2021contact}. 
Unfortunately, there is a fundamental limitation to contact tracing: It only notifies contacts of an infected person \emph{after} the infection has been detected, i.e., typically after a person develops symptoms, is tested, receives the test result, and can connect with contacts~\cite{tupper2021,lewis2020}.
Tupper et al.~\cite{tupper2021} report that in British Columbia in April 2021, this process ideally took five days, reducing new cases by only 8\% compared to not using contact tracing. They conclude that contact tracing must be supplemented with multiple additional containment measures to control disease spread effectively. 

To overcome these inherent limitations of contract tracing, we consider epidemiological modeling, which allows predicting the spread of an infectious disease in the \emph{future}, as done in~\cite{giordano2020modelling,kucharski2020effectiveness,cooper2020sir,silva2020covid,shinde2020forecasting,ThompsonEpid20,SEIRD21}. 
It allows to assess the effectiveness of containment measures by mathematically modeling their impact on the spread. As a result, it can be an extremely valuable tool for governments to select effective containment measures~\cite{ThompsonEpid20}. For example, Davis et al.~\cite{davies2020effects} predicted in early 2020 that \covid{} would infect $85$\% of the British population without any containment measures in place, causing a massive overload of the health system ($13$-$80\times$ the capacity of intensive care units). 
Their forecast also indicated that short-term interventions, such as school closures and social distancing, would not effectively reduce the number of cases. As a result, the British government decided to implement a lockdown in March 2020, effectively reducing transmissions and stabilizing the health system~\cite{ThompsonEpid20}.

Access to detailed information about a population's size, density, transportation, and health care system enables accurate epidemiological modeling to forecast disease transmission in various scenarios~\cite{adam2020special}. Precise, up-to-date data on movements and physical interactions is crucial for forecasting transmission and assessing the impact of control measures before implementation~\cite{klepac2020contacts}. In practice, these simulations can quickly model the spread of a disease, project the number of infections based on specific actions, and predict how the disease might spread to specific areas.

However, data on personal encounters is scarce, limiting accurate assessments of containment measures' impacts~\iffullversion\cite{adam2020special,ferguson05,klepac2020contacts}\else\cite{adam2020special}\fi. This scarcity arises because so far encounter data is often obtained through surveys, which fail to capture the reality of random encounters in public places~\iffullversion\cite{klepac2020contacts,edmunds1997mixes}\else\cite{klepac2020contacts}\fi. Additionally, social interaction patterns can change rapidly, as seen with social distancing measures, making collected data quickly outdated. Therefore, existing data cannot realistically simulate person-to-person social contact graphs. Ideally, epidemiologists need a complete physical interaction graph of the population, but strict privacy regulations make accurate tracking of interpersonal contacts unacceptable.

To address the issue of preserving privacy while obtaining up-to-date contact data, we present \ourname, a practical framework for epidemiological modeling that allows precise disease spread simulations using current contact information, incorporating control measures without leaking information about individuals' contacts. \ourname{} provides a privacy-preserving method for collecting real-time physical encounters and can compute arbitrary compartment-based epidemiological models\footnote{The implementation of concrete simulation functions is outside the scope of this work and referred to medical experts. More details on epidemiological modeling are given in~\sect{sec:prelims}.} on the latest contact graph of the previous days. \ourname{} is not only applicable to \covid, but to \emph{any} infectious diseases. We expect that \ourname's strong privacy guarantees will encourage wider participation, enabling epidemiologists to conduct more accurate simulations and develop effective containment strategies. 

Additionally, given the success and public acceptance of contact-tracing apps during \covid, users are familiar with the value of digital tools in reducing disease spread~\cite{JMIR:BHHCGS23,EJPH:CovidReview}. \ourname{} builds on this trust by offering a privacy-preserving solution that can be seamlessly integrated into existing contact-tracing apps, thus expanding beyond individual contact tracing to support community-wide epidemiological modeling. Similar to traditional contact-tracing apps, \ourname{} involves only a minimal exchange of two tokens when two participants encounter each other. To minimize any impact on device performance, we propose that the final simulation of disease spread is executed at night while most participants sleep and their phones are charging. This timing ensures that \ourname{} runs unobtrusively on participants’ mobile devices and preserves a smooth user experience. Our work mainly focuses on providing an initial cryptographic solution for privacy-preserving epidemiological modeling and leaves challenges for practical deployment as future work.

\medskip
\subsubsection*{Our Contributions}
This paper introduces \ourname~(cf. \figref{overview-framework}), a framework for expanding the scope of privacy research from contact tracing to epidemiological modeling. While the former only warns about potential infections in the past, epidemiological modeling can predict the spread of infectious diseases in the future. Anticipating the effects of various control measures allows for the development of informed epidemic containment strategies and political interventions before their implementation.

\begin{figure}[htb!]
	\centering
	\includegraphics[width=1\columnwidth]{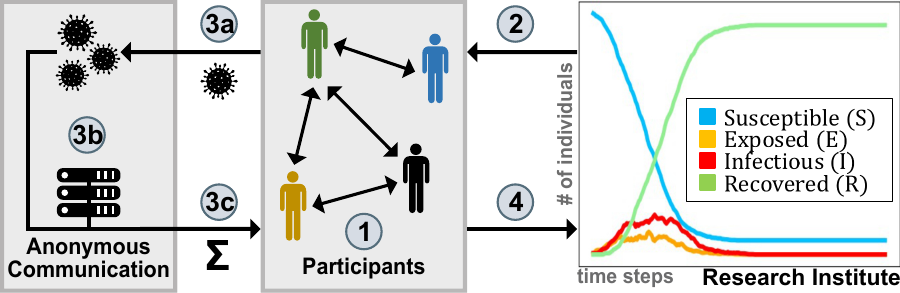}
	{\small 
        \renewcommand*{\arraystretch}{0.1}
		\begin{tabular}{p{0.96\linewidth}}
            \\[3pt]
			\rowcolor{LightGray}
			\bfcircled{1}~Mobile apps collect anonymous encounter tokens during interactions.
			\bfcircled{2}~Research Institute begins the simulation by providing initialization parameters.
			\bfcircled{3{\footnotesize a}}~Participants securely upload infection likelihood to servers.
			\bfcircled{3{\footnotesize b}}~Servers securely compute cumulative infection likelihood per participant.
			\bfcircled{3{\footnotesize c}}~Participants retrieve their cumulative infection likelihood.
			\bfcircled{4}~The aggregate results (\#S,\#E,\#I,\#R) are sent to the Research Institute.\\
	\end{tabular}}
	\caption{Overview of \ourname{} Framework.}%
	\label{fig:overview-framework}
\end{figure}

\makeparafit
\ourname{} uses a fully decentralised system similar to the federated learning paradigm~\cite{mcmahan2017} to achieve high acceptance and trust in the system and to motivate many participants to join the system to generate representative contact information. All participant data, such as encounter location, time, and distance, are kept locally on the participants' devices. Participants in \ourname{} communicate through anonymous communication channels enabled by a group of \emph{semi-honest} central servers.

\ourname{} is instantiated with two methods for achieving privacy-preserving epidemiological modeling, each covering a different use case. The first is \teename, which assumes each participant's mobile device has a Trusted Execution Environment~(TEE).  
The second method is \pirname, which eliminates this assumption by utilizing cryptographic primitives such as Private Information Retrieval~(PIR). Along the way, we develop a multi-server PIR extension that allows a client to retrieve the sum of a set of elements (in our case, infection likelihoods) from a database without learning individual entries.

We assess the practicality of our methods by benchmarking core building blocks using a proof of concept implementation. Our findings indicate that, with adequate hardware, both protocols can scale up to millions of participants. For instance, a simulation of 14 days with 1 million participants can be completed in less than half an hour.

Our contributions are summarized as follows:
\begin{enumerate}[wide]
    \item We present \ourname, the \emph{first} privacy-preserving framework for epidemiological modeling on contact information stored on mobile devices.
    \item \ourname{} formalizes the notion of \emph{privacy-preserving} epidemiological modeling and defines privacy requirements in the presence of both semi-honest and malicious participants.
    \iffullversion
    \item We present two techniques -- \teename{} and \pirname{} -- that combine anonymous communication techniques with either TEEs or PIR and anonymous credentials.
    \else
    \item We present \pirname{}, an instantiation of \ourname{} that combines anonymous communication techniques with PIR and anonymous credentials.\footnote{A TEE-based technique is presented in the full version~\cite{ARXIV:RIPPLE}.} 
    \fi
    \item We propose PIR-SUM, an extension to existing PIR schemes, that allows a client to download the sum of $\threshold$ distinct database entries without learning the values of individual entries or revealing which entries were requested.
    \item We demonstrate the practicality of our framework by providing an open source implementation and a detailed performance evaluation of \ourname{}.
\end{enumerate}
\section{Related Work \iffullversion \& Background Information \fi}
\label{sec:prelims}
\iffullversion

This section discusses related works addressing privacy challenges in the context of infectious diseases as well as necessary background information on contact tracing and epidemiological modeling (including a clarification of the differences between the two). An overview of the (cryptographic) primitives and other techniques used in this work is presented in \sect{app:related-primitives}.

\subsection{Cryptography-based Solutions in the Context of Infectious Diseases}
CrowdNotifier~\cite{LueksGVBSPT21} notifies visitors of (large) events about an infection risk when another visitor reported SARS-CoV-2 positive after the event, even if they have not been in close proximity of less than 2 meters. To protect user privacy, it follows a distributed approach where location and time information is stored encrypted on the user's device. Bampoulidis et al.~\cite{BHKRW20} introduce a privacy-preserving two-party set intersection protocol that detects infection hotspots by intersecting infected patients, input by a health institute, with customer data from mobile network operators. 
 
CoVault~\cite{DeViti2022} is a data analytics platform based on secure multi-party computation techniques (MPC) and trusted execution environments. The authors discuss the usage of CoVault for storing location and timing information of people usable by epidemiologists to analyse (unique) encounter frequencies or linkages among two disease outbreak clusters while preserving privacy.

Al-Turjman and David Deebak~\cite{Al-TurjmanD20} integrate privacy-protecting health monitoring into a Medical Things device that monitors the health status~(heart rate, oxygen saturation, temperature, etc.) of users in quarantine with moderate symptoms. Only in the event of an emergency is medical personnel notified. Pezzutto et al.~\cite{PezzuttoRSG21} optimize the distribution of a limited set of tests to identify as many positive cases as possible, which are then isolated. Their system can be deployed in a decentralized, privacy-preserving environment to identify individuals who are at high risk of infection. Barocchi et al.~\cite{BarsocchiCCDFGM21} develop a privacy-preserving architecture for indoor social distancing based on a privacy-preserving access control system. When users visit public facilities (e.g., a supermarket or an airport), their mobile devices display a route recommendation for the building that maximizes the distance to other people. Bozdemir et al.~\cite{bozdemir2021privacy} suggest privacy-preserving trajectory clustering to identify typical movements of people and detect forbidden gatherings when contact restrictions are in place.

\paragraph{Contact Tracing}\label{sec:contacttracing} A plethora of contact tracing systems has been introduced and deployed since the outbreak of the pandemic~\cite{reichert2020privacy,ahmed2020survey,ciucci2020national}. They either use people's location (GPS or telecommunication provider information) or measure proximity (via Bluetooth LE). Most systems can be categorized into centralized and decentralized designs~\cite{vaudenay2020}. In a centralized contact tracing system (e.g., \cite{singapur,robert}), computations such as the generation of the tokens exchanged during physical encounters are done by a central party. This central party may also store some contact information depending on the concrete system design. In contrast, in decentralized approaches (e.g., \cite{TroncosoPHSLLSP20,pinkas2021hashomer,chan2020}), computation and encounter information remain (almost completely) locally at the participants' devices. 

Contact tracing focuses on determining contacts of infected people in the past. In contrast, epidemiological modeling, which we consider in this work, forecasts the spread of infectious diseases in the future. Thus, epidemiological modeling goes \emph{beyond} established contact tracing systems. They share some technical similarities (specifically, the exchange of encounter tokens), but on top of anomalously recording the contact graph, simulations have to be run on it. Similarly, presence tracing and hotspot detection are concerned with ``flattening the curve'' in relation to infections in the past. In contrast, epidemiological modeling is a tool for decision-makers to evaluate the efficacy of containment measures like social distancing in the future, allowing them to ``get ahead of the wave''.

\subsection{Epidemiological Modeling}
\else
This section discusses related works on disease modeling and privacy-preserving epidemiological modeling.
\fi
\label{sec:epimodel} 

\paragraph{Disease Modeling} \makeparafit  There are various ways to model a disease mathematically\iffullversion\cite{brauer2008compartmental,brauer2019simple,harko2014exact,Schlickeiser2021,fernandez2022estimating,gray2011stochastic,SEIRD21,he2020seir}\else\cite{brauer2008compartmental,he2020seir,TBD:GMPS21,IDM:AAY21}\fi. 
Popular compartment models use a few continuous variables linked by differential equations to capture disease spread. In the SEIR model~\ci{kermack1991,he2020seir}, individuals are assigned to four compartments: susceptible (S), exposed (E), infectious (I), and recovered (R). These models are useful for understanding macroscopic trends and are widely used in epidemiological research~\cite{small2005,chen2020}. However, they condense complex individual behaviors into a few variables, limiting predictive power~\cite{may2021}.
In contrast, agent-based epidemiological models~\cite{ferguson2006strategies} simulate the spread by initializing numerous agents with individual properties (e.g., location, age) and interaction rules. This allows for more realistic disease transmission simulations by modeling individual behaviors. Combining agent-based models with compartment models enhances the realism and accuracy of disease forecasting. Simulations with varying parameters, like interaction reductions or targeted vaccinations, are run to predict the effects of different policy interventions.

\makeparafit
A key challenge is modeling agents' contact behaviors. Older models used survey-based contact matrices to estimate average contacts within age ranges~\cite{klepac2020contacts}, which improved over uniform assumptions but still fell short. Aggregated network statistics can't replicate real network dynamics, including super-spreaders with numerous contacts~\cite{Kupferschmidt2020}. Ideally, epidemiologists would like to use real-world contact graphs of all individuals, but this is often challenging due to privacy concerns.

\paragraph{Contact Tracing for Privacy-Preserving Epidemiological Modeling} 
If contact information collected through contact tracing apps was centralized, an up-to-date full contact graph could be constructed for epidemiological simulations~\cite{TBD:GMPS21,IDM:AAY21}. However, contact information is highly sensitive and should not be shared. Contact information collected via mobile phones can reveal who, when, and whom people meet, which is sensitive and must be protected. Beyond, such information also enables to derive indications about the financial situation~\cite{luo2017inferring} and personality~\cite{montjoye2013predicting}. One can think about many more examples: By knowing which medical experts are visited by a person, information about the health condition can be anticipated; contact with members of a religious minority as well as visits to places related to religion might reveal a religious orientation, etc. Thus, it would be ideal for enabling precise epidemiological simulations without leaking individual contact information.

\makeparafit
One way to achieve privacy-preserving epidemiological modeling using contact tracing apps is by allowing each participant's device to share its contact information secretly with a set of non-colluding servers. These servers can then run simulations using secure multi-party computation (MPC). Araki et al.~\cite{Araki0OPRT21} demonstrated efficiently running graph algorithms on secret shared graphs via MPC. However, despite the common non-collusion assumption in the crypto community, public trust issues may arise if all contact information is disclosed once servers collude. To address this, \ourname{} distributes trust by enabling participants to keep their contact information local while anonymously sending messages to each other to simulate the disease spread. Only aggregated simulation results are shared with research institutes, ensuring no direct identity or contact data is disclosed. This method resembles Federated Learning~\cite{mcmahan2017} and the contact tracing designs by Apple and Google.\footnote{https://covid19.apple.com/contacttracing} This distributed design can increase trust and facilitate broad adoption of the system.

To the best of our knowledge, \ourname{} is the first framework that allows executing any agent-based compartment model on the distributed real contact graph while maintaining privacy.
\section{The \ourname{} Framework}
\label{sec:framework}
\makeparafit
\ourname's primary goal is to enable the evaluation of the impact of multiple combinations of potential containment measures defined by epidemiologists and the government, and to find a balance between the drawbacks and benefits of those measures, rather than to deploy the measures in ``real-life" first and then analyse the impact afterwards. Such measures may include, for example, the requirement to wear face masks in public places, restrictions on the number of people allowed to congregate, the closure of specific institutions and stores, or even complete curfews and lockdowns within specific regions.

\makeparafit
Participants in \ourname{} collect personal encounter data anonymously and store it locally on their mobile devices, similar to privacy-preserving contact tracing apps. For epidemiological modeling, \ourname{} must derive a contact graph without leaking sensitive personal information to simulate disease spread over a specific period, like two weeks. Most countries have a 6-hour period at night when most people are asleep, and their mobile devices are idle, connected to WiFi, and possibly charging—an ideal time for running \ourname{} simulations. Medical experts can then analyze the results to understand the disease better, and political decision-makers can identify the most effective containment measures to implement.

To acquire representative and up-to-date physical encounter data, widespread public usage of \ourname{} would be ideal. One way to encourage this is to piggyback \ourname{} on most countries' official contact tracing applications. On the other hand, politicians can motivate residents beyond the intrinsic incentive of supporting public health by coupling the use of \ourname{} with additional benefits such as discounted or free travel passes.

\subsection{System and Threat Model}\label{sec:threatmodel}
\ourname{} comprises of $\numparticipants$ participants, denoted collectively by $\participantset$, a research institute $\ri$ who is in charge of the epidemiological simulations, and a set of MPC servers $\mpcservers{}$ responsible for anonymous communication among the participants. 
\tab{tab:parameters} summarizes the notations used in this work.


\begin{table}[htb!]
	\resizebox{\columnwidth}{!}{
        \centering
		\begin{NiceTabular}{ r c l }
			\toprule
			& ~Parameter & Description \\
			\midrule
			\Block[r]{3-1}{\rotatebox{90}{Entities}} 
			& $\participantset$   & Set of all participants; $\participantset = \{\participant{1}, \ldots, \participant{\numparticipants}\}$\\
			&$\ri$                         & Research Institute \\ 
			& $\mpcservers{}$     & Communication Servers $\{ \mpcserver{0}, \mpcserver{1}, \mpcserver{2} \}$ \\
			\midrule
			\Block[r]{8-1}{\rotatebox{90}{Simulations}} 
			&$\psim$                                & Simulation parameters defined by $\ri$\\
			&$\numberofsimulations$    & \# distinct simulations (executed in parallel)\\
			&$\simsteps$                         & \# steps per simulation\\		
			&$\infclass$                           & Infection classes; $\infclass =\{\infclassval{1},\ldots,\infclassval{\numinfclass}\}$\\
			&$\infectionclass_i^{\currstep}$          & $\participant{i}$'s infection class in simulation step $\currstep \in [0,\simsteps]$\\
			&$\encounterset{i}$              & Encounter tokens of $\participant{i}$\\
			& $\maxencounters{i}$         & \# max. encounters by $\participant{i}$ in pre-defined time interval\\
            &$\avgencounters{}$& Average number of encounters\\
			\midrule
			\Block[r]{7-1}{\rotatebox{90}{Protocols}} 
			&$\kappa$        & Computational security parameter $\kappa = 128$ \\
			&$\token_e$                          & Unique token for encounter $e \in [0,\maxencounters{}]$ \\
			&$\likelihood_i^{\token_e}$  & $\participant{i}$'s infection likelihood w.r.t token $\token_e$\\
			& $\Delta_i$    & $\participant{i}$'s cumulative infection likelihood\\
			&$\metadata_i^e$ & Metadata of an encounter $e$ by $\user_i$\\
			&$(\pkey_i, \skey_i)$ & $\participant{i}$'s public/private key pair\\
			&$\signature_i^e$. & $\participant{i}$'s signature on message about encounter $e$ \\
			\bottomrule
		\end{NiceTabular}
	}
	\vspace*{-3pt}
	\caption{Notations.}
	\label{tab:parameters}
    \vspace{-2mm}
\end{table}

We assume that the research institute and MPC servers are semi-honest~\cite{goldreich2009foundations}, meaning they follow protocol specifications correctly while attempting to gather additional information. These semi-honest MPC servers also establish an anonymous communication channel. 
\iffullversion
We discuss the security of the anonymous communication channel in more detail in~\sect{app:anon-comm-channel}.
\fi
A protocol is secure if nothing is leaked beyond what can be inferred from the output. While the semi-honest security model is not the strongest, it offers a good trade-off between privacy and efficiency, making it popular in practical privacy-preserving applications such as privacy-preserving machine learning~\cite{ASTRA,mishra2020delphi,ABY2}\iffullversion, genome/medical research~\cite{veeningen2018enabling,tkachenko2018large,schneider2019episode}, and localization services~\cite{jarvinen2019pilot,BeetsNS22}.\else ~and genome/medical research~\cite{tkachenko2018large,schneider2019episode}\fi. This model protects against passive attacks by curious administrators and accidental data leakage and often serves as a foundation for developing protocols with stronger privacy guarantees~\cite{lindell2007efficient,aumann2010security}. We consider this a reasonable assumption, as the research institute and servers will be controlled by generally trusted entities such as governments or public medical research centers, potentially collaborating with NGOs like the EFF\footnote{\url{https://www.eff.org}} or the CCC\footnote{\url{https://www.ccc.de/en/}}.

Given the widespread interest in discovering effective containment measures, we expect a high level of intrinsic motivation among participants for successful epidemiological modeling. 
However, we cannot assume that millions of potential participants are behaving honestly. Therefore, our set of participants~$\participantset$ can include clients who are malicious and may attempt to compromise the privacy of the scheme. This includes scenarios where some clients deviate from the protocol to gain unauthorized information. This stronger security notion, referred to as \emph{malicious-privacy}, has been the focus of several recent works~\cite{lehmkuhl2021,chandran2021} and provides a higher level of privacy than semi-honest corruptions.

Our primary focus is to protect against malicious clients who want to gain unauthorized information. A stronger threat involves malicious clients undermining the correctness of the protocol, thereby disrupting or compromising the accuracy of the simulation. Addressing the issue of malicious clients targeting correctness is left as future work. Such a correct behaviour could for example be enforced using trusted execution environments like ARM Trust Zone available in many smartphones.

\subsection{Phases of \ourname}
\label{sec:phases-framework}
\ourname{} is divided into four phases as shown in \fig{fig:overview-framework}: i) Token Generation, ii) Simulation Initialization, iii) Simulation Execution, and iv) Result Aggregation. While our framework can be applied to any compartment-based epidemiological modeling of any infectious disease~(cf.~\sect{sec:epimodel}), we explain \ourname{} using the prevalent Covid-19 virus and the SEIR model~\cite{kermack1991,diekmann2012} as a running example. For simplicity, we assume that an app that emulates \ourname{} is installed on each participant's mobile device and that the participants locally enter attributes such as workplace, school, regular eateries, and cafes in the app after installing the app.

\boxref{fig:sim-framework} summarises the phases of the \ourname{} framework in the context of a single simulation setting and we give details below. Multiple simulations can be executed in parallel. The concrete number of simulation runs with the same parameters or different parameters should be determined by epidemiologists. Note that simulations are run on collected data, e.g., from the last days, and not on real-time encounter information. This combines efficiency requirements with maximally up-to-date encounter information.

\iffullversion\begin{protocolbox}{\ourname{}}{\ourname{} Framework (for one simulation setting).}{fig:sim-framework}
    	\justify
    	\algoHead{\bfcircled{1} - Token Generation} 
    	\begin{itemize}
    		\item $\participant{i} \in \participantset$ executes $\funcCollection$ all the time (on its mobile device), collecting encounter data of the form $(\token_e, \metadata_e)$ with $e < \maxencounters{i}$. For an encounter~$e$, $\token_e$ is the token and $\metadata_e$ is the metadata.
    	\end{itemize}
    	\algoHead{\bfcircled{2} - Simulation Initialization}
    	\begin{itemize}
    		\item $\participant{i} \in \participantset$ \iffullversion receives \else gets \fi $\psim$ from $\ri$ and \iffullversion locally\fi sets $\infectionclass_i^1 = \initialinfectionclass_i$.
    	\end{itemize}
    	\algoHead{\bfcircled{3} - Simulation Execution} 
    	\newline
    	For each simulation step ${\currstep} \in [\simsteps]$, $\participant{i} \in \participantset$ execute:
    	\begin{itemize}
    		\item Filter encounters using $\psim$ to obtain encounter set $\encounterset{i}^{\currstep}$.
    		\item For each token $\token_e \in \encounterset{i}^{\currstep}$, compute the infection likelihood $\likelihood_i^{\token_e}$ locally using the formula from $\ri$.
    		\item Invoke $\funcSimulation$ with the input $\{\likelihood_i^{\token_e}\}_{\token_e \in \encounterset{i}^{\currstep}}$ and obtain $\clikelihood_i^{\currstep} = \sum\limits_{\token_e \in \encounterset{i}^{\currstep}} \rlikelihood_i^{\token_e}$.
    		\item Update the infection class $\infectionclass_i^{\currstep}$ using $\clikelihood_i^{\currstep}$ and guidelines from~$\ri$.
    	\end{itemize}
    	\algoHead{\bfcircled{4} - Result Aggregation}
    	\newline
    	For each simulation step ${\currstep} \in [\simsteps]$, execute:
    	\begin{itemize}
    		\item $\participant{i} \in \participantset$ prepares $\{\simclassval{1}{i},\ldots,\simclassval{\numinfclass}{i}\}^{\currstep}$ with $\simclassval{k}{i} = 1$ if $\infectionclass_i^{\currstep} = \infclassval{k}$ and $\simclassval{k}{i}=0$ otherwise, for $k \in [\numinfclass]$. 
    		\item Invoke $\funcAggregation$ with inputs $\{\simclassval{1}{i},\ldots,\simclassval{\numinfclass}{i}\}^{\currstep}$ to enable $\ri$ obtain the tuple $\{\aggclassval{1},\ldots,\aggclassval{\numinfclass}\}^{\currstep}$, where $\aggclassval{k} = \sum\limits_{\participant{i} \in \participantset} \simclassval{k}{i}$ for $k \in [\numinfclass]$. 
    	\end{itemize}
    \end{protocolbox}\fi

\begin{tifs_enumerate}
   \item[{\bf \bfcircled{1} - Token Generation:}] During a physical encounter, participants exchange data via Bluetooth LE to collect anonymous encounter information\iffullversion ~(\fig{fig:token-collection})\fi, similar to contact tracing~\cite{pinkas2021hashomer,hatke2020using}. These tokens are stored locally on the users' devices and do not reveal any sensitive information (i.e., identifying information) about the individuals involved. In addition to these tokens, the underlying application will collect additional information on the context of the encounter, known as ``metadata" for simulation purposes. This varies depending on the underlying instantiation of the protocol and can include details such as duration, proximity, time, and location. The metadata can include or exclude different encounters in the simulation phase, allowing the effect of containment measures to be modeled (e.g., restaurant closings by excluding all encounters that happened in restaurants). The token generation phase is not dependent on the simulation phase, so no simulation-dependent infection data is exchanged. The token generation phase is modelled as an ideal functionality $\funcCollection$ that will be instantiated later in \sect{sec:instan}.

\iffullversion
    \begin{figure}[htb!]
        \centering
        \begin{subfigure}{0.45\textwidth}
          \centering
          \includegraphics[width=0.5\linewidth]{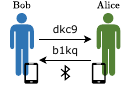}
          \caption{Token Generation}
          \label{fig:token-collection}
        \end{subfigure}
        \begin{subfigure}{0.45\textwidth}
          \centering
          {{\includegraphics[width=0.6\linewidth]{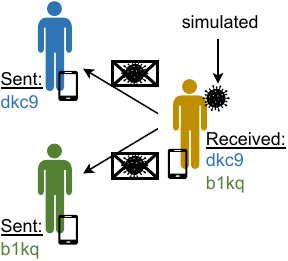}}}
          \caption{Simulation}
          \label{fig:simulation-execution}
        \end{subfigure}
        \caption{Token Generation and Simulation phases in \ourname{}.}\label{fig:test}
    \end{figure}

    \fi

\iffullversion
   \emph{Running Example:} Assume that a participant, Alice, takes the bus to pick up her daughter from school. There are several other people on this bus -- for simplicity, we call them Bob\textsubscript{1}$,\ldots,$ Bob\textsubscript{x}. As part of the token generation phase, Alice's phone exchanges unique anonymous tokens with the devices of the different Bobs. Now, two weeks later, it is night, and the national research institute ($\ri$) wants to run a simulation covering 14 days to see how closing all schools would affect the spread of the disease. To accomplish this, the $\ri$ notifies all registered participants' applications to run a simulation using encounter data from the previous two weeks.
   \fi

    \iffullversion\else\fi
   
   \item[{\bf \bfcircled{2} - Simulation Initialization:}] The research institute $\ri$ initiates the simulation phase by sending a set of parameters, denoted by $\psim$, to the participants in $\participantset$. The goal is to ``spread" a fictitious infection across $\numberofsimulations$ different simulation settings. To begin a simulation, each participant $\participant{i}$ is assigned to an infection class $\initialinfectionclass_i \in \infclass$ (e.g., \{S\}usceptible, \{E\}xposed, \{I\}nfectious, \{R\}ecovered for the SEIR model) as specified in $\psim$. For each individual simulation, $\psim$ defines a set of containment measures, such as school closings and work from home, which the participants will use as filters to carry out the simulation in the next stage. In addition, $\ri$ publishes a formula to calculate the infection likelihood $\likelihood$. The likelihood is determined by several parameters in the underlying modeling, such as encounter distance and time. For example, this likelihood might range from 0~(no chance of infection) to 100~(certain to get infected).

\iffullversion
   \emph{Running Example:} Assume Alice is designated as infectious, while Bob\textsubscript{1} is designated as susceptible by $\ri$. The other participants Bob\textsubscript{2}$,\ldots,$ Bob\textsubscript{x} are also assigned to an infection class (S, E, I, or R). To simulate containment measures, the \ourname-app now employs filters defined in $\psim$. Using the information provided by the participants\footnote{This may also include location data obtained from the mobile app., e.g., Check In and Journal fields in the Corona-Warn contact tracing app.}, the application may automatically filter out encounters that would not happen if a containment measure were in place, such as encounters in school while simulating school closings. 
   \fi
   \item[{\bf \bfcircled{3} - Simulation Execution:}] Once $\ri$ initialises the simulation, $\simsteps$~simulation steps~(steps~\bfcircled{3{\footnotesize a}}, \bfcircled{3{\footnotesize b}}, \bfcircled{3{\footnotesize c}} in~\fig{fig:overview-framework}) are performed for each of the~$\numberofsimulations$ simulation settings~(e.g.,~$\simsteps=14$~days). Without loss of generality, consider the first simulation step and let $\numberofsimulations = 1$. The simulation proceeds as follows:
    \begin{tifs_innerenumerate}
    	\item[1)] Participant $\participant{i} \in \participantset$ filters out the relevant encounters based on the containment measures defined by $\ri$. Let the set $\encounterset{i}$ represent the corresponding encounter tokens.
    	\item[2)] For each token $\token_e \in \encounterset{i}$, $\participant{i}$ computes the infection likelihood~$\likelihood_i^{\token_e}$ using the formula from $\ri$, i.e., the probability that~$\participant{i}$ infects the participant met during the encounter with identifier token~$\token_e$. 
    	\item[3)] Participants use the likelihood values $\likelihood$ obtained in the previous step to execute an ideal functionality called $\funcSimulation$, which allows them to communicate the $\likelihood$ values anonymously through a set of MPC servers $\mpcservers$. Furthermore, it allows each participant $\participant{j}$ to receive a cumulative infection likelihood, denoted by $\clikelihood_j$, based on all of the encounters they had on the day being simulated, i.e., $\clikelihood_j = \sum_{\token_e \in \encounterset{j}} \rlikelihood_j^{\token_e}$.  In this case, $\rlikelihood_j^{\token_e}$ denotes the infection likelihood computed by participant $\participant{f}$ and communicated to $\participant{j}$ for an encounter between $\participant{f}$ and $\participant{j}$ with identifier token~${\token_e}$. As will be discussed later in \sect{sec:privacy-framework}, $\funcSimulation$ must output the cumulative result rather than individual infection likelihoods because the latter can result in a breach of privacy. 
    	\item[4)] \makeparafit Following the guidelines set by the $\ri$, $\participant{j}$ updates its infection class $\infectionclass_j$ using the cumulative infection likelihood $\clikelihood_j$ acquired in the previous step.
        \end{tifs_innerenumerate}
    These steps above are repeated for each of the $\simsteps$ simulation steps in order and across all the $\numberofsimulations$ simulation settings.

\iffullversion
    \emph{Running Example:} Let the simulated containment measure be the closure of schools. As Alice is simulated to be infectious, Alice's phone computes the infection likelihood for every single encounter it recorded on the day exactly two weeks ago (Day 1) \emph{except} those that occurred at her daughter's school. Then, Alice's phone combines the computed likelihood of each encounter with the corresponding unique encounter token to form tuples, which are then sent to the servers instantiating the anonymous communication channel. Using the encounter token as an address, the servers anonymously forward the likelihood to the person Alice has met, for example, Bob\textsubscript{1} (cf.~\fig{fig:simulation-execution}). Likewise, Bob\textsubscript{1} receives one message from each of the other participants he encountered and obtains the corresponding likelihood information. Bob\textsubscript{1} aggregates all likelihoods he obtained from his encounters on Day~1 and checks the aggregated result to a threshold defined by the $\ri$ to see if he has been infected in the simulation\footnote{Bob\textsubscript{1} obtains the aggregated likelihood in the actual protocol.}. 
    \fi
   \item[{\bf \bfcircled{4} - Result Aggregation:}] For a given simulation setting, each participant $\participant{i} \in \participantset$ will have its infection class $\infectionclass_i^{\currstep}$ updated at the end of every simulation step ${\currstep} \in [\simsteps]$. This phase allows $\ri$ to obtain each simulated time step's aggregated number of participants per class (e.g., \#S, \#E, \#I, \#R). For this, we rely on a \emph{Secure Aggregation} functionality, denoted by $\funcAggregation$, which takes a $\numinfclass$-tuple of the form $\{\simclassval{1}{i},\ldots,\simclassval{\numinfclass}{i}\}^{\currstep}$ from each participant for every simulation step~${\currstep}$ and outputs the aggregate of this tuple over all the $\numparticipants$ participants to $\ri$. In this case, $\simclassval{k}{i}$ is an indicator variable for the $k$-th infection class, which is set to one if $\infectionclass_i^{\currstep} = \infclassval{k}$ and zero otherwise. Secure aggregation~\cite{erkin2013,kursawe2011,li2010,li2010} is a common problem in cryptography these days, particularly in the context of federated learning, and there are numerous solutions proposed for various settings, such as using TEEs, a semi-trusted server aggregating ciphertexts under homomorphic encryption, or multiple non-colluding servers that aggregate secret shares. In this work, we consider $\funcAggregation$ a black box that can be instantiated using existing solutions compatible with our framework.

\iffullversion
   \emph{Running Example:} All participants will know their updated infection class at the end of Day 1's simulation round, and they will prepare a 4-tuple of the form $\{\simclassval{S}{}, \simclassval{E}{}, \simclassval{I}{}, \simclassval{R}{} \}$ representing their updated infection class in the SEIR model. Participants will then engage in a secure aggregation protocol that determines the number of participants assigned to each infection class, which is then delivered to the $\ri$. Then, the second simulation round begins, replicating the procedure but using encounters from 13 days ago, i.e., Day~2. The $\ri$ holds the aggregated number of participants per day per class after simulating all 14 days, i.e., a simulation of how the disease would spread if all schools had been closed in the previous 14 days (cf. graph in~\fig{fig:overview-framework}).
   \fi
\end{tifs_enumerate}

\subsection{Privacy Requirements}
\label{sec:privacy-framework}
\makeparafit
A private contact graph necessitates that participants remain unaware of any unconscious interactions. This means they cannot determine if they had unconscious contact with the same person more than once or how often they did, which is considered a privacy issue, exemplified in the next paragraph. We remark that an insecure variant of \ourname, in which each participant $\participant{i}$ receives the infection likelihood $\rlikelihood_i^e$ for all its encounters $e \in \encounterlist_i$ separately, will not meet this condition.

\paragraph{\bf Linking Identities Attacks} To demonstrate this, observe that when running multiple simulations (with different simulation parameters $\psim$) on the same day, participants will use the same encounter tokens and metadata from the token generation phase in each simulation. If a participant $\participant{i}$ (Alice) can see the infection likelihood $\rlikelihood_i$ of each of her interactions, $\participant{i}$ can look for correlations between those likelihoods to see if another participant $\participant{j}$ (Bob) was encountered more than once. We call this a \emph{Linking Identities Attack}~and depict it in \fig{fig:linking-identities}, where, for simplicity, the infection likelihood accepts just two values: 1 for high and 0 for low infection likelihood.

\setlength{\columnsep}{8pt}%
\setlength{\intextsep}{8pt}%
\begin{wrapfigure}{r}{0.26\textwidth}
	\centering
	\includegraphics[clip, trim=0pt 0pt 0pt 0pt, width=0.25\textwidth]{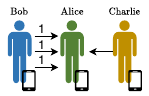}
	\captionsetup{font={small}}
	\caption{Linking Identities Attack. Alice and Bob had several encounters, but Alice and Charlie only had one.}%
	\label{fig:linking-identities}
\end{wrapfigure}

Consider the following scenario to help clarify the issue: Alice and Bob work together in the same office. As a result, they have numerous conscious encounters during working hours. However, in their spare time, they may be unaware that they are in the same location (e.g., a club) and may not want the other to know. Their phones constantly collect encounters even if they do not see each other. Assume the research institute~($\ri$) sent the participants a simple infection likelihood formula that returns 0 (not infected) or 1 (infected). Furthermore, since the data is symmetric, Alice and Bob have the same metadata (duration, distance, etc.) about their conscious and unconscious encounters. 
Let Bob be modelled as infectious in the first simulation. As a result, he will send a 1 for each (conscious and unconscious) encounter he had (including those with Alice). If multiple simulations are run on the same day (i.e., with the same encounters), Alice will notice that some encounters, specifically all conscious and unconscious encounters with Bob, always have the same infection likelihood: If Bob is not infectious, all will return a 0; if Bob is infectious, all will return a 1. Thus, even if Alice had unconscious encounters with Bob, she can detect the correlations between the encounters and, as a result, determine which unconscious encounters were most likely with Bob.

\makeparafit
The more simulations she runs, the more confident she becomes. Since every participant knows the formula, this attack can also be extended to complex infection likelihood functions. While it may be more computationally expensive than the simple case, Alice can still identify correlations. This attack works even if all of the encounters are unconscious. In such situations, Alice may be unable to trace related encounters to a single person (Bob), but she can infer that they were all with the same person (which is more than learning nothing).

\makeparafit
To avoid a Linking Identities attack, \ourname{} ensures that each participant receives an aggregation of all infection likelihoods of their encounters during the simulation. It cannot be avoided that participants understand that when ``getting infected'' some of their contacts must have been in contact with a (simulated) infectious participant. As this is only a simulated infection, we consider this leakage acceptable.

In summary, the Linking Identities attack demonstrates that repeated simulations with identical metadata (tokens) enable participants to infer connections between conscious and unconscious encounters. In \ourname, participants never learn individual infection likelihoods, but only an aggregated infection likelihood, which highly decreases the practicality of this attack.

\iffullversion\else
\setlength{\columnsep}{8pt}%
\setlength{\intextsep}{8pt}%
\begin{wrapfigure}{r}{0.28\textwidth}
	\centering
	\includegraphics[clip, trim=0pt 0pt 0pt 0pt, width=0.24\textwidth]{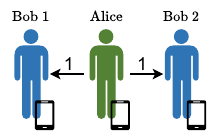}
	\captionsetup{font={small}}
	\caption{Sybil Attack.}%
	\label{fig:sybil-attack}
\end{wrapfigure}
\fi
\paragraph{\bf Sybil Attack} While the Linking Identities Attack is already significant in the semi-honest security model, malicious participants may further circumvent aggregation mechanisms that prevent access to individual infection likelihoods. They could, for example, construct many \emph{sybils}, i.e., multiple identities using several mobile devices, to collect each encounter one by one and then conduct a Linking Identities Attack with the information.

\iffullversion
\setlength{\columnsep}{8pt}%
\setlength{\intextsep}{8pt}%
\begin{wrapfigure}{r}{0.28\textwidth}
	\centering
	\includegraphics[clip, trim=0pt 0pt 0pt 0pt, width=0.24\textwidth]{fig/figure6.pdf}
	\captionsetup{font={small}}
	\caption{Sybil Attack.}%
	\label{fig:sybil-attack}
\end{wrapfigure}
\fi

Apart from the privacy threat, the sybil attack can affect the accuracy of the simulation as potentially more devices than people in one location will participate in the simulation, which does not correspond to reality. If a malicious participant who exploits a sybil attack is simulated as infectious, their visited locations will tend to spread the simulated infection more significantly and thus classify the risk of infection as higher than usual.

A registration system can be used to increase the costs of performing sybil attacks, i.e., to prevent an adversary from creating many identities. This assures that only legitimate users can join and participate in the simulation. In a closed ecosystem, such as a company, this can be achieved by letting each member receive exactly one token to participate in the simulation. On a larger scale at the national level, one can let each citizen receive a token linked to a digital ID card. In such authentication mechanisms, anonymous credentials \iffullversion (cf.~\sect{app:related-primitives})\fi can be used to ensure anonymity, and we leave the problem for future work, as our initial work on privacy-preserving epidemiological modeling focuses on malicious clients targeting privacy, but not yet correctness~(cf.~\sect{sec:threatmodel}).

To summarize, from a privacy perspective, Sybil attacks enable a Linking Identities Attack by allowing malicious participants to create multiple identities. Furthermore, they have an impact on the accuracy of the simulation.

\paragraph{\bf Inference Attacks} 
\makeparafit
Note that although \ourname{} mimics the spirit of Federated Learning (FL)~\cite{mcmahan2017}, it is not susceptible to so-called inference attacks~\cite{nasr2019comprehensive,fereidooni2021safelearn} in the same sense as FL. First, \ourname{} only reveals the final output (to a research institute $\ri$) and no individual updates/results that ease information extraction. We, however, note that the analysis results provided to $\ri$ (cf.~\sect{sec:threatmodel}) contain information about the spread of the modeled disease in a specific population (otherwise, it would be meaningless to run the simulation). The ideal functionality does not cover leakage from the final output but protects privacy during the computation. Thus, our security model does not consider anything that might be inferred from the output. We also argue that it is in the public interest to provide such aggregated information to the $\ri$ for deciding upon effective containment measures against infectious diseases.

In summary, while \ourname{} mitigates inference attacks by restricting individual data leakage, the final aggregated output may reveal trends relevant to public health, balancing privacy with the societal need for actionable insights.
\section{Instantiating $\funcSimulation$}
\label{sec:instan}
We propose two instantiations of $\funcSimulation$ that cover different use cases and offer different trust-efficiency trade-offs. Our first design, \teename{}, is presented in the full version~\cite[\S4.1]{ARXIV:RIPPLE} and assumes the presence of trusted execution environments (TEEs) such as ARM TrustZone on the mobile devices of the participants. In our second design, \pirname{}~(\sect{sec:pirname}), we eliminate this assumption and provide privacy guarantees using cryptographic techniques such as PIR and anonymous communications.

\iffullversion
\subsection{\teename}
\label{sec:teename}
The deployment of the entire operation in a single designated TEE would be a simple solution to achieving the ideal functionality $\funcSimulation$. However, given the massive amount of data that must be handled in a large-scale simulation with potentially millions of users, TEE resource limitations are a prohibitive factor. Furthermore, since the TEE would contain the entire population's contact graph, it would be a single point of failure and an appealing target for an attack on TEE's known vulnerabilities. \teename~(\fig{fig:tee_name}), on the other hand, leverages the presence of TEEs in participants' mobile devices but in a decentralised manner, ensuring that each TEE handles only information related to the encounters made by the respective participant.

Before going into the details of \teename, we will go over the $\funcAnon$ functionality (cf.~\sect{app:anon-comm-channel}), which allows two participants, $\participant{i}$ and $\participant{j}$, to send messages to each other anonymously via a set of communication servers $\mpcservers{}$. The set $\mpcservers{}$ consists of one server acting as an entry node ($\entrynode$), receiving messages from senders, and one server acting as an exit node ($\exitnode$), forwarding messages to receivers. In $\funcAnon$, sender $\participant{i}$ does not learn to whom the message is sent, and receiver $\participant{j}$ does not learn who sent it. Similarly, the servers in $\mpcservers{}$ will be unable to relate receiver and sender of a message. Anonymous communication (cf.~\sect{sec:anonycom}) is an active research area, e.g.,~\cite{alexopoulos2017mcmix,haines2020,eskandarian2021clarion,abraham2020}, and $\funcAnon$ in \teename{} can be instantiated using any of these efficient techniques.

\smallskip
\begin{figure}[htb!]
	\centering
	\includegraphics[width=0.95\columnwidth]{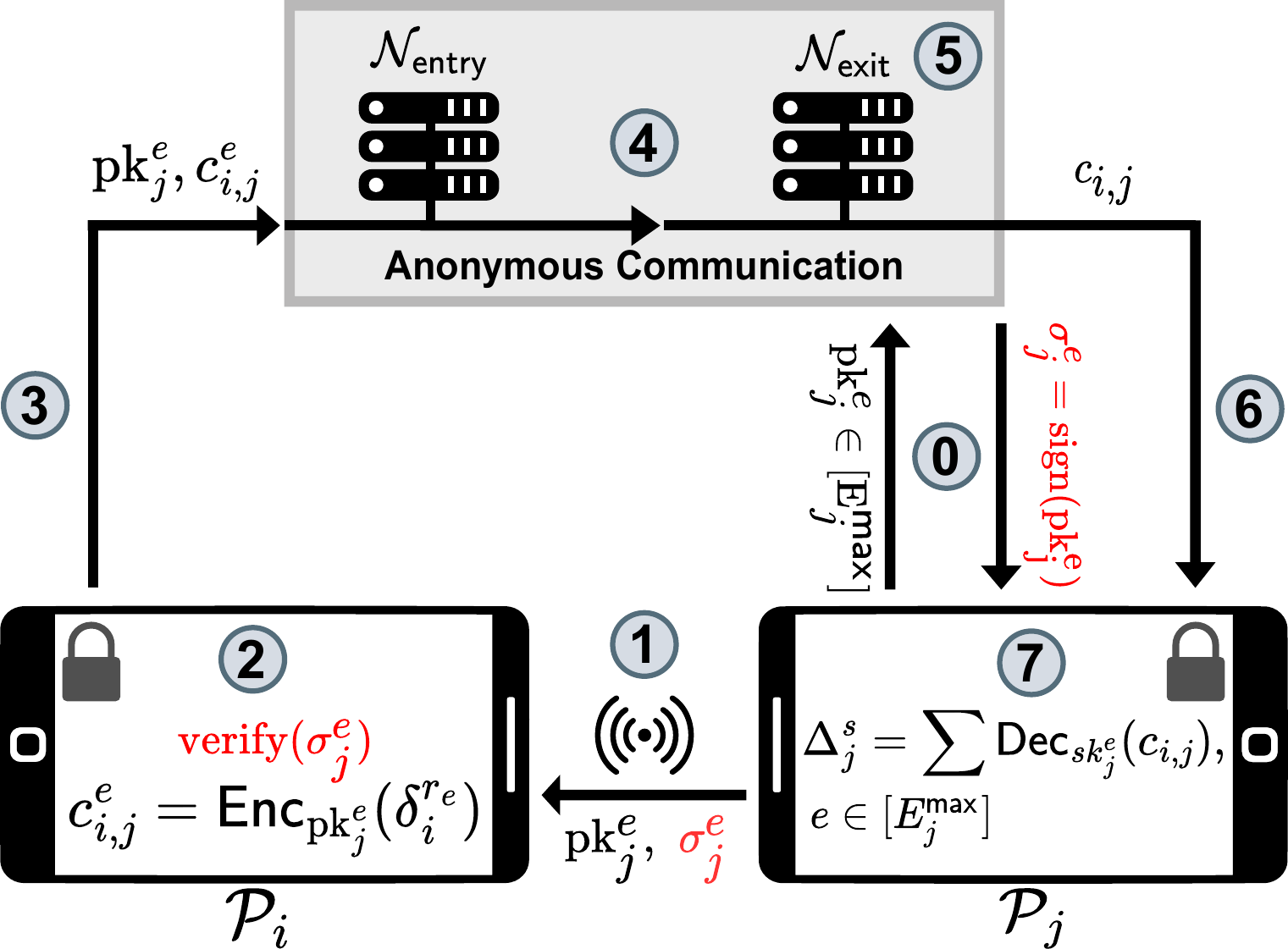}
	\caption{\teename{} Overview. Messages in \textcolor{red}{red} denote additional steps needed for malicious participants.}%
	\label{fig:tee_name}
\end{figure}

\paragraph{Token Generation} (steps \bfcircled{0} to \bfcircled{1} in \fig{fig:tee_name}): During the pre-computation phase, the TEE of each participant $\participant{i} \in \participantset$ generates a list of fresh unique public/private key pairs $(\pkey_i^e, \skey_i^e)$ for all possible encounters $e \in [\maxencounters{i}]$. The keys, for example, can be generated and stored a day ahead of time. The newly generated public keys are then sent by $\participant{i}$'s TEE to the exit node $\exitnode$ (step~\bfcircled{0} in \fig{fig:tee_name}) to enable anonymous communication~(cf.~\sect{app:anon-comm-channel}) via $\funcAnon$ later in the protocol's simulation part. 

During a physical encounter $e$, $\participant{i}$ and $\participant{j}$ exchange two unused public keys $\pkey_i^e$ and $\pkey_j^e$ (step~\bfcircled{1} in \fig{fig:tee_name}). Simultaneously, both participants compute and record metadata $\metadata_e$, such as the time, location, and duration of the encounter, and store this information alongside the received public key.

Additional measures are required for malicious participants to ensure that the participants are exchanging public keys generated by the TEEs: After obtaining the new public keys from $\participant{i}$, the exit node $\exitnode$ goes one step further: It signs them and returns the signatures to $\participant{i}$ after checking that it is connecting directly with a non-corrupted TEE~(step~\bfcircled{0} in \fig{fig:tee_name} and \sect{app:related-primitives}). During a physical encounter, $\participant{j}$ will provide the corresponding signature, denoted by $\sigma_j^e$ along with $\pkey_j^e$ so that the receiver $\participant{i}$ can verify that the key was correctly generated by $\participant{j}$'s TEE (step~\bfcircled{2} in \fig{fig:tee_name}).

\paragraph{Simulation Execution} (steps~\bfcircled{2} to~\bfcircled{7} in \fig{fig:tee_name}): All local computations, including infection likelihood calculation and infection class updates, will be performed within the participants' TEEs. In detail, for each encounter $e$ involving participants $\participant{i}$ and $\participant{j}$, the following steps are executed:
\begin{itemize}
	\item[--] $\participant{i}$'s TEE computes $\likelihood_i^{\token_e}$ and encrypts it using the public key $\pkey_j^e$ of $\participant{j}$ obtained during the token generation phase. Let the ciphertext be $c^e_{i,j} = \Enc{\pkey_j^e}(\likelihood_i^{\token_e})$ (step~\bfcircled{2} in \fig{fig:tee_name}).
	\item[--] $\participant{i}$'s TEE establishes a secure channel with the entry node $\entrynode{}$ of $\mpcservers{}$ via remote attestation and uploads the tuple $(\pkey_j^e, c^e_{i,j})$ (step~\bfcircled{3} in \fig{fig:tee_name}).
	\item[--] The tuple $(\pkey_j^e,c^e_{i,j})$ traverses through the servers in $\mpcservers{}$ and reaches the exit node $\exitnode$ (step~\bfcircled{4} in \fig{fig:tee_name}, instantiation details for the anonymous communication channel are given in~\sect{app:anon-comm-channel}).
	\item[--] If the public key $\pkey_i^e$ has already been used in this simulation step\footnote{This step is not required for semi-honest participants.}, $\exitnode$ discards the tuple (step~\bfcircled{5} in \fig{fig:tee_name}).
	\item[--] Otherwise, $\exitnode$ uses $\pkey_j^e$ to identify the recipient $\participant{j}$ and sends the ciphertext $c^e_{i,j}$ to $\participant{j}$ (step~\bfcircled{6} in \fig{fig:tee_name}).
\end{itemize}

After receiving the ciphertexts for all of the encounters, $\participant{j}$'s TEE decrypts them and aggregates the likelihoods to produce the desired output (step~\bfcircled{7} in \fig{fig:tee_name}).

\subsubsection{Security of \teename{}.}
\label{sec:teename-security}
First, we consider the case of semi-honest participants. 
During the token generation phase, since the current architecture in most mobile devices does not allow direct communication with a TEE while working with Bluetooth LE interfaces, participant $\participant{i}$ can access both the sent and received public keys before they are processed in the TEE. However, unique keys are generated per encounter and do not reveal anything about an encounter's identities due to the security of the underlying $\funcCollection$ functionality, which captures the goal of several contact tracing apps in use. 

The $\funcAnon$ functionality, which implements an anonymous communication channel utilising the servers in $\mpcservers{}$, aids in achieving \emph{contact graph privacy} by preventing participants from learning to/from whom they are sending/receiving messages. While the entry node learns who sends a message, it does not learn who receives them. Similarly, the exit node $\exitnode$ has no knowledge of the sender but learns the recipient using the public key. 
Regarding \emph{confidentiality}, participants in \teename{} have no knowledge of the messages being communicated because they cannot access the content of the TEEs and the TEEs communicate directly to the anonymous channel. Furthermore, servers in $\mpcservers{}$ will not have access to the messages as they are encrypted.

For the case of malicious participants, they could send specifically crafted keys during the token generation phase instead of the ones created by their TEE. However, this will make the signature verification fail and the encounter will get discarded. Furthermore, a malicious participant may reuse public keys for multiple encounters. This manipulation, however, will be useless because the exit node $\exitnode$ checks that each key is only used once before forwarding messages to participants. During the simulation phase, all data and computation are handled directly inside the TEEs of the participants, so no manipulation is possible other than cutting the network connection, i.e., dropping out of the simulation, ensuring \emph{correctness}. Dropouts occur naturally when working with mobile devices and have no effect on privacy guarantees.

\fi

\subsection{\pirname}
\label{sec:pirname}
\iffullversion
In the following, we show how to get rid of \teename's assumption of each participant having a TEE on their mobile devices. If we simply remove the TEE part of \teename{} and run the same protocol, decryption and aggregation of a participant's received infection likelihoods would be under their control. Thus, the individual infection likelihoods of all encounters would be known to them, leaking information about the contact graph (cf.~\sect{sec:privacy-framework}). 
\else
The main challenge for instantiating $\funcSimulation$ is to prevent the leakage of individual infection likelihoods of the encounters.
\fi
To get around this privacy issue, we need to find a way to aggregate the infection likelihoods so that the participants can only derive the sum, not individual values.

\begin{systembox}{$\funcPirSum$}{Ideal functionality for PIR-SUM (semi-honest).}{fig:funcpirsum}
	\justify
	$\funcPirSum$ interacts with $\numservers$ servers, denoted by $\mpcservers{}$, and participant~$\participant{i} \in \participantset$.
	\begin{description}
		\item[Input:] $\funcPirSum$ receives $\threshold$ indices denoted by $\querySet = \{\query_1, \dots, \query_{\threshold}\}$ from $\participant{i}$ and a database $\db$ from $\mpcservers{}$.
		\item[Output:] $\funcPirSum$ sends $\sum_{j=1}^{\threshold} \dbV{\query_j}$ to $\participant{i}$ as the output.
	\end{description}
\end{systembox}
\medskip

Private Information Retrieval~(PIR\iffullversion, cf.~\sect{app:related-primitives}\fi) is one promising solution for allowing participants to retrieve infection likelihoods sent to them anonymously. PIR enables the private download of an item from a public database $\db$ held by $\numservers$ servers without leaking any information to the servers, such as which item is queried or the content of the queried item. However, classical PIR is unsuitable for our needs because we need to retrieve the sum of $\threshold$ items from the database rather than the individual ones. As a result, we introduce the ideal functionality $\funcPirSum$~(\boxref{fig:funcpirsum}), which is similar to a conventional PIR functionality but returns the sum of $\threshold$ queried locations of the database as a result. For the remainder of this section, we consider $\funcPirSum$ to be an ideal black-box and will discuss concrete instantiations in \sect{sec:pirsum}.

\begin{figure}[htb!]
	\centering
	\includegraphics[width=0.95\columnwidth]{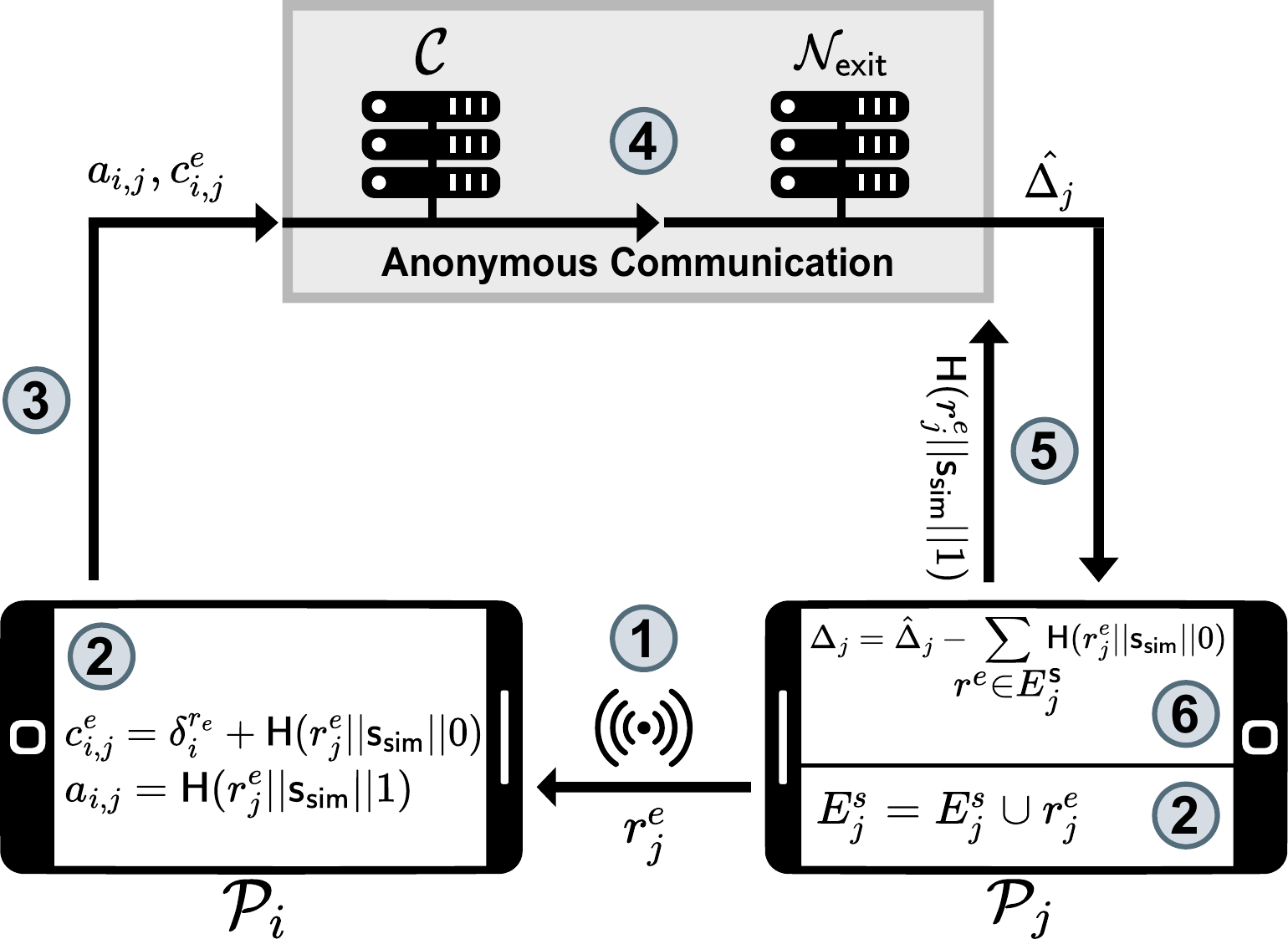}
    \vspace{-1mm}
	\caption{\pirname{} Overview.}%
	\label{fig:pir_name}
    \vspace{-5mm}
\end{figure}

\iffullversion
We now detail the changes needed in the token generation phase to make it compatible with the rest of the \pirname{} protocol.
\else
We now detail the protocol steps of \pirname{} with respect to the overview provided in~\fig{fig:pir_name}.
\fi

\smallskip
\emph{Token Generation (step~\bfcircled{1} in \fig{fig:pir_name}):} During a physical encounter $e$ among participants $\participant{i}$ and $\participant{j}$, they generate and exchange unique random tokens denoted by $\token^e_i$ and $\token^e_j$. 
\iffullversion
Both participants, like in \teename, also record the metadata $\metadata^e$. 
\else
Simultaneously, both participants compute and record metadata $\metadata_e$, such as the time, location, and duration of the encounter, and store this information.
\fi
Thus, at the end of a simulation step ${\currstep} \in [\simsteps]$~(e.g., a day), $\participant{i}$ holds a list of sent encounter tokens, denoted by $\encounterlist^{\currstep}_i = \{\token^e_i\}_{e \in \encounterset{i}}$, where $\encounterset{i}$ is the complete (sent/received) set of encounters of $\participant{i}$, and a list of received tokens, denoted by $\rencounterlist^{\currstep}_i = \{\token^e_j\}_{e \in \encounterset{i}}$. Looking ahead, these random tokens will be used as addresses to communicate the corresponding infection likelihood among the participants.

\smallskip\noindent
\emph{Simulation Execution (steps~\bfcircled{2} to~\bfcircled{6} in \fig{fig:pir_name}):} 
\iffullversion
Local computations such as encounter filtering and infection likelihood calculation proceed similarly to \teename{} but without TEE protection. The steps for an encounter $e$ among $\participant{i}$ and $\participant{j}$ are as follows:
\fi
\begin{acm_item_text}
	\item $\participant{i}$ blinds each infection likelihood $\likelihood_i^{\token_e}$ computed with the corresponding random token $\token^e_j$ received from $\participant{j}$ and obtains the ciphertext $c^e_{i,j} = \likelihood_i^{\token_e} + \Hash(\token^e_j||\simcurrstep||0)$. In addition, it computes the destination address for the ciphertext as $a_{i,j} = \Hash(\token^e_j||\simcurrstep||1)$. Here, $\Hash()$ is a cryptographic hash function and $\simcurrstep \in [\numberofsimulations]$ denotes the current simulation setting. (step~\bfcircled{2} in \fig{fig:pir_name})
	\begin{acm_inneritem}
		\item[--] $\simcurrstep$ is used in $\Hash()$ to ensure that distinct (ciphertext, address) tuples are generated for the same encounters across multiple simulation settings, preventing the exit node $\exitnode$ from potentially linking messages from different simulations.
	\end{acm_inneritem}
	\item $\participant{i}$ sends the tuple $(c^e_{i,j},a_{i,j})$ anonymously to $\exitnode$ with the help of the servers in $\mpcservers{}$. $\exitnode$ discards all the tuples with the same address field ($a_{i,j}$) (steps~\bfcircled{3} to~\bfcircled{4} in \fig{fig:pir_name}). The instantiation details for the anonymous communication channel are given in \iffullversion \sect{app:anon-comm-channel}).\else the full version~\cite[\S B.3]{ARXIV:RIPPLE}).\fi
\end{acm_item_text}

\makeparafit
As a server in $\mpcservers{}$, $\exitnode$ locally creates the database $\db$ for the current simulation step using all of the $(a_{i,j}, c^e_{i,j})$ tuples received (part of step~\bfcircled{4} in \fig{fig:pir_name}). A naive solution of inserting $c^e_{i,j}$ using a simple hashing of the address $a_{i,j}$ will not provide an efficient solution in our case since we require only one message to be stored in each database entry to have an injective mapping between addresses and messages. This is required for the receiver to download the messages sent to them precisely. Simple hashing would translate to a large database size to ensure a negligible probability of collisions. Instead, in \pirname, we use a novel variant of a garbled cuckoo table that we call arithmetic garbled cuckoo table~(AGCT, see below), with $a_{i,j}$ as the insertion key for the database. 

Once the database $\db$ is created, $\exitnode$ sends it to the other servers in $\mpcservers{}$ based on the instantiation of $\funcPirSum$ (cf.~\sect{sec:pirsum}). Each $\participant{j} \in \participantset$ will then participate in an instance of $\funcPirSum$ with the servers in $\mpcservers{}$ acting as PIR servers holding the database $\db$. 
$\participant{j}$ uses the addresses of all its sent encounters from $\encounterlist_{j}^\currstep$, namely $\Hash(\token^e||\simcurrstep||1)$, as the input to $\funcPirSum$ and obtains a blinded version of the cumulative infection likelihood, denoted by $\blindedclikelihood_j$, as the output (step~\bfcircled{5} in~\fig{fig:pir_name}). The cumulative infection likelihood, $\clikelihood_j$, is then unblinded as 
\begin{equation*}
	\clikelihood_j = \blindedclikelihood_j - \sum_{\token^e \in ~\encounterlist_{j}^\currstep } \Hash(\token^e||\simcurrstep||0)
\end{equation*}
concluding the current simulation step (step~\bfcircled{6} in~\fig{fig:pir_name}).

\subsubsection{Security of \pirname{}}
\label{sec:pirname-security}

\iffullversion
Except for the database constructions at exit node $\exitnode$ and the subsequent invocation of the $\funcPirSum$ functionality for the cumulative infection likelihood computation, the security guarantees for semi-honest participants in \pirname{} are similar to those of \teename. 
Unlike \teename, $\exitnode$ in \pirname{} will be unable to identify the message's destination from the address because it will be known only to the receiving participant. Furthermore, each participant obtains the cumulative infection likelihood directly via the $\funcPirSum$ functionality, ensuring that $\exitnode$ cannot infer the participant's encounter details and, thus, \emph{contact graph privacy}.
\else
For semi-honest participants, the $\funcAnon$ functionality, which implements an anonymous communication channel utilising the servers in $\mpcservers{}$, aids in achieving \emph{contact graph privacy} by preventing participants from learning to/from whom they are sending/receiving messages. While the entry node learns who sends a message, it does not learn who receives it. Similarly, the exit node cannot identify the message's destination from the address because it will be known only to the receiving participant. Furthermore, each participant obtains the cumulative infection likelihood directly via the $\funcPirSum$ functionality, ensuring that $\exitnode$ cannot infer the participant's encounter details and, thus, \emph{contact graph privacy} is achieved.
\fi

Malicious participants \iffullversion in \pirname, as opposed to \teename,\fi can tamper with the protocol's correctness by providing incorrect inputs. However, as stated in the threat model in~\sect{sec:framework}, we assume that malicious participants in our framework will not tamper with the correctness and will only seek additional information. A malicious participant could re-use the same encounter token for multiple encounters during token generation, causing the protocol to generate multiple tuples with the same address. However, as the protocol states, $\exitnode$ will discard all such tuples, removing the malicious participant from the system.
Another potential information leakage caused by the participant's aforementioned action is that the entry point of the anonymous communication channel can deduce that multiple participants encountered the same participant. This is not an issue in our protocol because we instantiate the $\funcAnon$ functionality using a 3-server oblivious shuffling scheme\iffullversion~(cf.~\sect{app:anon-comm-channel})\fi, where all the servers except $\exitnode$ will not see any messages in the clear, but only see secret shares.

\subsubsection{Arithmetic Garbled Cuckoo Table (AGCT)}
\label{app:agct}
We design a variant of garbled cuckoo tables \iffullversion(\cite{pinkas2020psi}, cf.~\sect{app:related-primitives})\else\cite{pinkas2020psi} \fi that we term arithmetic garbled cuckoo table (AGCT) to reduce the size of the PIR database while ensuring a negligible collision likelihood. It uses arithmetic sharing instead of XOR-sharing to share database entries and present the details next.

\iffullversion
\medskip
\begin{figure*}[htb!]
	\centering
	\includegraphics[clip, trim=0pt 0pt 0pt 0pt, width=0.8\textwidth]{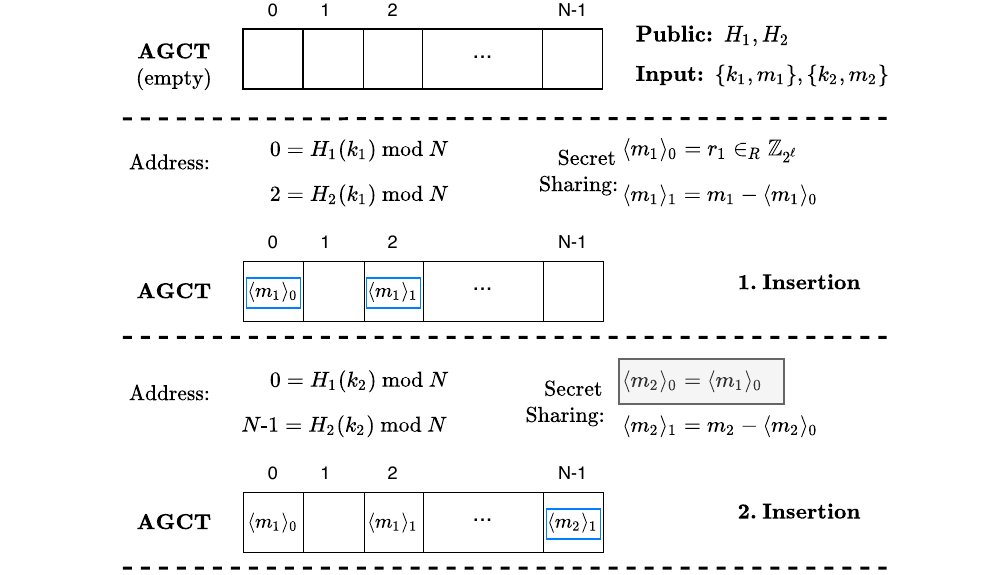}
	\caption{Insertion into the Arithmetic Garbled Cuckoo Table (AGCT). $\Hash_1$ and $\Hash_2$ are two hash functions. $\{\pirkey_1,m_1\}$ and $\{\pirkey_2,m_2\}$ are key-value pairs where the key is used to determine the data address in the database.}%
	\label{fig:agct}
\end{figure*}
\fi

\makeparafit
Assume two key-message pairs $\{\pirkey_1,m_1\}$ and $\{\pirkey_2,m_2\}$\footnote{$k$ corresponds to a key and $m$ to a message in our application.} are to be added to database $D$ with $\dbsize{}$ bins, using two hash function $\Hash_1$ and $\Hash_2$ to determine the insertion addresses. 
\begin{acm_text_list}
    \item[1.] Insertion of $\{\pirkey_1,m_1\}$: 
    \begin{acm_text_sublist}
        \item[a)] Compute $a_1 = \Hash_1(\pirkey_1)\mod{} \dbsize{}$ and $a_2 = \Hash_2(\pirkey_1) \mod{} \dbsize{}$. 
        \item[b)] Check if bins $a_1$ and $a_2$ are already occupied. Let's assume this is not the case.
        \item[c)] Compute the arithmetic sharing of the message $m_1$: $\ashr{ m_1}{0}=r_1\in_R \mathbb{Z}_{2^\ell}$ and $\ashr{m_1}{1}= m_1 - \ashr{ m_1}{0}$.
        \item[d)] Insert $D[a_1]=\ashr{m_1}{0}$ and $D[a_2]= \ashr{m_1}{1}$.
    \end{acm_text_sublist}
    \item[2.] Insertion of $\{\pirkey_2,m_2\}$:
    \begin{acm_text_sublist}
        \item[a)] Compute $b_1 = \Hash_1(\pirkey_2) \mod{} \dbsize{}$ and $b_2 = \Hash_2(\pirkey_2) \mod{} \dbsize{}$. 
        \item[b)] Check if bins $b_1$ and $b_2$ are already occupied. Let's assume $b_1=a_1$, i.e., the first bin is already occupied, but bin $b_2$ is free.
        \item[c)] Compute the arithmetic sharing $m_2$ with $\ashr{m_2}{0}= \ashr{m_1}{0}$ as $b_1=a_1$. Then, the other share is $\ashr{m_2}{1}= m_2 - \ashr{m_2}{0}$.
        \item[d)] Insert $D[b_1]=\ashr{m_2}{0}$ and $D[b_2]=\ashr{m_2}{1}$.
    \end{acm_text_sublist}
\end{acm_text_list}

\emph{Double Collision}: Now the question is how to handle the insertion of a database entry if both addresses determined by the two hash functions are already occupied. An easy solution is to pick different hash functions s.t. no double collision occurs for all $n$ elements that shall be stored in the database. Alternatively, Pinkas et al.~\cite{pinkas2020psi} demonstrate for a garbled cuckoo table how to extend the database by $d + \lambda$ bins, where $d$ is the upper bound of double collisions and $\lambda$ is an error parameter, such that double collisions occur with a negligible likelihood. For details, please refer to~\cite[§5]{pinkas2020psi}.
\section{PIR-SUM: Instantiating $\funcPirSum$}
\label{sec:pirsum}
\iffullversion
So far, the discussion has focused on RIPPLE as a generic framework composed of multiple ideal functionalities that could be efficiently instantiated using state-of-the-art privacy-enhancing technologies. 
\fi
In this section, we will use three semi-honest MPC servers to instantiate our novel $\funcPirSum$ functionality~(\boxref{fig:funcpirsum}). In particular, we have three servers $\mpcserver{0}, \mpcserver{1}$, and $\mpcserver{2}$, and we design the $\PIRSUM$ protocol to instantiate the $\funcPirSum$ functionality.

The problem statement in our context is formally defined as follows: Participant $\participant{i} \in \participantset$ has a set of $\threshold$ indices denoted by $\querySet = \{\query_1, \dots, \query_{\threshold}\}$ and wants to retrieve $\pirresult = \sum_{\query \in \querySet} \dbV{\query}$. In this case, $\db$ is a database with $\dbsize$ elements of $\ell$-bits each that is held in the clear by both the servers $\mpcserver{1}$ and $\mpcserver{2}$. The server $\mpcserver{0}$ aids in the computation performed by the servers $\mpcserver{1}$ and $\mpcserver{2}$.
Furthermore, we assume a one-time setup \iffullversion(cf.~\sect{app:shared-key})\fi among the servers and $\participant{i}$ that establishes shared pseudorandom keys among them to facilitate non-interactive generation of random values and, thus, save communication~\cite{ASTRA,ABY2,Araki0OPRT21}.

\subsection{Overview of $\PIRSUM$ protocol}
\label{sec:naive-pir}
At a high level, the idea is to use multiple instances of a standard 2-server PIR functionality~\cite{chor1995private,boyle2016fss}, denoted by $\funcPirS$, and combine the responses to get the sum of the desired blocks as the output. $\maskeddb{\pirmask} = \db + \pirmask$ denotes a modified version of the database $\db$ in which every block is summed with the same $\ell$-bit mask $\pirmask$, i.e., $\maskeddbV{\pirmask}{i} = \dbV{i} + \pirmask$ for $i \in [\dbsize]$. The protocol proceeds as follows:
\begin{itemize}
	\item[--] $\mpcserver{1}$ and $\mpcserver{2}$ non-interactively sample $\threshold$ random mask values $\{\pirmask_1, \ldots, \pirmask_{\threshold}\}$ such that $\sum_{j=1}^{\threshold} \pirmask_j = 0$.
	\item[--] $\mpcserver{1}, \mpcserver{2}$, and $\participant{i}$ execute $\threshold$ instances of $\funcPirS$ in parallel, with servers using $\maskeddb{\pirmask_j}$ as the database and $\participant{i}$ using $\query_j$ as the query for the $j$-th instance for $j \in [\threshold]$. 
	\item[--] Let $\pirresult_j$ denote the result obtained by $\participant{i}$ from the $j$-th $\funcPirS$ instance. $\participant{i}$ locally computes $\sum_{j=1}^{\threshold} \pirresult_j$ to obtain the desired result.
\end{itemize}

The details for instantiating $\funcPirS$ using the standard linear summation PIR approach~\cite{chor1995private} are provided in \iffullversion \sect{app:linear-summation}. \else the full version~\cite[\S C]{ARXIV:RIPPLE}. \fi The approach requires $\participant{i}$ to communicate $\dbsize \cdot \threshold$ bits to the servers, which is further reduced in \pirname{} (cf.~\sect{sec:reduce-comm-pir}).

\paragraph{Malicious participants}
\label{sec:sec-naive-pir}
\iffullversion While it is simple to show that the above solution is adequate for semi-honest participants, malicious participants must be dealt with separately. 
\fi
A malicious participant, for example, could use the same query, say $\query_j$, in all $\threshold$ instances and retrieve only the block corresponding to $\query_j$ by dividing the result by $\threshold$. We present a simple verification scheme over the $\funcPirS$ functionality to prevent these manipulations. 

For malicious participants, we want to ensure that $\participant{i}$ used a distinct vector $\bitvector{}$ \iffullversion (representing a PIR query $\query_j$, cf.~\S\ref{app:linear-summation}) \fi during the $\threshold$ parallel instances. One naive approach is to have $\mpcserver{1}$ and $\mpcserver{2}$ compute the bitwise-OR of all the $\threshold$ bit query vectors $\bitvector{}_{1}, \ldots, \bitvector{}_{\threshold}$, and then run a secure two-party computation protocol to compare the number of ones in the resultant vector to $\threshold$. We use the additional server $\mpcserver{0}$ to optimize this step further. $\mpcserver{1}$ and $\mpcserver{2}$ send randomly shuffled versions of their secret shared bit vectors to $\mpcserver{0}$, who reconstructs the shuffled vectors and performs the verification locally. This approach leaks no information to $\mpcserver{0}$ because it has no information about the underlying database $\db$. The verification procedure is as follows:
\begin{tifs_item_text}
	\item[--] $\mpcserver{1}$ and $\mpcserver{2}$ non-interactively agree on a random permutation, denoted by $\permutation$. 
	\item[--] $\mpcserver{u}$ sends $\permutation([\bitvector{}_{j}]_u)$ to $\mpcserver{0}$, $j \in [\threshold]$, $u \in \{1,2\}$.
	\item[--] $\mpcserver{0}$ locally reconstructs $\permutation(\bitvector{}_{j}) = \permutation([\bitvector{}_{j}]_1) \xor \permutation([\bitvector{}_{j}]_2)$, for $j \in [\threshold]$. If all the $\threshold$ bit vectors are correctly formed and distinct, it sends $\accept$ to $\mpcserver{1}$ and $\mpcserver{2}$. Else, it sends $\abort$.
\end{tifs_item_text}

Note that the verification using $\participant{0}$ will incur a communication of $2 \threshold \dbsize$ bits among the servers. Furthermore, the above verification method can be applied to any instantiation of $\funcPirS$ that generates a boolean sharing of the query bit vector among the PIR servers and computes the response as described above, e.g., the PIR schemes of~\cite{chor1995private,boyle2016fss,BonehBCGI21}.

\subsection{Instantiating $\funcPirSum$}
\label{main:PIRSUM-Protocol} 
The formal protocol for $\PIRSUM$ in the case of malicious participants is provided in \boxref{fig:pir-sum-protocol-main}, assuming the presence of an ideal functionality $\funcPirS$ (as will be discussed in $\hybrid{2}$ below). In $\PIRSUM$, the servers $\mpcserver{1}, \mpcserver{2}$ and the participant $\participant{i}$ run $\threshold$ instances of $\funcPirS$ in parallel, one for each query $\query \in \querySet$. Following the execution, $\participant{i}$ receives $\dbV{\query} + \pirmaskquery_{\query}$ whereas $\mpcserver{u}$ receives $\pirmaskquery_{\query}, [\query]_u$, for $u \in \{1,2\}$ and $\query \in \querySet$. $\participant{i}$ then adds up the received messages to get a masked version of the desired output, i.e, $\sum_{\query \in \querySet} \dbV{\query} + \mathsf{mask}_{\querySet}$ with $\mathsf{mask}_{\querySet} = \sum_{\query \in \querySet} \pirmaskquery_{\query}$. $\mpcserver{1}, \mpcserver{2}$ compute $\mathsf{mask}_{\querySet}$ in the same way.

The protocol could be completed by $\mpcserver{1}$ and $\mpcserver{2}$ sending $\mathsf{mask}_{\querySet}$ to $\participant{i}$, then $\participant{i}$ unmasking its value to obtain the desired output. However, before communicating the mask, the servers must ensure that all queries in $\querySet$ are distinct, as shown in $\funcPirSum$~(\boxref{fig:funcpirsumBmain}). 
For this, $\mpcserver{1}, \mpcserver{2}$ use their share of the queries $\query \in \querySet$ and participate in a secure computation protocol with $\mpcserver{0}$. We capture this with an ideal functionality $\funcVerify$, which takes the secret shares of $\threshold$ values from $\mpcserver{1}$ and $\mpcserver{2}$ and returns $\accept$ to the servers if all of the underlying secrets are distinct. Otherwise, it returns $\abort$.

\begin{protocolbox}{$\PIRSUM$}{$\PIRSUM$ Protocol.}{fig:pir-sum-protocol-main}
	\justify
	\emph{Input(s):} i) $\mpcserver{1}, \mpcserver{2}: \db; |\db| = \dbsize$, ii) $\participant{i}: \querySet = \{\query_1, \dots, \query_{\threshold}\}$, and iii) $\mpcserver{0}: \bot$.\\
	\emph{Output:} $\participant{i}: \pirresult = \sum_{\query \in \querySet} \dbV{\query}$ for distinct queries, else $\pirresult = \bot$.
	\justify
	\algoHead{Computation}
	\begin{tifs_item_prot}
		\item[1.] For each $\query \in \querySet$,
		 \begin{tifs_innerprot}
			\item[a.] $\mpcserver{1}, \mpcserver{2}$ and $\participant{i}$ invoke $\funcPirS$ (cf.~$\hybrid{2}$ in proof of Lemma~\ref{lemma:pirsum}) with the inputs $\db, \query$. 
			\item[b.] Let $\pirmaskquery_{\query}, [\query]_u$ denote the output of $\mpcserver{u}$, for $u \in \{1,2\}$ and $\dbV{\query} + \pirmaskquery_{\query}$ denote the output of $\participant{i}$. 
		\end{tifs_innerprot}
		\item[2.] $\participant{i}$ computes  $\pirresult' = \sum_{\query \in \querySet} (\dbV{\query} + \pirmaskquery_{\query})$, while $\mpcserver{1}, \mpcserver{2}$ computes $\mathsf{mask}_{\querySet} = \sum_{\query \in \querySet} \pirmaskquery_{\query}$.
		\item[3.] $\mpcserver{1}, \mpcserver{2}$ and $\mpcserver{0}$ invokes $\funcVerify$ on the secret shares of queries, denoted by $\{[\query]_u\}_{\query \in \querySet, u \in \{1,2\}}$, to check the distinctness of the queries in $\querySet$.
		\item[4.] If $\funcVerify$ returns $\accept$, $\mpcserver{1}, \mpcserver{2}$ sends $\mathsf{mask}_{\querySet}$ to $\participant{i}$, who computes $\pirresult = \pirresult' - \mathsf{mask}_{\querySet}$. Otherwise, $\abort$.
	\end{tifs_item_prot}
\end{protocolbox}

\subsubsection{Security of $\PIRSUM$ Protocol}
\label{main:PIRSUM} 
\boxref{fig:funcpirsumBmain} presents the ideal functionality for $\PIRSUM$ in the context of malicious participants. In this case, $\funcPirSum$ first checks whether all the queries made by the participant $\participant{i}$ are distinct. If yes, the correct result is sent to $\participant{i}$; otherwise, $\bot$ is sent to $\participant{i}$.

\begin{systembox}{$\funcPirSum$}{PIR-SUM functionality (malicious participants).}{fig:funcpirsumBmain}
	\justify
	$\funcPirSum$ interacts with servers in $\mpcservers$, and participant~$\participant{i} \in \participantset$.
	\begin{description}
		\item[Input:] $\funcPirSum$ receives $\threshold$ indices denoted by $\querySet = \{\query_1, \dots, \query_{\threshold}\}$ from $\participant{i}$ and a database $\db$ from $\mpcservers$.
		\item[Computation:] $\funcPirSum$ sets $y = \sum_{j=1}^{\threshold} \dbV{\query_j}$ if all the queries in $\querySet$ are distinct. Else, it sets $y = \bot$.
		\item[Output:] $\funcPirSum$ sends $y$ to $\participant{i}$.
	\end{description}
\end{systembox}

\begin{lemma}
	\label{lemma:pirsum}
	Protocol $\PIRSUM$~(\boxref{fig:pir-sum-protocol-main}) securely realises the $\funcPirSum$ ideal functionality~(\boxref{fig:funcpirsumBmain}) for the case of malicious participants in the $\{\funcPirS, \funcVerify\}$-hybrid model.
\end{lemma}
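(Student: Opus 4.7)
The plan is to prove the lemma via the standard simulation paradigm. Only the participant is corruptible in this lemma (the servers are semi-honest and simply forward their inputs to $\funcPirS$ and $\funcVerify$), so it suffices to build a PPT simulator $\mathcal{S}$ that, interacting only with $\funcPirSum$ and with the malicious adversary $\mathcal{A}$ controlling $\participant{i}$, produces a transcript computationally indistinguishable from $\mathcal{A}$'s view in a real $\PIRSUM$ execution in the $\{\funcPirS,\funcVerify\}$-hybrid model.

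First I would describe the simulator. $\mathcal{S}$ runs $\mathcal{A}$ internally and plays $\funcPirS$ in each of the $\threshold$ parallel calls: for the $j$-th call, $\mathcal{S}$ reads $\mathcal{A}$'s query $\query_j$ and returns a uniformly random $u_j \in \Z{\ell}$ in the role of $\dbV{\query_j}+\pirmaskquery_{\query_j}$. This extraction yields the full list $\querySet=\{\query_1,\dots,\query_{\threshold}\}$, which $\mathcal{S}$ forwards to $\funcPirSum$ to obtain $y$ (either $\sum_{\query\in\querySet}\dbV{\query}$ or $\bot$). $\mathcal{S}$ then emulates $\funcVerify$: if the extracted queries are all distinct it delivers $\accept$ to the simulated $\mpcserver{1},\mpcserver{2}$ and hands $\mathsf{mask}_{\querySet}:=\sum_{j}u_j-y$ to $\mathcal{A}$; otherwise it delivers $\abort$ and halts $\mathcal{A}$. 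The simulator finally outputs whatever $\mathcal{A}$ outputs.

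For indistinguishability I would argue step by step. In the real hybrid execution the masks $\pirmaskquery_{\query_j}$ output by $\funcPirS$ are uniform and independent of $\mathcal{A}$'s view at the moment $\mathcal{A}$ receives $\dbV{\query_j}+\pirmaskquery_{\query_j}$, so each such value is marginally uniform over $\Z{\ell}$ and matches the simulated $u_j$ (a one-time-pad argument). In the distinct-queries branch, $\mathsf{mask}_{\querySet}$ is, in both worlds, the unique value making $\pirresult'-\mathsf{mask}_{\querySet}$ equal to the true sum $\sum_{\query\in\querySet}\dbV{\query}$, so the joint distribution of $(u_1,\dots,u_{\threshold},\mathsf{mask}_{\querySet})$ coincides. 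In the non-distinct branch, $\funcVerify$ returns $\abort$ in both worlds and $\mathcal{A}$ learns no mask, making the views trivially identical. The distinctness check that $\mathcal{S}$ performs on the extracted queries is well-defined because $\funcPirS$ (as modified in $\hybrid{2}$) releases a secret sharing of each $\query$ to $\mpcserver{1},\mpcserver{2}$, on which $\funcVerify$ is invoked.

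The main obstacle is justifying the use of $\funcPirS$ in the $\hybrid{2}$-form rather than the textbook PIR functionality. This is precisely what the hybrid chain $\hybrid{0}\approx\hybrid{1}\approx\hybrid{2}$ in \sect{app:PIRSUM} establishes: masking the retrieved block by a fresh random $\pirmaskquery$ is perfectly indistinguishable from $\participant{i}$'s viewpoint, and additionally handing secret shares of $\query$ to the two non-colluding servers leaks nothing to either server individually. Once that hybrid is granted, the argument above is essentially bookkeeping and no computational assumptions beyond the security of $\funcPirS$ and $\funcVerify$ are invoked.
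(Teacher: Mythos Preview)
Your proposal is correct and is essentially a fleshed-out version of the paper's own proof, which is a two-sentence sketch stating that the result ``follows with a hybrid argument based on the three hybrids $\hybrid{0}$, $\hybrid{1}$, and $\hybrid{2}$'' and that $\funcVerify$ can be instantiated by any secure three-party protocol. You make explicit the simulator the paper leaves implicit: extracting the queries through the emulated $\funcPirS$ calls, programming $\mathsf{mask}_{\querySet}=\sum_j u_j - y$ so that $\pirresult'-\mathsf{mask}_{\querySet}$ equals the ideal output, and matching the $\accept$/$\abort$ branch of $\funcVerify$ to the distinctness of the extracted $\querySet$; your one-time-pad argument that $(\dbV{\query_j}+\pirmaskquery_{\query_j})_j$ is jointly uniform and that $\mathsf{mask}_{\querySet}$ is then determined is exactly the content that the paper's hybrid chain is meant to license.
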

\begin{proof}
	The proof follows with a hybrid argument based on the three hybrids $\hybrid{0}$, $\hybrid{1}$, and $\hybrid{2}$ discussed below. Furthermore, any secure three-party protocol can be used to instantiate $\funcVerify$ in \ourname{}.

    We use a standard 2-server PIR functionality, denoted by $\funcPirS$, to instantiate $\funcPirSum$. The guarantees of $\funcPirS$, however, are insufficient to meet the security requirements of $\funcPirSum$, so we modify $\funcPirS$ as a sequence of hybrids, denoted by $\hybrid{}$. The modification is carried out in such a way that for a malicious participant $\participant{i}$, each hybrid is computationally indistinguishable from the one before it. As the first hybrid, $\funcPirS$ is denoted by $\hybrid{0}$.
    
    \medskip
    \noindent $\hybrid{0}$:  Let $\funcPirS$ denote a 2-server PIR ideal functionality for our case, with database holders $\mpcserver{1}$ and $\mpcserver{2}$, and client $\participant{i}$. For a database $\db$ held by $\mpcserver{1}$ and $\mpcserver{2}$ and a query $\query$ held by $\participant{i}$, $\funcPirS$ returns $\dbV{\query}$ to $\participant{i}$, but $\mpcserver{1}$ and $\mpcserver{2}$ receive nothing. 
    
    \medskip
    \noindent $\hybrid{1}$:  We modify $\funcPirS$ so that it returns $\dbV{\query} + \pirmaskquery$ to $\participant{i}$, and $\mpcserver{1}, \mpcserver{2}$ receive $\pirmaskquery$, where $\pirmaskquery$ is a random value from the domain of database block size. In other words, the modification can be thought of as the standard $\funcPirS$ being executed over a database $\maskeddb{\pirmaskquery} = \db + \pirmaskquery$ rather than the actual database $\db$. 
    This modification leaks no additional information regarding the query to the servers because they will receive random masks that are independent of the query $\query$. Furthermore, from the perspective of $\participant{i}$ with no prior knowledge of the database $\db$, $\hybrid{1}$ will be indistinguishable from $\hybrid{0}$ because the values it sees in both cases are from the same distribution. As a result, $\hybrid{0} \approx \hybrid{1}$.
    
    \medskip
    \noindent\makeparafit$\hybrid{2}$: Looking ahead, in $\PIRSUM$, servers $\mpcserver{1}, \mpcserver{2}$ and participant $\participant{i}$ run $\threshold$ instances of $\funcPirS$ in parallel, one for each query $\query \in \querySet$.  As shown in $\funcPirSum$~(\boxref{fig:funcpirsumBmain}), the servers must ensure that all of the queries in $\querySet$ are distinct. For this, we modify $\funcPirS$ in $\hybrid{1}$ to additionally output a secret share of the query $\query$ to each of $\mpcserver{1}$ and $\mpcserver{2}$. Because the servers $\mpcserver{1}$ and $\mpcserver{2}$ are assumed to be non-colluding in our setting, this modification will leak no information about the query $\query$ to either server. Since the output to $\participant{i}$ remains unchanged, $\hybrid{1} \approx \hybrid{2}$ from $\participant{i}$'s perspective.
\end{proof}

\subsection{Reducing participant's communication}
\label{sec:reduce-comm-pir}
$\PIRSUM$ is realized in \pirname{} using two different approaches, each with its own set of trade-offs, with a goal of minimizing the communication at the participant's end. While the first approach, denoted by $\PIRSUMA$~(\boxref{fig:pir-sumA-protocol-main}), sacrifices computation for better communication, the second approach, denoted by $\PIRSUMB$~(\boxref{fig:pir-sumB-protocol}), reduces both the computational and communication overhead of the participant in $\PIRSUMA$ with the help of additional server $\mpcserver{0} \in \mpcservers{}$. 

\smallskip
\subsubsection{$\PIRSUMA$~(\boxref{fig:pir-sumA-protocol-main})} 
\label{sec:pir-sumA}
\makeparafit
In this approach, we instantiate $\funcPirS$ using PIR techniques based on Function Secret Sharing (FSS)~\cite{CGibbsB15,boyle2016fss, BonehBCGI21}. To retrieve the $\query$-th block from the database, $\participant{i}$ uses FSS on a Distributed Point Function (DPF)~\cite{GilboaI14} that evaluates to a $1$ only when the input $\query$ is $1$ and to $0$ otherwise. $\participant{i}$ generates two DPF keys $\pirkey_1$ and $\pirkey_2$ that satisfy the above constraint and sends one key to each of the servers $\mpcserver{1}$ and $\mpcserver{2}$. The servers $\mpcserver{1}$ and $\mpcserver{2}$ can then locally expand their key share to obtain their share for the bit vector $\bitvector{}$ and the rest of the procedure proceeds similarly to the naive linear summation method discussed in \sect{sec:naive-pir}\iffullversion ~(more details on Linear Summation PIR are given in \sect{app:linear-summation})\fi . The key size for a database of size $\dbsize$ records using the optimised DPF construction in~\cite{boyle2016fss} is about $\lambda \log(\dbsize/\lambda)$ bits, where $\lambda = 128$ for an AES-based implementation. \boxref{fig:pir-sumA-protocol-main} provides the formal details of the $\PIRSUMA$ protocol.

\paragraph*{Security}
For semi-honest participants, the security of our method directly reduces to that of the 2-server PIR protocol in~\cite{boyle2016fss}. However, as mentioned in~\cite{boyle2016fss}, a malicious participant could generate incorrect DPF keys, risking the scheme's security and correctness. To prevent this type of misbehavior, Boyle et al.~\cite{boyle2016fss} present a form of DPF called ``verifiable DPF'', which can assure the correctness of the DPF keys created by $\participant{i}$ at the cost of an increased \emph{constant} amount of communication between the servers.

\iffullversion
\medskip
\begin{protocolbox}{$\PIRSUMA$}{$\PIRSUMA$ Protocol.}{fig:pir-sumA-protocol-main}
	\justify
	\emph{Input(s):} i) $\mpcserver{1}, \mpcserver{2}: \db; |\db| = \dbsize$, ii) $\participant{i}: \querySet = \{\query_1, \dots, \query_{\threshold}\}$, and iii) $\mpcserver{0}: \bot$.\\
	\emph{Output:} $\participant{i}: \pirresult = \sum_{\query \in \querySet} \dbV{\query}$
	\justify
	\algoHead{Computation}
	$\mpcserver{1}$ and $\mpcserver{2}$ sample $\threshold$ random mask values $\{\pirmask_1, \ldots, \pirmask_{\threshold}\} \in \Z{\ell}^{\threshold}$ such that $\sum_{j=1}^{\threshold} \pirmask_j = 0$. For each $\query \in \querySet$, execute the following:  
	\begin{tifs_item_prot}
	    \item[1.] $\mpcserver{1}, \mpcserver{2}$ locally compute $\maskeddb{\pirmask_{\query}} = \db + \pirmask_{\query}$.
		\item[2.] Execute DPF protocol~\cite{boyle2016fss} (verifiable DPF for malicious participants) with $\participant{i}$ as client with input $\query$. Server $\mpcserver{u}$ obtains $[\bitvector{}_{\query}]_u$ with $\bitV{j}{\query} = 1$ for $j = \query$ and $\bitV{j}{\query} = 0$ for $j \ne \query$, for $u \in \{1,2\}$.
	\end{tifs_item_prot}
	\algoHead{Verification}
     Let $\{\bitvector{}_{\query_1},\ldots,\bitvector{}_{\query_{\threshold}}\}$ denote the bit vectors whose XOR-shares are generated during the preceding steps. 
	\begin{tifs_item_prot}
	    \item[3.] \makeparafit Servers verify correctness of~$\query_j$, $j \in [\threshold]$, by executing the \Ver{} algorithm of the verifiable DPF protocol~\cite{boyle2016fss}\iffullversion ~(cf.~\sect{app:dpf})\fi . It outputs $\accept$ to $\mpcserver{1}$ and $\mpcserver{2}$ if~$\query_j$ has exactly 1 one and $(\dbsize - 1)$ zeroes. Else, it outputs $\abort$.
		\item[4.] $\mpcserver{u}$ computes $[\bitvector{}_c]_{u} = \xor_{\query \in \querySet} [\bitvector{}_{\query}]_u$, for $u \in \{1,2\}$.
		\item[5.] $\mpcserver{1}$ and $\mpcserver{2}$ non-interactively agree on random permutation $\permutation$. 
		\item[6.] $\mpcserver{u}$ sends $\permutation([\bitvector{}_{c}]_u)$ to $\mpcserver{0}$, for $u \in \{1,2\}$.
		\item[7.] $\mpcserver{0}$ locally reconstructs $\permutation(\bitvector{}_{c}) = \permutation([\bitvector{}_{\query}]_1) \xor \permutation([\bitvector{}_{\query}]_2)$. It sends $\accept$ to $\mpcserver{1}$ and $\mpcserver{2}$, if $\permutation(\bitvector{}_{c})$ has exactly $\threshold$ ones. Else, it sends $\abort$.
	\end{tifs_item_prot}
	\algoHead{Output Transfer}
	Send $\bot$ to $\participant{i}$ if verifiable DPF or $\mpcserver{0}$ generated $\abort$ during verification. Otherwise, proceed as follows:
	\begin{tifs_item_prot}
		\item[8.] $\mpcserver{u}$ sends $[y_{\query}]_u = \Xor\limits_{j = 1}^{\dbsize} [\bitV{j}{\query}]_u \maskeddbV{\pirmask_{\query}}{j}$ to $\participant{i}$, for $\query \in \querySet, u \in \{1,2\}$.
		\item[9.] $\participant{i}$ locally computes $\pirresult = \sum_{\query \in \querySet} ([y_{\query}]_1 \xor [y_{\query}]_2)$.
	\end{tifs_item_prot}
\end{protocolbox}
\fi

While verifiable DPFs in $\PIRSUMA$ ensure the validity of the $\threshold$ bit vectors generated by $\participant{i}$, they do not ensure that bit vectors $\bitvector{}_{1}, \ldots, \bitvector{}_{\threshold}$ correspond to $\threshold$ distinct locations in the database $\db$. However, we use the correctness guarantee of verifiable DPFs to reduce the communication cost for verification, as discussed in \sect{sec:sec-naive-pir}\iffullversion , \sect{app:dpf}, and~\sect{main:fss-pir-scheme}\fi . In detail, all $\threshold$ bit vectors $\bitvector{}_{1}, \ldots, \bitvector{}_{\threshold}$ (PIR queries) are secret-shared between $\mpcserver{1}$ and $\mpcserver{2}$, each guaranteed to have exactly one $1$ and the rest $0$. To ensure distinctness, $\mpcserver{1}$ and $\mpcserver{2}$ XOR their respective $\threshold$ shares locally to obtain the secret-share of a single vector $\bitvector{}_{c} = \xor_{k=1}^{\threshold} \bitvector{}_{k}$. The challenge is then to check if $\bitvector{}_{c}$ has exactly $\threshold 1$ bits. This can be accomplished by having $\mpcserver{1}$ and $\mpcserver{2}$ agree on a random permutation $\permutation$ and reconstructing $\permutation(\bitvector{}_{c})$ to $\mpcserver{0}$ and allowing $\mpcserver{0}$ to perform the check, as in the naive approach (cf.~\sect{sec:sec-naive-pir}).

\iffullversion\else
\medskip

\fi

\paragraph*{Computation Complexity (\#AES operations)}
In $\PIRSUMA$, the participant $\participant{i}$ must perform $4 \cdot \log(\dbsize/\lambda)$ AES operations as part of the key generation algorithm for each of the $\threshold$ instances of $\funcPirS$ over a database of size $\dbsize$, where $\lambda = 128$ for an AES-based implementation. Similarly, $\mpcserver{1}$ and $\mpcserver{2}$ must perform $\log(\dbsize/\lambda)$ AES operations for each of the $\dbsize$ DPF evaluations. We refer to Table 1 in~\cite{boyle2016fss} for more specifics.

\smallskip
\subsubsection{$\PIRSUMB$~(\boxref{fig:pir-sumB-protocol})} 
\label{sec:pir-sumB}
In this approach, we use the server $\mpcserver{0}$ to reduce the computation and communication of the participant $\participant{i}$ in $\PIRSUMA$. The idea is that $\mpcserver{0}$ plays the role of $\participant{i}$ for the PIR protocol in $\PIRSUMA$. However, $\participant{i}$ cannot send its query $\query$ to $\mpcserver{0}$ in clear because it would violate privacy. As a result, $\participant{i}$ selects random values $\qshift, \qshiftcorr{\query} \in [\dbsize]$ such that $\query = \qshift + \qshiftcorr{\query}$. In this case, $\qshift$ is a \emph{shifted version} of the index $\query$, and $\qshiftcorr{}$ is a \emph{shift correction} for $\query$. $\participant{i}$ sends $\qshift$ to $\mpcserver{0}$ and $\qshiftcorr{\query}$ to both $\mpcserver{1}$ and $\mpcserver{2}$. The rest of the computation until output retrieval will now occur solely among the servers.

\begin{protocolbox}{$\PIRSUMB$}{$\PIRSUMB$ Protocol.}{fig:pir-sumB-protocol}
	\justify
	\emph{Input(s):} i) $\mpcserver{1}, \mpcserver{2}: \db; |\db| = \dbsize$, ii) $\participant{i}: \querySet = \{\query_1, \dots, \query_{\threshold}\}$, and iii) $\mpcserver{0}: \bot$.\\
	\emph{Output:} $\participant{i}: \pirresult = \sum_{\query \in \querySet} \dbV{\query}$
	\justify
	\algoHead{Computation}
	$\mpcserver{1}$ and $\mpcserver{2}$ sample $\threshold$ random mask values $\{\pirmask_1, \ldots, \pirmask_{\threshold}\} \in \Z{\ell}^{\threshold}$ such that $\sum_{j=1}^{\threshold} \pirmask_j = 0$. For each $\query \in \querySet$, execute the following: 
	\begin{tifs_item_prot}
		\item[1.] $\mpcserver{1}, \mpcserver{2}$ locally compute $\maskeddb{\pirmask_{\query}} = \db + \pirmask_{\query}$, i.e., $\maskeddbV{\pirmask_{\query}}{j} = \dbV{j} + \pirmask_{\query}$, for $j \in [\dbsize]$.
		\item[2.] $\participant{i}, \mpcserver{1}, \mpcserver{2}$ sample random $\qshiftcorr{\query} \in [\dbsize]$.
		\item[3.] $\participant{i}$ computes and sends $\qshift = \query - \qshiftcorr{\query}$ to $\mpcserver{0}$.
		\item[4.] Servers execute DPF protocol~\cite{boyle2016fss} with $\mpcserver{0}$ as client with input $\qshift_{}$. Server $\mpcserver{u}$ obtains $[\bitvector{}_{\qshift}]_u$ with $\bitV{j}{\qshift} = 1$ for $j = \qshift$ and $\bitV{j}{\qshift} = 0$ for $j \ne \qshift$, for $u \in \{1,2\}$.
		\item[5.] $\mpcserver{u}$ locally applies $\qshiftcorr{u}$ on $[\bitvector{}_{\qshift}]_u$ to generate $[\bitvector{}_{\query}]_u$, for $u \in \{1,2\}$.
	\end{tifs_item_prot}
	\algoHead{Verification}
	Let $\{\bitvector{}_{\query_1},\ldots,\bitvector{}_{\query_{\threshold}}\}$ denote the bit vectors whose XOR-shares are generated during the preceding steps: 
	\begin{tifs_item_prot}
		\item[6.] $\mpcserver{k}$ computes $[\bitvector{}_c]_{k} = \xor_{\query \in \querySet} [\bitvector{}_{\query}]_k$, for $u \in \{1,2\}$.
		\item[7.] $\mpcserver{1}$ and $\mpcserver{2}$ non-interactively agree on random permutation $\permutation$. 
		\item[8.] $\mpcserver{u}$ sends $\permutation([\bitvector{}_{c}]_u)$ to $\mpcserver{0}$, for $u \in \{1,2\}$.
		\item[9.] $\mpcserver{0}$ locally reconstructs $\permutation(\bitvector{}_{c}) = \permutation([\bitvector{}_{\query}]_1) \xor \permutation([\bitvector{}_{\query}]_2)$. It sends $\accept$ to $\mpcserver{1}$ and $\mpcserver{2}$, if $\permutation(\bitvector{}_{c})$ has exactly $\threshold$ ones. Else, it sends $\abort$.
	\end{tifs_item_prot}
	\algoHead{Output Transfer}
	Send $\bot$ to $\participant{i}$ if $\mpcserver{0}$ generated $\abort$ during verification. Otherwise, proceed as follows:
	\begin{tifs_item_prot}
		\item[10.] $\mpcserver{u}$ sends $[y_{\query}]_u = \Xor\limits_{j = 1}^{\dbsize} [\bitV{j}{\query}]_u \maskeddbV{\pirmask_{\query}}{j}$ to $\participant{i}$, for $\query \in \querySet, u \in \{1,2\}$.
		\item[11.] $\participant{i}$ locally computes $\pirresult = \sum_{\query \in \querySet} ([y_{\query}]_1 \xor [y_{\query}]_2)$.
	\end{tifs_item_prot}
\end{protocolbox}

The servers run a DPF instance~\cite{boyle2016fss} with $\mpcserver{0}$ acting as the client and input query $\qshift$. At the end of the computation, $\mpcserver{1}$ and $\mpcserver{2}$ obtain the bit vector $\bitvector{}_{\qshift}$, which corresponds to $\qshift$. However, as discussed in $\PIRSUMA$, the servers require an XOR sharing corresponding to the actual query $\query$ to continue the computation. $\mpcserver{1}$ and $\mpcserver{2}$ do this by using the shift correction value $\qshiftcorr{\query}$ received from $\participant{i}$. Both $\mpcserver{1}$ and $\mpcserver{2}$ will perform a right cyclic shift of their $\bitvector{}_{\qshift}$ shares by $\qshiftcorr{\query}$ units. A negative value for $\qshiftcorr{\query}$ indicates a cyclic shift to the left. 

It is easy to see that the XOR shares obtained after the cyclic shift corresponds to the bit vector $\bitvector{}_{\query}$. To further optimise $\participant{i}$'s communication, $\participant{i}$ and servers $\mpcserver{1}, \mpcserver{2}$ non-interactively generate random shift correction values $\qshiftcorr{\query}$ first using the shared-key setup\iffullversion ~(cf.~\sect{app:shared-key})\fi , and only the corresponding $\qshift$ values are communicated to $\mpcserver{0}$. The rest of the protocol is similar to $\PIRSUMA$, and the formal protocol is shown in \boxref{fig:pir-sumB-protocol}. In terms of malicious participants, $\PIRSUMB$ has an advantage over $\PIRSUMA$ as there is no need to use a verifiable DPF to protect against malicious $\participant{i}$, because the semi-honest server $\mpcserver{0}$ generates the DPF key instead of~$\participant{i}$.

\paragraph*{Improving Verification Costs in $\PIRSUMB$}

A large amount of communication is used in the $\PIRSUMB$ protocol to verify malicious participants. More specifically, in Step 8 of \boxref{fig:pir-sumB-protocol}, $2\dbsize$ bits are communicated towards $\mpcserver{0}$ to ensure the distinctness of the queries made by the participant $\participant{i}$. We note that allowing a small amount of leakage to $\mpcserver{0}$ could improve this communication and is discussed next.

Consider the following modification to the $\PIRSUMB$ protocol. Instead of sampling $\qshiftcorr{\query}$ for each query $\query \in \querySet$ (cf. Step 2 in  \boxref{fig:pir-sumB-protocol}), $\participant{i},\mpcserver{1}$, and $\mpcserver{2}$ sample only one random shift value $\qshiftcorr{}$ and uses it for all $\threshold$ instances. Since the queries must be distinct, $\participant{i}$ is forced to send distinct $\qshift$ values to $\mpcserver{0}$ in Step 3 of \boxref{fig:pir-sumB-protocol}. If not, $\mpcserver{0}$ can send $\abort$ to $\mpcserver{1}$ and $\mpcserver{2}$ at this step, eliminating the need for communication-intensive verification.
The relative distance between the queried indices would be leaked to $\mpcserver{0}$ as a result of this optimization. In concrete terms, if we use the same $\qshiftcorr{}$ value for any two queries $\query_m, \query_j \in \querySet$, then $\query_m - \query_n = \qshift_m - \qshift_n$. Because $\mpcserver{0}$ sees all $\qshift$ values in the clear, it can deduce the relative positioning of $\participant{i}$'s actual queries. However, since $\mpcserver{0}$ has no information about the underlying database $\db$, this leakage may be acceptable for some applications.

\subsubsection{Summary of communication costs} 
\tab{tab:pir_sum_costs} summarises the communication cost for our two \PIRSUM{} approaches for instantiating $\funcPirSum$ over a database of size $\dbsize$ with $\threshold$ PIR queries per client.
\fig{fig:plot-encounter} and \tab{tab:comm_cost_avg_enc} show the concrete communication costs of \pirname{} for varying average numbers of encounters~$\avgencounters{}$, which corresponds to the threshold~$\threshold$ in $\PIRSUM$. Our largest benchmarks with a population size of 10M (corresponding to 10M database blocks, leading to a database size of around 1 GBytes for block size $\ell=128$ bits) with 500 average encounters (corresponding to threshold $\threshold=500$) only require around 500 KBytes communication per participant for $\PIRSUMA$ and less than 20 KBytes for $\PIRSUMB$.

\smallskip
\begin{table*}[htb!]
	\centering
		\begin{NiceTabular}{rrr}
			\toprule
			Stage & $\PIRSUMA$ & $\PIRSUMB$ \\
			\midrule
			$\participant{i}$ to servers in $\mpcservers{}$
			& $2\threshold (\securityparam + 2) \log (\dbsize / \securityparam) + 4\threshold \lambda$ 
			& $\threshold \log \dbsize$\\
			Server to server
			& $0$ 
			& $2\threshold (\securityparam + 2) \log (\dbsize / \securityparam) + 4\threshold \lambda$\\
			Servers in $\mpcservers{}$ to $\participant{i}$ 
			& $\threshold \cdot 2 \ell$ & $\threshold \cdot 2 \ell$\\
			\midrule
			+ Verification (mal.) 
			& $2 \dbsize + 2 + \delta$
			& $2 \dbsize + 2$\\
			\bottomrule
		\end{NiceTabular}
    \medskip
	\caption{Summary of communication costs for $\PIRSUM$ to retrieve sum of $\threshold$ indices from a database with $\dbsize$ elements. $\lambda$~denotes the AES key size ($\lambda = 128$ in~\cite{BonehBCGI21}), $\ell$~denotes the block size in bits ($\ell = 128$ in this work), and $\delta$~denotes the constant involved in the verifiable DPF approach~\cite{boyle2016fss}\iffullversion~(cf.~\sect{main:fss-pir-scheme}).\fi.}
	\label{tab:pir_sum_costs}
\end{table*}

\section{Evaluation}
\label{sec:evaluation}
In this section, we evaluate and compare the computation and communication efficiency of our \iffullversion two \ourname{} protocols \else \pirname{} protocol \fi presented in~\sect{sec:instan}. A fully-fledged implementation, similar to existing contact tracing apps, would necessitate collaboration with industry partners to develop a real-world scalable system for national deployment. Instead, we carry out a proof-of-concept implementation and provide micro benchmark results for all major building blocks.\footnote{Note that we are not attempting to create the most efficient instantiation. More optimizations will undoubtedly improve efficiency, and our protocols can be heavily parallelized with many servers. Instead, our goal here is to demonstrate the viability of \ourname{} protocols for large-scale deployment.}
We focus on the simulation phase for benchmarking, which is separate from the token generation phase. The simulations can ideally be done overnight while mobile phones are charging and have access to a high-bandwidth WiFi connection. According to studies~\cite{walch2016global,Woollaston}, sleeping habits in various countries provide a time window of several hours each night that can be used for this purpose.

\paragraph{Setup and Parameters} 

We run the benchmarks on the server-side with three servers (two for FSS-PIR and one as a helper server as discussed in \sect{sec:reduce-comm-pir}) with Intel Core i9-7960X CPUs@2.8 GHz and 128 GB RAM connected with 10 Gbit/s LAN and 0.1 s RTT. The client is a Samsung Galaxy S10+ with an Exynos 9820@2.73 GHz and 8GB RAM. 
\iffullversion
As Android does not allow third-party developers to implement applications for Android's TEE Trusty~\cite{androidtrusty}, we use hardware-backed crypto operations already implemented by Android instead. 
\fi
We use the code of \cite{kales2019} to instantiate FSS-PIR.
We implement the AGCT in C++ and follow previous work on cuckoo hashing~\cite{pinkas2018scalable} by using tabulation hashing for the hash functions.

We instantiate our protocols in \ourname{} with $\kappa = 128$ bit security. 
\iffullversion
We use RSA-2048 as the encryption scheme in \teename{} since Android offers a hardware-backed implementation. We omit the overhead of remote attestation for the sake of simplicity. 
\fi
For \pirname, we use the FSS-PIR scheme of~\cite{boyle2016fss,kales2019} as the baseline. The addresses are hashed with SHA-256 and trimmed to $40 - 1 + \log_2(\numparticipants \cdot \avgencounters{})$ bits, where $\numparticipants$ is the number of participants and $\avgencounters{}$ represents the average number of encounters per participant per simulation step. We set $\avgencounters{} = 100$ while benchmarking based on numbers provided by research on epidemiological modeling~\cite{mossong2008social,del2007mixing}. To avoid cycles when inserting $n$ messages into the AGCT~(cf.~\sect{app:agct}), we set its size to $10n$. This can be further improved as discussed in~\sect{app:agct}~\iffullversion\cite{pinkas2020psi,pinkas2018scalable,pinkas2018efficient}\else\cite{pinkas2020psi}\fi.
A typical simulation step corresponds to one day, such that 14 simulation steps can simulate two weeks. 

\vspace{-3mm}
\subsection{Communication Complexity}
\label{sec:bench-comm} 
Here, we look at the communication costs that our protocols incur. To analyse the scalability of our protocols, we consider $\numparticipants$ participants ranging from thousand (1K) to twenty million (20M). 
\tab{tab:comm-summary-simstep} summarises the  communication costs of each participant as well as the communication servers ($\mpcservers$) for one simulation step in a specific simulation. One simulation step includes all protocol steps, beginning with participants locally computing their infection likelihood $\likelihood$ and ending with them obtaining their cumulative infection likelihood $\clikelihood$ for that step.

\begin{table*}[htb!]
    \centering
	\resizebox{0.98\textwidth}{!}{
		\begin{NiceTabular}{rrrrrrrrrrrr}[notes/para]
			\toprule
			\Block[c]{2-1}{Entities} & \Block[c]{2-1}{Protocol} 
			& \Block[c]{1-10}{Population ($\numparticipants$)} & & & & & & & & &\\
			\cmidrule{3-12}
			& & 1K & 10K & 50K & 100K & 500K & 1M & 2M & 5M & 10M & 20M\\
			\midrule \midrule
			\iffullversion \Block[c]{3-1}{Participants in $\participantset$ \\ (in KB)} 
			& \teename{} ~(\sect{sec:teename})
			& 16.00 & 16.00 & 16.00 & 16.00 & 16.00 & 16.00 & 16.00 & 16.00 & 16.00 & 16.00 \\ \cmidrule{2-12}
			\else \Block[c]{2-1}{Participants in $\participantset$ \\ (in KB)}  \fi & \pirname{}: $\PIRSUMA$ ~(\sect{sec:pir-sumA})
			& 51.63 & 62.42 & 69.97 & 73.22 & 80.77 & 84.02 & 87.27 & 91.56 & 94.81 & 98.06 \\  \cmidrule{2-12}
			& \pirname{}: $\PIRSUMB$ ~(\sect{sec:pir-sumB})
			&  3.45 & 3.49 & 3.52 & 3.53 & 3.56 & 3.57 & 3.59 & 3.60 & 3.62 & 3.63 \\ 
			\midrule \midrule
            \iffullversion \Block[c]{2-1}{Servers in $\mpcservers$ \\ (in GB)} 
            & \teename{} ~(\sect{sec:teename})
            & 0.02 & 0.19 & 0.96 & 1.92 & 9.60 & 19.20  & 38.40 & 96.00 & 192.00 & 384.00 \\  \cmidrule{2-12}
            \else Servers in $\mpcservers$ (in GB) \fi & \pirname{} ~~(\sect{sec:pirsum})
            & 0.01 & 0.10 & 0.48 & 0.96 & 4.80 & 9.60 & 19.20 & 48.00 & 96.00 & 192.00 \\ 
            \bottomrule
		\end{NiceTabular}
	  }
      \caption{Communication costs per simulation step in our \ourname{} instantiations.}
      \label{tab:comm-summary-simstep}
\end{table*}

\subsubsection{Participant Communication}
\iffullversion
As shown in \tab{tab:comm-summary-simstep}, a participant in \teename{} requires just 16KB of total communication in every simulation step, and this is independent of the population on which the simulation is done. This is because each participant will only send and receive infection likelihood messages related to its encounters. While the value in the table corresponds to an average of 100 encounters ($\avgencounters{} = 100$), we depict the participants' communication in \fig{fig:plot-encounter} with varied $\avgencounters{}$ ranging from 10 to 500 for a population of 10M. Note that a 2-week simulation with $\avgencounters{} = 500$ can be completed by a participant in \teename{} with roughly 1MB of communication. 
\fi

\iffullversion
Unlike \teename, participant communication in both $\PIRSUMA$ and $\PIRSUMB$ increases for larger populations as the corresponding database size increases. The communication, however, is only sub-linear in the database size\footnote{DB size of $10n$ and communication costs of \pirname{} can be reduced by optimizing the database size by extending the database by only $d + \lambda$ bins, where $d$ is the upper bound of double collisions and $\lambda$ is an error parameter (cf.~\sect{app:agct} and \cite{pinkas2020psi}).}.

\else
We depict the participants' communication in \fig{fig:plot-encounter} with varied average number of encounters $\avgencounters{}$ ranging from 10 to 500 for a population of 10M. The participant communication increases sub-linearly with the database size.
\fi
\iffullversion
In particular, the participant's communication in $\PIRSUMA$ ranges from 51.63KB to 98.06KB, with the higher cost over \teename{} attributed to the size of DPF keys used in the underlying FSS-PIR scheme~\cite{boyle2016fss}, as discussed in \sect{sec:pirsum}. 
\else
In particular, the participant's communication in $\PIRSUMA$ ranges from 51.63KB to 98.06KB.
\fi
The communication in $\PIRSUMB$, on the other hand, is about 3.5KB for all participant sizes we consider. This reduced communication is due to the optimization in $\PIRSUMB$, which offloads the DPF key generation task to the helper server $\mpcserver{0}$ (cf.~\sect{sec:pir-sumB}). A participant in $\PIRSUMA$ must communicate approximately 7MB of data for a 2-week simulation for a 10M population with $\avgencounters{} = 500$, whereas it is only 0.25MB in the case of $\PIRSUMB$.

\begin{figure}[htb!]
    \centering
    \begin{minipage}{0.99\columnwidth}
    \centering
	\resizebox{0.85\textwidth}{!}{
    \begin{tikzpicture}
        \begin{axis}[
        axis lines=middle,
        ymin=0,
        x label style={at={(current axis.right of origin)},anchor=north, right=2mm, font=\Large},
        y label style={font=\Large},
        legend style={at={(0.05,0.7)},anchor=west,font=\large},
        legend cell align={right},
        cycle list name=exotic,
        xlabel= $\avgencounters{}$,
        ylabel=Comm. (in KB),
        xticklabel style = {rotate=30,anchor=east,font=\Large},
        yticklabel style = {font=\Large},
        enlargelimits = false,
        xticklabels from table={encountersplot.dat}{Population},xtick=data]
        \iffullversion\addplot+[thick,mark=star,blue] table [y= TEEC,x=X]{encountersplot.dat};
        \addlegendentry{\teename{}}]\fi
        \addplot+[thick,mark=square*,red] table [y= PIRC,x=X]{encountersplot.dat};
        \addlegendentry{\pirname: $\PIRSUMA$}]
        \addplot+[thick,mark=*,darkgreen] table [y= PIR2C,x=X]{encountersplot.dat};
        \addlegendentry{\pirname: $\PIRSUMB$}]
        \end{axis}
    \end{tikzpicture}
	}
    \vspace{-1mm}
    \subcaption{Participant's communication with varying $\avgencounters{}$ for a population of 10M.}\label{fig:plot-encounter}
    \end{minipage}
    \begin{minipage}{0.99\columnwidth}
    \centering
    \resizebox{0.85\textwidth}{!}{
    \begin{tikzpicture}
        \begin{axis}[
        axis lines=middle,
        ymin=0,
        x label style={at={(current axis.right of origin)},anchor=north, above=2mm, font=\Large},
        y label style={font=\Large},
        legend style={at={(0.05,0.7)},anchor=west,font=\large},
        legend cell align={left},
        cycle list name=exotic,
        xlabel=Population ($\numparticipants$),
        ylabel=Comm. (in GB),
        xticklabel style = {rotate=30,anchor=east,font=\Large},
        yticklabel style = {font=\Large},
        enlargelimits = false,
        xticklabels from table={shuffleservers.dat}{Population},xtick=data]
        \iffullversion
        \addplot+[thick,mark=square*] table [y=TEE,x=X]{shuffleservers.dat};
        \addlegendentry{\teename{}}
        \fi
        \addplot+[thick,mark=square*] table [y= PIR,x=X]{shuffleservers.dat};
        \addlegendentry{\pirname{}}]
        \addplot+[thick,mark=square*] table [y= PIRB,x=X]{shuffleservers.dat};
        \addlegendentry{\pirname{}$^\star$}]
        \end{axis}
    \end{tikzpicture}
	}
    \vspace{-1mm}
    \subcaption{Servers' communication per simulation step for varying population. $^\star$ denotes the results for optimized bit addresses in \pirname{}~(cf. full version~\cite{ARXIV:RIPPLE}).}\label{fig:shuffle-teepir}
    \end{minipage}
    \caption{Communication Costs of \ourname.}
    \label{fig:plot-encounter-communication}
\end{figure}


\tab{tab:comm_cost_avg_enc} provides the communication per participant for multiple population sizes in \iffullversion \teename{}, $\PIRSUMA$, and $\PIRSUMB$, \else $\PIRSUMA$ and $\PIRSUMB$, \fi while varying the average number of encounters $\avgencounters{}$ per simulation step from 10 to 500. 
\iffullversion
The communication cost in \teename{} is independent of the population size and grows linearly in $\avgencounters{}$. 
A similar trend can be seen in \pirname{} except that the cost increases sub-linearly with the population size due to using the FSS-based PIR scheme in \pirname. 
\else
The communication cost in \pirname{} increases sub-linearly with the population size and linearly in $\avgencounters{}$.
\fi

\subsubsection{Server Communication}
The servers' communication is primarily attributed to the anonymous communication channel they have established, which provides unlinkability and, thus, privacy to the participants' messages. Communicating $M$ messages through the channel requires the servers to communicate \iffullversion $2M$ messages in \teename, and \fi $3M$ messages in \pirname{}\iffullversion\else~(cf. \cite[\S B.3]{ARXIV:RIPPLE})\fi. 
\iffullversion
When it comes to concrete values, however, the server communication in \pirname{} is half that of \teename{}, as shown in \tab{tab:comm-summary-simstep}. This is due to the larger message size in \teename{} due to the use of public-key encryption.
\fi


For a population of 10M, the servers \iffullversion in \teename{} must communicate 192GB of data among themselves, whereas \pirname{} requires 96GB.\else must communicate 96GB of data among themselves.\fi
Setting the proper bit length for the address field in the messages can further reduce communication.  For example, a population of 20M with $\avgencounters{} = 100$ can be accommodated in a 70-bit address field. Using this optimization will result in an additional 23~\% reduction in communication at the servers, as shown in \iffullversion\tab{tab:comm-micro-simstep}. \fig{fig:shuffle-teepir} captures these observations better,  and \tab{tab:comm-micro-simstep} and \tab{tab:comm_cost_avg_enc} in the next subsection provide a detailed analysis of the concrete communication costs.
\else
\fig{fig:shuffle-teepir} and \tab{tab:comm_cost_avg_enc}.
\fi

\begin{table}[htb!]
	\centering
	\resizebox{\columnwidth}{!}{
		\begin{NiceTabular}{r r r r r r r}
			\toprule 
			\Block[c]{2-1}{Population $\numparticipants$} 
			& \Block[c]{2-1}{Protocol} 
			& \Block[c]{1-5}{$\avgencounters{}$} & & & & \\ \cmidrule{3-7} 
			& & 10 & 50 & 100 & 250 & 500 \\
			\midrule
	       \iffullversion\Block[c]{3-1}{100K} 
	       &\teename{} ~~(\sect{sec:teename}) & 1.60 & 8.00 & 16.00 & 40.00 & 80.00 \\\cmidrule{2-7}
	     \else \Block[c]{2-1}{100K} \fi  &\pirname{}: $\PIRSUMA$ & 6.24 & 34.99 & 73.22 & 193.79 &403.83\\\cmidrule{2-7}
	       &\pirname{}: $\PIRSUMB$& 0.35 & 1.76 & 3.53 & 8.87 & 17.81\\
	       \midrule\midrule
	       \iffullversion\Block[c]{3-1}{1M} 
	       &\teename{} ~~(\sect{sec:teename}) & 1.60 & 8.00 & 16.00 & 40.00 & 80.00\\\cmidrule{2-7}
	    \else\Block[c]{2-1}{1M} \fi   &\pirname{}: $\PIRSUMA$& 7.32 & 40.38 & 84.02 & 220.78 &457.81\\\cmidrule{2-7}
	       & \pirname{}: $\PIRSUMB$ & 0.35 & 1.78 & 3.57 & 8.98 & 18.01\\\midrule\midrule
	       \iffullversion\Block[c]{3-1}{10M} 
	       &\teename{} ~~(\sect{sec:teename}) & 1.60 & 8.00 & 16.00 & 40.00 & 80.00 \\\cmidrule{2-7}
	    \else\Block[c]{2-1}{10M} \fi   &\pirname{}: $\PIRSUMA$ & 8.40 &45.78 & 94.81 & 247.77 & 511.79\\\cmidrule{2-7}
	       & \pirname{}: $\PIRSUMB$ &0.36 & 1.80 & 3.62 & 9.08 & 18.22\\
		   \bottomrule
		\end{NiceTabular}
	}
	\smallskip
	\caption{Communication (in KB) per participant in a simulation step for varying average numbers of encounters $\avgencounters{}$ and population sizes $\numparticipants$.}
	\label{tab:comm_cost_avg_enc}
\end{table}


\iffullversion
\subsubsection{Communication Micro Benchmarks}

\tab{tab:comm-micro-simstep} details the communication costs per simulation step at various stages in our instantiations of \ourname. We find that a participant's communication costs are very low compared to the overall costs. In \teename, a participant communicates at most 268 KB and incurs a runtime of 92 seconds over a two-week simulation over a population of one million. In $\PIRSUMB$, the cost is reduced to 100 KB and 40 seconds of runtime. Communication increases to 1.2~MB in $\PIRSUMA$ due to the participant's handling of DPF keys.
 
Finally, \tab{tab:comm-micro-simstep} does not include costs for verification against malicious participants since they can be eliminated using server $\mpcserver{0}$ (cf.~\sect{sec:pir-sumB}) or sketching algorithms similar to those in~\cite{boyle2016fss}.


\begin{table*}[t]
	\centering
    \resizebox{\textwidth}{!}{        
        \begin{NiceTabular}{rrrrrrrrrrrr}[notes/para] 
            \toprule
            \Block[c]{2-1}{Stages of \ourname} & \Block[c]{2-1}{Protocol\tabularnote{$\CIRCLE$ - $128$-bit address for \pirname{} and $\LEFTcircle$ - $40 - 1 + \log_2(\numparticipants \cdot \avgencounters{})$ bit address for \pirname{}.}} 
            & \Block[c]{1-10}{Population ($\numparticipants$)} & & & & & & & & &\\
            \cmidrule{3-12}
            & & 1K & 10K & 50K & 100K & 500K & 1M & 2M & 5M & 10M & 20M\\
            \midrule \midrule
            \Block[c]{3-1}{Message Generation\\ by $\participant{i} \in \participantset$ \\ (in KB)} 
            & \teename{} ~~(\sect{sec:teename})\tabularnote{Includes registration of public keys with the exit node $\exitnode$.}
            & 12.80 & 12.80 & 12.80 & 12.80 & 12.80 & 12.80 & 12.80 & 12.80 & 12.80 & 12.80 \\ \cmidrule{2-12}
            & \pirname{}: $\CIRCLE$ ~~(\sect{sec:pirname})& 3.20 & 3.20 & 3.20 & 3.20 & 3.20 & 3.20 & 3.20 & 3.20 & 3.20 & 3.20 \\
            \cmidrule{2-12}
            &\pirname{}: $\LEFTcircle$ ~~(\sect{sec:pirname})& 2.30 & 2.34 & 2.38 & 2.39 & 2.41 & 2.43 & 2.44 & 2.45 & 2.46 & 2.48 \\
            \midrule \midrule
            \Block[c]{3-1}{Secure Shuffle by $\mpcservers$ \\ (in GB)} 
            & \teename{} ~~(\sect{sec:teename})
            & 0.02 & 0.19 & 0.96 & 1.92 & 9.60 & 19.20 & 38.40 & 96.00 & 192.00 & 384.00 \\  \cmidrule{2-12}
            & \pirname{} - $\CIRCLE$ ~~(\sect{sec:pirname})
            & 0.01 & 0.10 & 0.48 & 0.96 & 4.80 & 9.60 & 19.20 & 48.00 & 96.00 & 192.00 \\ 
            \cmidrule{2-12}
            & \pirname{} - $\LEFTcircle$ ~~(\sect{sec:pirname})
            & 0.01 & 0.07 & 0.36 & 0.72 & 3.62 & 7.28 & 14.63 & 36.75 & 73.88 & 148.50 \\ 
            \midrule \midrule
            \Block[c]{4-1}{Output Computation\\ by $\participant{i} \in \participantset$ \\ (in KB)\tabularnote{includes message download, decryption/PIR queries, summation.}} 
            & \teename{} ~~(\sect{sec:teename})
            & 6.40 & 6.40 & 6.40 & 6.40 & 6.40 & 6.40 & 6.40 & 6.40 & 6.40 & 6.40 \\ \cmidrule{2-12}
            & $\PIRSUMA$ - $\CIRCLE$ ~~(\sect{sec:pir-sumA})
            & 51.36 & 62.42 & 69.97 & 73.22 & 80.77 & 84.02 & 87.27 & 91.56 & 94.81 & 98.06 \\
            \cmidrule{2-12}
            & $\PIRSUMA$ - $\LEFTcircle$ ~~(\sect{sec:pir-sumA})
            &26.48 & 32.64 & 37.69 & 39.82 & 44.77 & 47.05 & 49.38 & 52.33 & 54.77 & 57.26\\
            \cmidrule{2-12}
            & $\PIRSUMB$ ~~(\sect{sec:pir-sumB})
            &3.45 & 3.49 & 3.52 & 3.53 & 3.56 & 3.57 & 3.59 & 3.60 & 3.62 & 3.63\\
            \bottomrule
        \end{NiceTabular}
    }
    \caption{Detailed communication costs per simulation step in \ourname{}.}\label{tab:comm-micro-simstep}
\end{table*}

\fi

\vspace{-3mm}
\subsection{Computation Complexity}
\label{sec:bench-runtime} 
\makeparafit
This section focuses on the runtime, including computation time and communication between entities. \tab{tab:participant-computation} summarizes the computation time with respect to a participant $\participant{i}$ for a two-week simulation over a half-million population.
\iffullversion
The longer computation time in \teename{}, as shown in \tab{tab:participant-computation}, is due to the public key encryption and decryption that occurs within the mobile device's TEE. This cost, however, is independent of population size and scales linearly with the average number of encounters, denoted by $\avgencounters{}$. 
In particular, for a 14-day simulation with a population of half a million, $\participant{i}$ in \teename{} needs approximately 43.7 seconds to perform the encryption and decryption tasks and may require additional time for the remote attestation procedure, which is not covered in our benchmarks. 
\fi
$\participant{i}$'s computation time in \pirname\iffullversion, on the other hand, is significantly lower and \fi ~is at most $5$ milliseconds for $\PIRSUMB$, while it increases to around 165 milliseconds for $\PIRSUMA$. The increased computation time in $\PIRSUMA$ is due to DPF key generation, which scales sub-linearly with population size.

\begin{table}[htb!]
	\centering
	\resizebox{\columnwidth}{!}{
	\begin{NiceTabular}{r r r r}
		\toprule
		&\Block[c]{1-3}{Per Simulation Step / Simulation ($\simsteps = 14$)} & &\\
		\cmidrule{2-4}
		&\Block[c]{1-1}{\small Message  Generation\\\small (in ms)} 
		&\Block[c]{1-1}{\small PIR  Queries\\\small (in ms)}
		&\Block[c]{1-1}{\small Output   Computation\\\small (in ms)}
		\\
		\midrule\midrule
		\iffullversion\teename{} 
		& 1.12 / 80.00 & -           & 42.56 / 3040.00 \\
		\midrule
        \fi
		$\PIRSUMA$ 
		& 0.30 / 4.26  &11.73 / 160 & 4.8e-2 / 6.72e-1\\
		\midrule
		$\PIRSUMB$ 
		& 0.30 / 4.26 & 3.0e-3 / 4.2e-2 & 4.8e-2 / 6.72e-1\\
		\bottomrule
	\end{NiceTabular}
    }
    \smallskip
	\caption{Average participant computation times per simulation step distributed across various tasks. Values are obtained using a mobile for a population of 500K with $\avgencounters{}=100$.}
    \label{tab:participant-computation}
\end{table}

\iffullversion
In \fig{fig:runtime-full}, we plot the overall runtime of our two instantiations in \ourname{} for a full simulation of 2 weeks over various populations ranging from 1K to 500K. After a population of 100K, the runtime of \pirname{} begins to exceed that of \teename{} due to an increase in database size, which results in longer data transfer times.
More details regarding computation time is presented in \tab{tab:comp-micro-simstep}. Note that the runtimes in \fig{fig:runtime-full} include runtime for computation and communication of the secure shuffle among the servers for anonymous communication and among servers and clients for the PIR in \pirname.

\vspace{2mm}
\begin{figure}[htb!]
    \centering
	\resizebox{0.4\textwidth}{!}{
		\begin{tikzpicture}
			\begin{axis}[
				axis lines=middle,
				ymin=0,
				x label style={at={(current axis.right of origin)},anchor=north, above=2mm},
				legend style={at={(0.05,0.7)},anchor=west},
				legend cell align={left},
				cycle list name=exotic,
				xlabel=Population ($\numparticipants$),
				ylabel=Time. (in sec),
				xticklabel style = {rotate=30,anchor=east},
				enlargelimits = false,
				xticklabels from table={runtime.dat}{Population},xtick=data]
				\iffullversion\addplot+[thick,mark=*,blue] table [y=TEEF,x=X]{runtime.dat};
				\addlegendentry{\teename{}}\fi
				\addplot+[thick,mark=square*,red] table [y= PIRF,x=X]{runtime.dat};
				\addlegendentry{\pirname{}}]
			\end{axis}
		\end{tikzpicture}
	}
    \captionsetup{font={small}}
	\caption{Runtime per simulation in \ourname{} (14 days).}
	\label{fig:runtime-full}
\end{figure}

\else

We measured the overall runtime of \pirname{} for a simulation of 2 weeks (including secure shuffling and anonymous communication). The simulation over a population of $p$=10K takes 74s, and 453s for a population of $p$=500K.

\fi
\iffullversion
\vspace{-2mm}
\subsubsection{Computation Micro Benchmarks}

\tab{tab:comp-micro-simstep} contains the computation costs per simulation at the different stages of our instantiations of \ourname's. As visible, data transfer time as part of anonymous communication through servers accounts for the majority of computation time and begins to affect overall performance as the population grows. Our system crashed due to memory constraints after a population of 500K while running the experiments. This will not be the case in a real-world deployment of powerful servers linked by high-speed networks. Similar as w.r.t. communication, participants' computation costs are very low in comparison to the overall costs.

\begin{table*}[t!]
	\centering
    \begin{center}
    \begin{NiceTabular}{rrrrrrrr}[notes/para]
        \toprule
        \Block[c]{2-1}{Stages of \ourname} & \Block[c]{2-1}{Protocol} 
        & \Block[c]{1-6}{Population ($\numparticipants$)} & & & & & \\
        \cmidrule{3-8}
        & & 1K & 10K & 50K & 100K & 500K & 1M \\
        \midrule \midrule
        \Block[c]{2-1}{Message Generation by $\participant{i} \in \participantset$ \\ (in sec)} 
        & \teename{} ~~(\sect{sec:teename})
        & 1.12 & 1.12 & 1.12 & 1.12 & 1.12 & 1.12\\ \cmidrule{2-8}
        & \pirname{}: ~~(\sect{sec:pirname})
        & 4.26e-3 & 4.26e-3 & 4.26e-3 & 4.26e-3 & 4.26e-3 & 4.26e-3\\
        \midrule \midrule
        \Block[c]{2-1}{Secure Shuffle by $\mpcservers$ \\ (in sec)} 
        & \teename{} ~~(\sect{sec:teename})
        & 0.70 & 5.20 & 25.38 & 60.77 & 211.47  & 493.33$^{\textcolor{red}{\star}}$\tabularnote{$^{\textcolor{red}{\star}}$ denotes system crash due to memory.}\\  \cmidrule{2-8}
        & \pirname{} (\sect{sec:pirname})
        & 0.78 & 6.65 & 32.36 & 71.17 & 386.68 & 1542.30$^{\textcolor{red}{\star}}$\\ 
        \midrule \midrule
        \Block[c]{3-1}{Output Computation\tabularnote{includes message download, decryption/PIR queries, summation.} \\ (in sec)} 
        & \teename{} ~~(\sect{sec:teename})
        & 44.66 & 44.66 & 44.66 & 44.66 & 44.66 & 44.66 \\ \cmidrule{2-8}
        & $\PIRSUMA$ ~~(\sect{sec:pir-sumA})
        & 32.31 & 32.33 & 32.34 & 32.35 & 32.36 & 32.37\\\cmidrule{2-8}
        & $\PIRSUMB$ ~~(\sect{sec:pir-sumB})
        & 32.20 & 32.20 & 32.20 & 32.20 & 32.20 & 32.20\\
        \bottomrule
    \end{NiceTabular}
    \captionof{table}{Detailed computation costs per simulation ($\simsteps=14$, i.e., 14 days) in \ourname{}.}\label{tab:comp-micro-simstep}
    \vspace{10mm}
    \end{center}
\end{table*}

\fi

\subsubsection{Battery Usage} The token generation phase in RIPPLE consumes the most amount of mobile battery as this phase is active throughout the day. However, the main interaction during this phase is exchanging two random tokens, which is similar to existing contact tracing apps. This usage could be optimized by mobile OS providers like Apple and Google, as discussed by Vaudenay et al.~\cite{vaudenay2020analysis} and Avitabile et al.~\cite{Avitabile2020} in the context of contact tracing apps. Their technology enables an app to run in the background, thus, significantly improving battery life, which is otherwise impossible for a standard third-party mobile application. Additionally, \ourname{} could offer users the choice to participate only in simulations while charging so as not to cause any unwanted battery drain.

\subsubsection{Comparison to Related Work} Note that no experimental comparison to related work is (and can be) done, as \ourname{} is the first distributed privacy-preserving epidemiological modeling system. Established contact tracing apps, such as the SwissCovid\footnote{\url{https://github.com/SwissCovid}}, the German Corona-Warn-App\footnote{\url{https://www.coronawarn.app/en/}}, or the Australian COVIDSafe\footnote{\myurl{https://www.health.gov.au/resources/apps-and-tools/covidsafe-app}} only record contacts for notifying contacts of infected people. Concretely, contact tracing basically relates to \ourname{}'s token generation phase, while the other three phases (simulation initialization, simulation execution, and result aggregation, cf.~\sect{sec:phases-framework}) are not covered by any contact tracing system. Crucially, the main contribution of our work is how to realize the simulation execution, which has never been done before. Hence, no meaningful comparison between the systems is possible due to differences in the fundamental functionalities.

\subsubsection{Code availability}
Available at \href{ https://encrypto.de/code/RIPPLE}{\texttt{https://encrypto.de/code/RIPPLE}}.

\iffullversion
\paragraph{Summmary} Our benchmarking using the proof-of-concept implementation demonstrated the \ourname{} framework's viability for real-world adaptation. One of the key benefits of our approaches is that participants have very little work to do. The system's efficiency can be further improved with appropriate hardware and optimized (non-prototype) implementations.
\fi

\section*{Acknowledgements}
This project received funding from the European Research Council~(ERC) under the European Union's research and innovation programs Horizon~2020 (PSOTI/850990) and Horizon Europe (PRIVTOOLS/101124778). It was co-funded by the Deutsche Forschungsgemeinschaft~(DFG) within SFB~1119 CROSSING/236615297.

\bibliographystyle{IEEEtran} 
\bibliography{references}

\begin{thebibliography}{100}
\providecommand{\url}[1]{#1}
\csname url@samestyle\endcsname
\providecommand{\newblock}{\relax}
\providecommand{\bibinfo}[2]{#2}
\providecommand{\BIBentrySTDinterwordspacing}{\spaceskip=0pt\relax}
\providecommand{\BIBentryALTinterwordstretchfactor}{4}
\providecommand{\BIBentryALTinterwordspacing}{\spaceskip=\fontdimen2\font plus
\BIBentryALTinterwordstretchfactor\fontdimen3\font minus
  \fontdimen4\font\relax}
\providecommand{\BIBforeignlanguage}[2]{{%
\expandafter\ifx\csname l@#1\endcsname\relax
\typeout{** WARNING: IEEEtran.bst: No hyphenation pattern has been}%
\typeout{** loaded for the language `#1'. Using the pattern for}%
\typeout{** the default language instead.}%
\else
\language=\csname l@#1\endcsname
\fi
#2}}
\providecommand{\BIBdecl}{\relax}
\BIBdecl

\bibitem{RIPPLE}
D.~Günther, M.~Holz, B.~Judkewitz, H.~Möllering, B.~Pinkas, T.~Schneider, and
  A.~Suresh, ``{Privacy-Preserving Epidemiological Modeling on Mobile
  Graphs},'' \emph{TIFS}, 2025.

\bibitem{vindegaard2020covid}
N.~Vindegaard and M.~E. Benros, ``{COVID-19} pandemic and mental health
  consequences: Systematic review of the current evidence,'' \emph{Brain,
  Behavior, and Immunity}, 2020.

\bibitem{maison2021challenges}
D.~Maison, D.~Jaworska, D.~Adamczyk, and D.~Affeltowicz, ``The challenges
  arising from the {COVID-19} pandemic and the way people deal with them. a
  qualitative longitudinal study,'' \emph{PloS One}, 2021.

\bibitem{taub2020new}
A.~Taub, ``A new {Covid-19} crisis: Domestic abuse rises worldwide,'' \emph{The
  New York Times}, 2020.

\bibitem{caulcutt}
C.~Caulcutt, ``Belgium introduces quarantine for monkeypox cases,''
  \emph{Politico}, 2022,
  \url{https://www.politico.eu/article/belgium-introduce-quarantine-monkeypox-case/}.

\bibitem{cooney2022}
A.~D. Christy~Cooney, ``High-risk monkeypox contacts advised to isolate,''
  \emph{BBC}, 2022, \url{https://www.bbc.com/news/uk-61546480}.

\bibitem{ecdpc2022}
E.~C. for Disease~Prevention and Control, ``Epidemiological update: Monkeypox
  outbreak,'' 2022,
  \url{https://www.ecdc.europa.eu/en/news-events/epidemiological-update-monkeypox-outbreak}.

\bibitem{reichert2020privacy}
L.~Reichert, S.~Brack, and B.~Scheuermann, ``Poster: Privacy-preserving contact
  tracing of covid-19 patients,'' \emph{IEEE S\&P}, 2021.

\bibitem{ahmed2020survey}
N.~Ahmed, R.~A. Michelin, W.~Xue, S.~Ruj, R.~Malaney, S.~S. Kanhere,
  A.~Seneviratne, W.~Hu, H.~Janicke, and S.~K. Jha, ``A survey of {COVID-19}
  contact tracing apps,'' \emph{IEEE Access}, 2020.

\bibitem{TroncosoPHSLLSP20}
C.~Troncoso, M.~Payer, J.~Hubaux, M.~Salath{\'{e}}, J.~R. Larus, W.~Lueks,
  T.~Stadler, A.~Pyrgelis, D.~Antonioli, L.~Barman, S.~Chatel, K.~G. Paterson,
  S.~Capkun, D.~A. Basin, J.~Beutel, D.~Jackson, M.~Roeschlin, P.~Leu,
  B.~Preneel, N.~P. Smart, A.~Abidin, S.~Gurses, M.~Veale, C.~Cremers,
  M.~Backes, N.~O. Tippenhauer, R.~Binns, C.~Cattuto, A.~Barrat, D.~Fiore,
  M.~Barbosa, R.~Oliveira, and J.~Pereira, ``Decentralized privacy-preserving
  proximity tracing,'' \emph{{IEEE} Data Eng. Bull.}, 2020.

\bibitem{vaudenay2020}
S.~Vaudenay, ``Centralized or decentralized? the contact tracing dilemma,''
  Cryptology ePrint Archive, Report 2020/531, 2020,
  \url{https://ia.cr/2020/531}.

\bibitem{trieu2020epione}
N.~Trieu, K.~Shehata, P.~Saxena, R.~Shokri, and D.~Song, ``Epione: Lightweight
  contact tracing with strong privacy,'' \emph{{IEEE} Data Eng. Bull.}, 2020.

\bibitem{pinkas2021hashomer}
B.~Pinkas and E.~Ronen, ``Hashomer--privacy-preserving bluetooth based contact
  tracing scheme for hamagen,'' \emph{Real World Crypto and NDSS Corona-Def
  Workshop}, 2021.

\bibitem{LueksGVBSPT21}
W.~Lueks, S.~F. G{\"{u}}rses, M.~Veale, E.~Bugnion, M.~Salath{\'{e}}, K.~G.
  Paterson, and C.~Troncoso, ``{CrowdNotifier: D}ecentralized
  privacy-preserving presence tracing,'' \emph{PoPETs}, 2021.

\bibitem{hogan2021contact}
K.~Hogan, B.~Macedo, V.~Macha, A.~Barman, X.~Jiang \emph{et~al.}, ``Contact
  tracing apps: Lessons learned on privacy, autonomy, and the need for detailed
  and thoughtful implementation,'' \emph{JMIR Medical Informatics}, 2021.

\bibitem{tupper2021}
P.~Tupper, S.~P. Otto, and C.~Colijn, ``Fundamental limitations of contact
  tracing for covid-19,'' \emph{FACETS}, 2021.

\bibitem{lewis2020}
D.~Lewis, ``Where covid contact-tracing went wrong,'' \emph{Nature}, 2020.

\bibitem{giordano2020modelling}
G.~Giordano, F.~Blanchini, R.~Bruno, P.~Colaneri, A.~Di~Filippo, A.~Di~Matteo,
  and M.~Colaneri, ``Modelling the covid-19 epidemic and implementation of
  population-wide interventions in {Italy},'' \emph{Nature Medicine}, 2020.

\bibitem{kucharski2020effectiveness}
A.~J. Kucharski, P.~Klepac, A.~J. Conlan, S.~M. Kissler, M.~L. Tang, H.~Fry,
  J.~R. Gog, W.~J. Edmunds, J.~C. Emery, G.~Medley \emph{et~al.},
  ``Effectiveness of isolation, testing, contact tracing, and physical
  distancing on reducing transmission of {SARS-CoV-2} in different settings: a
  mathematical modelling study,'' \emph{The Lancet Infectious Diseases}, 2020.

\bibitem{cooper2020sir}
I.~Cooper, A.~Mondal, and C.~G. Antonopoulos, ``A {SIR} model assumption for
  the spread of {COVID-19} in different communities,'' \emph{Chaos, Solitons \&
  Fractals}, 2020.

\bibitem{silva2020covid}
P.~C. Silva, P.~V. Batista, H.~S. Lima, M.~A. Alves, F.~G. Guimar{\~a}es, and
  R.~C. Silva, ``{COVID-ABS:} an agent-based model of {COVID-19} epidemic to
  simulate health and economic effects of social distancing interventions,''
  \emph{Chaos, Solitons \& Fractals}, 2020.

\bibitem{shinde2020forecasting}
G.~R. Shinde, A.~B. Kalamkar, P.~N. Mahalle, N.~Dey, J.~Chaki, and A.~E.
  Hassanien, ``Forecasting models for coronavirus disease (covid-19): A survey
  of the state-of-the-art,'' \emph{SN Computer Science}, 2020.

\bibitem{ThompsonEpid20}
R.~N. Thompson, ``Epidemiological models are important tools for guiding
  covid-19 interventions,'' \emph{BMC Medicine}, vol.~18, no.~1, p. 152, 2020.

\bibitem{SEIRD21}
T.~Šušteršič, A.~Blagojević, D.~Cvetković, A.~Cvetković, I.~Lorencin,
  S.~B. Šegota, D.~Milovanović, D.~Baskić, Z.~Car, and N.~Filipović,
  ``Epidemiological predictive modeling of covid-19 infection: Development,
  testing, and implementation on the population of the benelux union,''
  \emph{Frontiers in Public Health}, vol.~9, 2021.

\bibitem{davies2020effects}
N.~G. Davies, A.~J. Kucharski, R.~M. Eggo, A.~Gimma, W.~J. Edmunds, T.~Jombart,
  K.~O'Reilly, A.~Endo, J.~Hellewell, E.~S. Nightingale \emph{et~al.},
  ``Effects of non-pharmaceutical interventions on covid-19 cases, deaths, and
  demand for hospital services in the {UK}: a modelling study,'' \emph{The
  Lancet Public Health}, 2020.

\bibitem{adam2020special}
D.~Adam, ``Special report: The simulations driving the world's response to
  {COVID-19}.'' \emph{Nature}, 2020.

\bibitem{klepac2020contacts}
P.~Klepac, A.~J. Kucharski, A.~J. Conlan, S.~Kissler, M.~L. Tang, H.~Fry, and
  J.~R. Gog, ``Contacts in context: large-scale setting-specific social mixing
  matrices from the {BBC} pandemic project,'' \emph{MedRxiv}, 2020.

\bibitem{ferguson05}
N.~Ferguson, ``What would happen if a flu pandemic arose in {Asia}?''
  \emph{Nature}, 2005.

\bibitem{edmunds1997mixes}
W.~J. Edmunds, C.~O'callaghan, and D.~Nokes, ``Who mixes with whom? a method to
  determine the contact patterns of adults that may lead to the spread of
  airborne infections,'' \emph{Proceedings of the Royal Society of London.
  Series B: Biological Sciences}, 1997.

\bibitem{JMIR:BHHCGS23}
\BIBentryALTinterwordspacing
R.~Baron, N.~Hamdiui, Y.~B. Helms, R.~Crutzen, H.~M. G{\"o}tz, and M.~L. Stein,
  ``{Evaluating the Added Value of Digital Contact Tracing Support Tools for
  Citizens: Framework Development},'' \emph{JMIR Res Protoc}, 2023. [Online].
  Available: \url{https://doi.org/10.2196/44728}
\BIBentrySTDinterwordspacing

\bibitem{EJPH:CovidReview}
\BIBentryALTinterwordspacing
B.~Unim, I.~Zile-Velika, Z.~Pavlovska, L.~Lapao, M.~Peyroteo, J.~Misins, M.~J.
  Forjaz, P.~Nogueira, T.~Grisetti, and L.~Palmieri, ``{The role of digital
  tools and emerging devices in COVID-19 contact tracing during the first 18
  months of the pandemic: a systematic review},'' \emph{European Journal of
  Public Health}, 2024. [Online]. Available:
  \url{https://doi.org/10.1093/eurpub/ckae039}
\BIBentrySTDinterwordspacing

\bibitem{mcmahan2017}
B.~McMahan, E.~Moore, D.~Ramage, S.~Hampson, and B.~A. y~Arcas,
  ``Communication-efficient learning of deep networks from decentralized
  data,'' in \emph{International Conference on Artificial Intelligence and
  Statistics}, 2017.

\bibitem{BHKRW20}
A.~Bampoulidis, A.~Bruni, L.~Helminger, D.~Kales, C.~Rechberger, and R.~Walch,
  ``Privately connecting mobility to infectious diseases via applied
  cryptography,'' \emph{PoPETs}, 2022.

\bibitem{DeViti2022}
R.~De~Viti, I.~Sheff, N.~Glaeser, B.~Dinis, R.~Rodrigues, J.~Katz,
  B.~Bhattacharjee, A.~Hithnawi, D.~Garg \emph{et~al.}, ``Covault: A secure
  analytics platform,'' 2022, \url{https://arxiv.org/pdf/2208.03784.pdf}.

\bibitem{Al-TurjmanD20}
F.~Al{-}Turjman and B.~D. Deebak, ``Privacy-aware energy-efficient framework
  using the internet of medical things for {COVID-19},'' \emph{{IEEE} Internet
  Things Mag.}, vol.~3, no.~3, 2020.

\bibitem{PezzuttoRSG21}
M.~Pezzutto, N.~B. Rossell{\'{o}}, L.~Schenato, and E.~Garone, ``Smart testing
  and selective quarantine for the control of epidemics,'' \emph{Annu. Rev.
  Control.}, vol.~51, pp. 540--550, 2021.

\bibitem{BarsocchiCCDFGM21}
P.~Barsocchi, A.~Calabr{\`{o}}, A.~Crivello, S.~Daoudagh, F.~Furfari,
  M.~Girolami, and E.~Marchetti, ``{COVID-19} {\&} privacy: Enhancing of indoor
  localization architectures towards effective social distancing,''
  \emph{Array}, vol.~9, 2021.

\bibitem{bozdemir2021privacy}
B.~Bozdemir, S.~Canard, O.~Ermis, H.~M{\"o}llering, M.~{\"O}nen, and
  T.~Schneider, ``Privacy-preserving density-based clustering,'' in
  \emph{ASIACCS}, 2021.

\bibitem{ciucci2020national}
M.~Ciucci and F.~Gouard{\`e}res, ``National {COVID-19} contact tracing apps,''
  \emph{EPRS: European Parliamentary Research Service}, 2020.

\bibitem{singapur}
H.~Stevens and M.~B. Haines, ``Tracetogether: Pandemic response, democracy, and
  technology,'' 2020, \url{https://www.tracetogether.gov.sg}.

\bibitem{robert}
Inria and {Fraunhofer AISEC}, \emph{{ROBust and privacy-presERving proximity
  Tracing protocol}}, 2020,
  \url{https://github.com/ROBERT-proximity-tracing/documents}.

\bibitem{chan2020}
J.~Chan, D.~Foster, S.~Gollakota, E.~Horvitz, J.~Jaeger, S.~Kakade, T.~Kohno,
  J.~Langford, J.~Larson, P.~Sharma, S.~Singanamalla, J.~Sunshine, and
  S.~Tessaro, ``{PACT}: Privacy sensitive protocols and mechanisms for mobile
  contact tracing,'' 2020, \url{https://arxiv.org/pdf/2004.03544.pdf}.

\bibitem{brauer2008compartmental}
F.~Brauer, ``Compartmental models in epidemiology,'' in \emph{Mathematical
  Epidemiology}, 2008.

\bibitem{brauer2019simple}
F.~Brauer, C.~Castillo-Chavez, and Z.~Feng, ``Simple compartmental models for
  disease transmission,'' in \emph{Mathematical Models in Epidemiology}, 2019.

\bibitem{harko2014exact}
T.~Harko, F.~S. Lobo, and M.~Mak, ``Exact analytical solutions of the
  susceptible-infected-recovered {(SIR)} epidemic model and of the {SIR} model
  with equal death and birth rates,'' \emph{Applied Mathematics and
  Computation}, 2014.

\bibitem{Schlickeiser2021}
R.~Schlickeiser and M.~Kröger, ``Analytical modeling of the temporal evolution
  of epidemics outbreaks accounting for vaccinations,'' 2021.

\bibitem{fernandez2022estimating}
J.~Fern{\'a}ndez-Villaverde and C.~I. Jones, ``Estimating and simulating a sird
  model of covid-19 for many countries, states, and citie,'' \emph{Journal of
  Economic Dynamics and Control}, 2022.

\bibitem{gray2011stochastic}
A.~Gray, D.~Greenhalgh, L.~Hu, X.~Mao, and J.~Pan, ``A stochastic differential
  equation sis epidemic model,'' \emph{SIAM Journal on Applied Mathematics},
  2011.

\bibitem{he2020seir}
S.~He, Y.~Peng, and K.~Sun, ``Seir modeling of the covid-19 and its dynamics,''
  \emph{Nonlinear Dynamics}, 2020.

\bibitem{kermack1991}
W.~O. Kermack and A.~G. McKendrick, ``Contributions to the mathematical theory
  of epidemics---i,'' in \emph{Bulletin of Mathematical Biology}, 1991.

\bibitem{small2005}
M.~Small and C.~K. Tse, ``Small world and scale free model of transmission of
  {SARS},'' in \emph{International Journal of Bifurcation and Chaos}, 2005.

\bibitem{chen2020}
Y.-C. Chen, P.-E. Lu, C.-S. Chang, and T.-H. Liu, ``A time-dependent {SIR}
  model for covid-19 with undetectable infected persons,'' \emph{Transactions
  on Network Science and Engineering}, 2020.

\bibitem{may2021}
R.~M. May and A.~L. Lloyd, ``Infection dynamics on scale-free networks,''
  \emph{Phys. Rev. E}, 2001.

\bibitem{ferguson2006strategies}
N.~M. Ferguson, D.~A. Cummings, C.~Fraser, J.~C. Cajka, P.~C. Cooley, and D.~S.
  Burke, ``Strategies for mitigating an influenza pandemic,'' \emph{Nature},
  2006.

\bibitem{Kupferschmidt2020}
K.~Kupferschmidt, ``Case clustering emerges as key pandemic puzzle,'' 2020,
  \url{https://www.science.org/doi/full/10.1126/science.368.6493.808}.

\bibitem{TBD:GMPS21}
V.~L. Gatta, V.~Moscato, M.~Postiglione, and G.~Sperl{\`{\i}}, ``{An
  Epidemiological Neural Network Exploiting Dynamic Graph Structured Data
  Applied to the COVID-19 Outbreak},'' \emph{{IEEE Trans. Big Data}}, 2021.

\bibitem{IDM:AAY21}
R.~Alguliyev, R.~Aliguliyev, and F.~Yusifov, ``{Graph modelling for tracking
  the COVID-19 pandemic spread},'' \emph{{Infectious Disease Modelling}}, 2021.

\bibitem{luo2017inferring}
S.~Luo, F.~Morone, C.~Sarraute, M.~Travizano, and H.~A. Makse, ``Inferring
  personal economic status from social network location,'' \emph{Nature
  Communications}, 2017.

\bibitem{montjoye2013predicting}
Y.-A.~d. Montjoye, J.~Quoidbach, F.~Robic, and A.~S. Pentland, ``Predicting
  personality using novel mobile phone-based metrics,'' in \emph{International
  conference on social computing, behavioral-cultural modeling, and
  prediction}, 2013.

\bibitem{Araki0OPRT21}
T.~Araki, J.~Furukawa, K.~Ohara, B.~Pinkas, H.~Rosemarin, and H.~Tsuchida,
  ``Secure graph analysis at scale,'' in \emph{{ACM} {CCS}}, 2021.

\bibitem{goldreich2009foundations}
O.~Goldreich, \emph{Foundations of Cryptography: Volume 2, Basic
  Applications}.\hskip 1em plus 0.5em minus 0.4em\relax Cambridge University
  Press, 2009.

\bibitem{ASTRA}
H.~Chaudhari, A.~Choudhury, A.~Patra, and A.~Suresh, ``{ASTRA: High Throughput
  3PC over Rings with Application to Secure Prediction},'' in \emph{{ACM}
  {CCSW@CCS}}, 2019.

\bibitem{mishra2020delphi}
P.~Mishra, R.~Lehmkuhl, A.~Srinivasan, W.~Zheng, and R.~A. Popa, ``Delphi: {A}
  cryptographic inference service for neural networks,'' in \emph{USENIX
  Security}, 2020.

\bibitem{ABY2}
A.~Patra, T.~Schneider, A.~Suresh, and H.~Yalame, ``{ABY2.0: Improved
  Mixed-Protocol Secure Two-Party Computation},'' in \emph{{USENIX Security}},
  2021.

\bibitem{veeningen2018enabling}
M.~Veeningen, S.~Chatterjea, A.~Z. Horv{\'a}th, G.~Spindler, E.~Boersma,
  P.~van~der SPEK, O.~Van Der~Gali{\"e}n, J.~Gutteling, W.~Kraaij, and
  T.~Veugen, ``Enabling analytics on sensitive medical data with secure
  multi-party computation,'' in \emph{Medical Informatics Europe}, 2018.

\bibitem{tkachenko2018large}
O.~Tkachenko, C.~Weinert, T.~Schneider, and K.~Hamacher, ``Large-scale
  privacy-preserving statistical computations for distributed genome-wide
  association studies,'' in \emph{ASIACCS}, 2018.

\bibitem{schneider2019episode}
T.~Schneider and O.~Tkachenko, ``{EPISODE:} efficient privacy-preserving
  similar sequence queries on outsourced genomic databases,'' in
  \emph{ASIACCS}, 2019.

\bibitem{jarvinen2019pilot}
K.~J{\"a}rvinen, H.~Lepp{\"a}koski, E.-S. Lohan, P.~Richter, T.~Schneider,
  O.~Tkachenko, and Z.~Yang, ``{PILOT:} practical privacy-preserving indoor
  localization using outsourcing,'' in \emph{EUROS\&P}, 2019.

\bibitem{BeetsNS22}
C.~van~der Beets, R.~Nieminen, and T.~Schneider, ``{FAPRIL:} towards faster
  privacy-preserving fingerprint-based localization,'' in \emph{{SECRYPT}},
  2022.

\bibitem{lindell2007efficient}
Y.~Lindell and B.~Pinkas, ``An efficient protocol for secure two-party
  computation in the presence of malicious adversaries,'' in \emph{EUROCRYPT},
  2007.

\bibitem{aumann2010security}
Y.~Aumann and Y.~Lindell, ``Security against covert adversaries: Efficient
  protocols for realistic adversaries,'' \emph{Journal of Cryptology}, 2010.

\bibitem{lehmkuhl2021}
R.~Lehmkuhl, P.~Mishra, A.~Srinivasan, and R.~A. Popa, ``{MUSE: S}ecure
  inference resilient to malicious clients,'' in \emph{USENIX Security}, 2021.

\bibitem{chandran2021}
N.~Chandran, D.~Gupta, S.~L.~B. Obbattu, and A.~Shah, ``{SIMC: ML} inference
  secure against malicious clients at semi-honest cost,'' in \emph{USENIX
  Security}, 2022.

\bibitem{diekmann2012}
O.~Diekmann, H.~Heesterbeek, and T.~Britton, \emph{Mathematical Tools for
  Understanding Infectious Disease Dynamics}.\hskip 1em plus 0.5em minus
  0.4em\relax Princeton University Press, 2012.

\bibitem{hatke2020using}
G.~F. Hatke, M.~Montanari, S.~Appadwedula, M.~Wentz, J.~Meklenburg, L.~Ivers,
  J.~Watson, and P.~Fiore, ``Using bluetooth low energy {(BLE)} signal strength
  estimation to facilitate contact tracing for covid-19,'' 2020,
  \url{https://arxiv.org/ftp/arxiv/papers/2006/2006.15711.pdf}.

\bibitem{erkin2013}
Z.~{Erkin}, J.~R. {Troncoso-pastoriza}, R.~L. {Lagendijk}, and
  F.~{Perez-Gonzalez}, ``Privacy-preserving data aggregation in smart metering
  systems: An overview,'' in \emph{Signal Processing Magazine}, 2013.

\bibitem{kursawe2011}
K.~Kursawe, G.~Danezis, and M.~Kohlweiss, ``Privacy-friendly aggregation for
  the smart-grid,'' in \emph{PETS}, 2011.

\bibitem{li2010}
F.~{Li}, B.~{Luo}, and P.~{Liu}, ``Secure information aggregation for smart
  grids using homomorphic encryption,'' in \emph{International Conference on
  Smart Grid Communications}, 2010.

\bibitem{nasr2019comprehensive}
M.~Nasr, R.~Shokri, and A.~Houmansadr, ``Comprehensive privacy analysis of deep
  learning: Passive and active white-box inference attacks against centralized
  and federated learning,'' in \emph{SP}, 2019.

\bibitem{fereidooni2021safelearn}
H.~Fereidooni, S.~Marchal, M.~Miettinen, A.~Mirhoseini, H.~M{\"o}llering, T.~D.
  Nguyen, P.~Rieger, A.-R. Sadeghi, T.~Schneider, H.~Yalame \emph{et~al.},
  ``{SAFELearn:} secure aggregation for private federated learning,'' in
  \emph{IEEE Security and Privacy Workshops (SPW)}, 2021.

\bibitem{ARXIV:RIPPLE}
D.~Günther, M.~Holz, B.~Judkewitz, H.~Möllering, B.~Pinkas, T.~Schneider, and
  A.~Suresh, ``{Privacy-Preserving Epidemiological Modeling on Mobile
  Graphs},'' Cryptology ePrint Archive, Report 2020/1546, 2020,
  \url{https://ia.cr/2020/1546}.

\bibitem{alexopoulos2017mcmix}
N.~Alexopoulos, A.~Kiayias, R.~Talviste, and T.~Zacharias, ``{MCMix}: Anonymous
  messaging via secure multiparty computation,'' in \emph{USENIX Security},
  2017.

\bibitem{haines2020}
T.~Haines and J.~Müller, ``Sok: Techniques for verifiable mix nets,'' in
  \emph{CSF}, 2020.

\bibitem{eskandarian2021clarion}
S.~Eskandarian and D.~Boneh, ``Clarion: Anonymous communication from multiparty
  shuffling protocols,'' in \emph{NDSS}, 2022.

\bibitem{abraham2020}
I.~Abraham, B.~Pinkas, and A.~Yanai, ``Blinder: {MPC} based scalable and robust
  anonymous committed broadcast,'' in \emph{ACM CCS}, 2020.

\bibitem{pinkas2020psi}
B.~Pinkas, M.~Rosulek, N.~Trieu, and A.~Yanai, ``{PSI from PaXoS Fast,
  Malicious Private Set Intersection},'' in \emph{EUROCRYPT}, 2020.

\bibitem{chor1995private}
B.~Chor, O.~Goldreich, E.~Kushilevitz, and M.~Sudan, ``Private information
  retrieval,'' in \emph{{FOCS}}, 1995.

\bibitem{boyle2016fss}
E.~Boyle, N.~Gilboa, and Y.~Ishai, ``Function secret sharing: Improvements and
  extensions,'' in \emph{ACM CCS}, 2016.

\bibitem{BonehBCGI21}
D.~Boneh, E.~Boyle, H.~Corrigan{-}Gibbs, N.~Gilboa, and Y.~Ishai, ``Lightweight
  techniques for private heavy hitters,'' in \emph{{IEEE S\&P}}, 2021.

\bibitem{CGibbsB15}
H.~Corrigan{-}Gibbs, D.~Boneh, and D.~Mazi{\`{e}}res, ``Riposte: An anonymous
  messaging system handling millions of users,'' in \emph{{IEEE S\&P}}, 2015.

\bibitem{GilboaI14}
N.~Gilboa and Y.~Ishai, ``Distributed point functions and their applications,''
  in \emph{{EUROCRYPT}}, 2014.

\bibitem{walch2016global}
O.~J. Walch, A.~Cochran, and D.~B. Forger, ``A global quantification of
  “normal” sleep schedules using smartphone data,'' \emph{Science
  advances}, 2016.

\bibitem{Woollaston}
V.~Woollaston, \emph{Sleeping habits of the world revealed: The US wakes up
  grumpy, China has the best quality shut-eye and South Africa gets up the
  earliest}, 2015,
  \url{https://www.dailymail.co.uk/sciencetech/article-3042230/Sleeping-habits-world-revealed-wakes-grumpy-China-best-quality-shut-eye-South-Africa-wakes-earliest.html}.

\bibitem{androidtrusty}
Android, ``Third-party {Trusty} applications,'' unk,
  \url{https://source.android.com/security/trusty}.

\bibitem{kales2019}
D.~Kales, O.~Omolola, and S.~Ramacher, ``Revisiting user privacy for
  certificate transparency,'' in \emph{EuroS\&P}, 2019.

\bibitem{pinkas2018scalable}
B.~Pinkas, T.~Schneider, and M.~Zohner, ``Scalable private set intersection
  based on {OT} extension,'' \emph{TOPS}, 2018.

\bibitem{mossong2008social}
J.~Mossong, N.~Hens, M.~Jit, P.~Beutels, K.~Auranen, R.~Mikolajczyk,
  M.~Massari, S.~Salmaso, G.~S. Tomba, J.~Wallinga \emph{et~al.}, ``Social
  contacts and mixing patterns relevant to the spread of infectious diseases,''
  \emph{PLoS Medicine}, 2008.

\bibitem{del2007mixing}
S.~Y. Del~Valle, J.~M. Hyman, H.~W. Hethcote, and S.~G. Eubank, ``Mixing
  patterns between age groups in social networks,'' \emph{Social Networks},
  2007.

\bibitem{pinkas2018efficient}
B.~Pinkas, T.~Schneider, C.~Weinert, and U.~Wieder, ``Efficient circuit-based
  {PSI} via cuckoo hashing,'' in \emph{EUROCRYPT}, 2018.

\bibitem{vaudenay2020analysis}
S.~Vaudenay and M.~Vuagnoux, ``Analysis of swisscovid,'' Tech. Rep., 2020.

\bibitem{Avitabile2020}
G.~Avitabile, V.~Botta, V.~Iovino, and I.~Visconti, ``{Towards Defeating Mass
  Surveillance and {SARS-CoV-2}: The {Pronto-C2} Fully Decentralized Automatic
  Contact Tracing System},'' in \emph{NDSS}, 2021.

\bibitem{chaum1981untraceable}
D.~L. Chaum, ``Untraceable electronic mail, return addresses, and digital
  pseudonyms,'' \emph{Communications of the ACM}, 1981.

\bibitem{chaum1988dining}
D.~Chaum, ``The dining cryptographers problem: Unconditional sender and
  recipient untraceability,'' \emph{Journal of Cryptology}, 1988.

\bibitem{du2001study}
W.~Du, ``A study of several specific secure two party computation problems,''
  \emph{USA: Purdue University}, 2001.

\bibitem{wang2010bureaucratic}
G.~Wang, T.~Luo, M.~T. Goodrich, W.~Du, and Z.~Zhu, ``Bureaucratic protocols
  for secure two-party sorting, selection, and permuting,'' in \emph{ASIACCS},
  2010.

\bibitem{huang2012}
Y.~Huang, D.~Evans, and J.~Katz, ``Private set intersection: Are garbled
  circuits better than custom protocols?'' in \emph{{NDSS}}, 2012.

\bibitem{mohassel2013hide}
P.~Mohassel and S.~Sadeghian, ``How to hide circuits in {MPC} an efficient
  framework for private function evaluation,'' in \emph{EUROCRYPT}, 2013.

\bibitem{laur2011round}
S.~Laur, J.~Willemson, and B.~Zhang, ``Round-efficient oblivious database
  manipulation,'' in \emph{International Conference on Information Security},
  2011.

\bibitem{ekberg2014}
J.~{Ekberg}, K.~{Kostiainen}, and N.~{Asokan}, ``The untapped potential of
  trusted execution environments on mobile devices,'' in \emph{S\&P}, 2014.

\bibitem{jauernig2020}
P.~{Jauernig}, A.~{Sadeghi}, and E.~{Stapf}, ``Trusted execution environments:
  Properties, applications, and challenges,'' in \emph{S\&P}, 2020.

\bibitem{sgx2014}
Intel, ``Intel® software guard extensions programming reference,'' 2014,
  \url{https://software.intel.com/sites/default/files/managed/48/88/329298-002.pdf}.

\bibitem{trustzone2009}
ARM, ``{ARM} security technology building a secure system using trustzone
  technology,'' 2009,
  \url{https://developer.arm.com/documentation/genc009492/c}.

\bibitem{ngabonziza2016}
B.~{Ngabonziza}, D.~{Martin}, A.~{Bailey}, H.~{Cho}, and S.~{Martin},
  ``{TrustZone} explained: Architectural features and use cases,'' in
  \emph{International Conference on Collaboration and Internet Computing},
  2016.

\bibitem{ohrimenko2016}
O.~Ohrimenko, F.~Schuster, C.~Fournet, A.~Mehta, K.~Vaswani, and M.~Costa,
  ``Oblivious multi-party machine learning on trusted processors,'' in
  \emph{USENIX Security}, 2016.

\bibitem{Bayerl2020}
S.~P. Bayerl, T.~Frassetto, P.~Jauernig, K.~Riedhammer, A.-R. Sadeghi,
  T.~Schneider, E.~Stapf, and C.~Weinert, ``Offline model guard: Secure and
  private {ML} on mobile devices,'' \emph{DATE}, 2020.

\bibitem{chen2019}
G.~Chen, Y.~Zhang, and T.-H. Lai, ``{OPERA: Open Remote Attestation for Intel's
  Secure Enclaves},'' in \emph{CCS}, 2019.

\bibitem{sgx}
Intel, ``Attestation service for intel software guard extensions,'' unk,
  \url{https://api.trustedservices.intel.com/documents/sgx-attestation-api-spec.pdf}.

\bibitem{kushilevitz1997replication}
E.~Kushilevitz and R.~Ostrovsky, ``Replication is {NOT} needed: {SINGLE}
  database, computationally-private information retrieval,'' in \emph{FOCS},
  1997.

\bibitem{angel2018pir}
S.~Angel, H.~Chen, K.~Laine, and S.~Setty, ``{PIR with Compressed Queries and
  Amortized Query Processing},'' in \emph{IEEE S\&P}, 2018.

\bibitem{gentry2019compressible}
C.~Gentry and S.~Halevi, ``{Compressible FHE with Applications to PIR},'' in
  \emph{TCC}, 2019.

\bibitem{corrigan2020private}
H.~Corrigan-Gibbs and D.~Kogan, ``Private information retrieval with sublinear
  online time,'' in \emph{EUROCRYPT}, 2020.

\bibitem{pinkas2015phasing}
B.~Pinkas, T.~Schneider, G.~Segev, and M.~Zohner, ``Phasing: Private set
  intersection using permutation-based hashing,'' in \emph{USENIX Security},
  2015.

\bibitem{dong2013private}
C.~Dong, L.~Chen, and Z.~Wen, ``When private set intersection meets big data:
  an efficient and scalable protocol,'' in \emph{ACM CCS}, 2013.

\bibitem{pagh2004cuckoo}
R.~Pagh and F.~F. Rodler, ``Cuckoo hashing,'' \emph{Journal of Algorithms},
  2004.

\bibitem{kirsch2010more}
A.~Kirsch, M.~Mitzenmacher, and U.~Wieder, ``More robust hashing: Cuckoo
  hashing with a stash,'' \emph{Journal on Computing}, 2010.

\bibitem{yao86}
A.~C.-C. Yao, ``{How to Generate and Exchange Secrets},'' in \emph{FOCS}, 1986.

\bibitem{damgard2013}
I.~Damg{\aa}rd, M.~Keller, E.~Larraia, V.~Pastro, P.~Scholl, and N.~P. Smart,
  ``Practical covertly secure {MPC} for dishonest majority -- or: Breaking the
  {SPDZ} limits,'' in \emph{ESORICS}, 2013.

\bibitem{lindell2015efficient}
Y.~Lindell, B.~Pinkas, N.~P. Smart, and A.~Yanai, ``Efficient constant round
  multi-party computation combining {BMR} and {SPDZ},'' in \emph{CRYPTO}, 2015.

\bibitem{Demmler2015}
D.~Demmler, T.~Schneider, and M.~Zohner, ``{ABY - A Framework for Efficient
  Mixed-Protocol Secure Two-Party Computation},'' in \emph{NDSS}, 2015.

\bibitem{BLAZE}
A.~Patra and A.~Suresh, ``{BLAZE: Blazing Fast Privacy-Preserving Machine
  Learning},'' in \emph{{NDSS}}, 2020.

\bibitem{FLASH}
M.~Byali, H.~Chaudhari, A.~Patra, and A.~Suresh, ``{{FLASH:} Fast and Robust
  Framework for Privacy-preserving Machine Learning},'' \emph{{PETS}}, 2020.

\bibitem{Trident}
H.~Chaudhari, R.~Rachuri, and A.~Suresh, ``{Trident: Efficient 4PC Framework
  for Privacy Preserving Machine Learning},'' in \emph{{NDSS}}, 2020.

\bibitem{SWIFT}
N.~Koti, M.~Pancholi, A.~Patra, and A.~Suresh, ``{SWIFT: Super-fast and Robust
  Privacy-Preserving Machine Learning},'' in \emph{{USENIX Security}}, 2021.

\bibitem{Tetrad}
N.~Koti, A.~Patra, R.~Rachuri, and A.~Suresh, ``{Tetrad: Actively Secure 4PC
  for Secure Training and Inference},'' in \emph{{NDSS}}, 2022.

\bibitem{chaum1985}
D.~Chaum, ``Security without identification: Transaction systems to make big
  brother obsolete,'' in \emph{Communications of the ACM}, 1985.

\bibitem{uprove2013}
C.~Paquin and G.~Zaveruch, \emph{U-Prove Cryptographic Specification V1.1
  (Revision 3)}, 2013, \url{http://www.microsoft.com/uprove}.

\bibitem{Danezis2003}
G.~Danezis and L.~Sassaman, ``Heartbeat traffic to counter (n-1) attacks:
  Red-green-black mixes,'' ser. WPES '03, 2003.

\bibitem{de2022lightweight}
L.~de~Castro and A.~Polychroniadou, ``Lightweight, maliciously secure
  verifiable function secret sharing,'' in \emph{EUROCRYPT}, 2022.

\bibitem{Kolobov22}
V.~I. Kolobov, E.~Boyle, N.~Gilboa, and Y.~Ishai, ``Programmable distributed
  point functions,'' in \emph{CRYPTO}, 2022.

\bibitem{GHPS22}
D.~G\"unther, M.~Heymann, B.~Pinkas, and T.~Schneider, ``{GPU}-accelerated
  {PIR} with client-independent preprocessing for large-scale applications,''
  in \emph{{USENIX} Security}, 2022.

\end{thebibliography}

\appendices

\iffullversion
\section{Related Primitives}
\label{app:related-primitives} 

In the following, we provide an overview about the (cryptographic) primitives and other techniques used in this work.

\subsection{Anonymous Communication}
\label{sec:anonycom}
To simulate the transmission of the modelled disease, \ourname{} requires anonymous messaging between participants. Mix-nets~\cite{chaum1981untraceable} and protocols based on the dining cryptographer (DC) problem \cite{chaum1988dining} were the first approaches to anonymous messaging. A fundamental technique underlying mix-nets is the execution of an oblivious shuffling algorithm that provides unlinkability between the messages before and after the shuffle. In a mix-net, so-called mix servers jointly perform the oblivious shuffling so that no single mix server is able to reconstruct the permutation performed on the input data. Past research established a wide variety of oblivious shuffle protocols based on garbled circuits~\cite{du2001study,wang2010bureaucratic,huang2012}, homomorphic encryption~\cite{huang2012}, distributed point functions~\cite{abraham2020}, switching networks~\cite{mohassel2013hide}, permutation matrices~\cite[§4.1]{laur2011round}, sorting algorithms~\cite[§4.2]{laur2011round}, and re-sharing~\cite[§4.3+4.4]{laur2011round}. Recently, the works of~\cite{Araki0OPRT21} and~\cite{eskandarian2021clarion} proposed efficient oblivious shuffling schemes using a small number of mix net servers.

\subsection{Trusted Execution Environment (TEE)} 
\teename{} (\sect{sec:teename}) assumes the availability of TEEs on the mobile devices of participants. TEEs are hardware-assisted environments providing secure storage and execution for sensitive data and applications isolated from the normal execution environment. Data stored in a TEE is secure even if the operating system is compromised, i.e., it offers confidentiality, integrity, and access control~\cite{ekberg2014,jauernig2020}.
Widely adopted TEEs are Intel SGX~\cite{sgx2014} and ARM TrustZone~\cite{trustzone2009} (often used on mobile platforms~\cite{ngabonziza2016}). Using TEEs for private computation has been extensively investigated, e.g.,~\cite{ohrimenko2016,Bayerl2020}. A process called remote attestation allows external parties to verify that its private data sent via a secure channel is received and processed inside the TEE using the intended code~\cite{chen2019,sgx}.

\subsection{Private Information Retrieval (PIR)}
The first computational single-server PIR (cPIR) scheme was introduced by Kushilevitz and Ostrovsky\ci{kushilevitz1997replication}. Recent cPIR schemes\ci{angel2018pir, gentry2019compressible} use homomorphic encryption (HE). However, single-server PIR suffers from significant computation overhead since compute intensive HE operations have to be computed on each of the database block for each PIR request. In contrast, multi-server PIR relies on a non-collusion assumption between multiple PIR servers and uses only XOR operations\ci{chor1995private,CGibbsB15,boyle2016fss,corrigan2020private,BonehBCGI21} making it significantly more efficient than cPIR.

\subsection{Cuckoo Hashing}
\label{sec:prelim_cuckoo}
In \pirname{} (\sect{sec:pirname}), messages of participants have to be stored in a database $D$. To do so, a hash function $H$ can be used to map an element $x$ into bins of the database: $D[H(x)]=x$. However, as we show in~\sect{sec:pirname}, \pirname{} requires that at most one element is stored in every database location which renders simple hashing impracticable~\cite{pinkas2015phasing}. Cuckoo hashing uses $h$ hash functions $H_1, \ldots, H_h$ to map elements into bins. It ensures that each bin contains exactly one element. If a collision occurs, i.e., if a new element is to be added into an already occupied bin, the old element is removed to make space for the new one. The evicted element, then, is placed into a new bin using another of the $h$ hash functions. If the insertion fails for a certain number of trials, an element is inserted into a special bin called stash which is the only one that is allowed to hold more than one element. Pinkas et al.~\cite{pinkas2015phasing} show that for $h=2$ hash functions and $n=2^{20}$ elements inserted to $2.4n$ bins, a stash size of 3 is sufficient to have a negligible error probability. 

\subsection{Garbled Cuckoo Table (GCT)}
\label{sec:prelim_gct}
As \pirname{} uses key-value pairs for the insertion into the database, a combination of garbled Bloom filters\ci{dong2013private} with cuckoo hashing\ci{pagh2004cuckoo, kirsch2010more}, called Garbled Cuckoo Table~\cite{pinkas2020psi}, is needed. Instead of storing $x$ elements in one bin as in an ordinary cuckoo table (cf.~\sect{sec:prelim_cuckoo}), in a GCT $h$ XOR shares of $x$ are stored at the $h$ locations determined by inputting $k$ into all $h$ hash functions. E.g., with $h=2$, if one of these two locations is already in use, the XOR share for the other (free) location is set to be the XOR of $x$ and the data stored in the used location. In \sect{app:agct}, we introduce a variant of GCT called \emph{arithmethic garbled cuckoo table} (AGCT) that uses arithmetic sharing over the ring $\mathbb{Z}_{2^\ell}$ instead of XOR sharing. For a database with $2.4n$ entries where $n$ is the number of elements inserted, Pinkas et al.~\cite{pinkas2020psi} show that the number of cycles is maximally $\log n$ with high probability. 

\subsection{Secure Multi-Party Computation (MPC)}
\label{sec:mpc}
MPC~\cite{yao86} allows a set of mutually distrusting parties to jointly compute an arbitrary function on their private inputs without leaking anything but the output. In the last years, MPC techniques in various security models have been introduced, extensively studied, and improved, e.g., in~\cite{damgard2013,lindell2015efficient,Demmler2015}. These advancements significantly enhance the efficiency of MPC making it more and more practical for real-world applications. Due to the practical efficiency it can provide, various works~\cite{BLAZE,FLASH,Trident,SWIFT,Araki0OPRT21,Tetrad} have recently concentrated on MPC for a small number of parties, especially in the three and four party honest majority setting tolerating one corruption. In \ourname{}, we employ MPC techniques across three servers to enable an anonymous communication channel (cf.~\sect{app:anon-comm-channel}) and to develop efficient $\PIRSUM$ protocols (cf.~\sect{sec:pirsum}).

\subsection{Anonymous Credentials}
\label{sect:anycred}
To protect against sybil attacks (cf.~\sect{sec:privacy-framework}), i.e., to hinder an adversary from creating multiple identities that can collect encounter information to detect correlations among unconscious encounters, we suggest to use anonymous credentials such that only registered participants can join \ourname. In this manner, the registration process can, for example, be linked to a passport. Such a registration system increases the cost to create (fake) identities.
Chaum\ci{chaum1985} introduced anonymous credentials where a client holds the credentials of several unlinkable pseudonyms. The client can then prove that it possesses the credentials of pseudonyms without the service provider being able to link different pseudonyms to the same identity. Additionally, anonymous credentials allow to certify specific properties like the age. Several instantiations for anonymous credentials have been proposed, e.g., Microsoft U-Prove\ci{uprove2013}.
\fi
\iffullversion
\section{Building Blocks in \ourname}
\label{app:framework}
This section contains details about the building blocks used in the \ourname{} framework, such as shared-key setup, collision-resistant hash functions, anonymous communication channels, and Distributed Point Functions.

\subsection{Shared-Key Setup}
\label{app:shared-key} 
Let $F: \{0,1\}^{\kappa} \times  \{0,1\}^{\kappa} \rightarrow X$ be a secure pseudo-random function (PRF), with co-domain $X$ being $\Z{\ell}$ and $\mpcservers' = \mpcservers \cup \{\participant{i}\}$ for a participant $\participant{i} \in \participantset$. The following PRF keys are established among the parties in $\mpcservers'$ in \ourname:
\begin{itemize}
	\item[--] $k_{ij}$ among every $P_i, P_j \in \mpcservers'$ and $i \ne j$.
	\item[--] $k_{ijk}$ among every $P_i, P_j, P_k \in \mpcservers'$ and $i \ne j \ne k$.
	\item[--] $k_{\mpcservers'}$ among all the parties in $\mpcservers'$.
\end{itemize}
To sample a random value $r_{ij} \in \Z{\ell}$ non-interactively, each of $P_i$ and $P_j$ can invoke $F_{k_{ij}}(id_{ij})$. In this case, $id_{ij}$ is a counter that $P_i$ and $P_j$ maintain and update after each PRF invocation. The appropriate sampling keys are implied by the context and are, thus, omitted.

\subsection{Collision Resistant Hash Function}
\label{app:hash-function} 
A family of hash functions $\{ \Hash: \mathcal{K} \times \mathcal{L} \rightarrow \mathcal{Y}\}$ is said to be \emph{collision resistant} if, for all probabilistic polynomial-time adversaries $\Adv$, given the description of $\Hash_k$, where $k \in_R \mathcal{K}$, there exists a negligible function $negl()$ such that $\text{Pr}[(x, x') \leftarrow \Adv(k): (x \ne x') \wedge \Hash_k(x) = \Hash_k(x')] = negl(\kappa)$, where $x, x' \in_R \{0,1\}^m$ and $m = \text{poly}(\kappa)$.

\subsection{Anonymous Communication Channel}
\label{app:anon-comm-channel}
This section describes how to instantiate the $\funcAnon$ functionality used by \ourname{} for anonymous communication, as discussed in  \sect{sec:instan}. We start with the protocol for the case of \pirname{} and then show how to optimize it for the use in the \teename{} protocol. Recall from \sect{sec:pirname} that in \pirname, participants in $\participantset$ upload a set of messages from which a database $\db$ must be constructed at the end by $\mpcserver{1}$ and $\mpcserver{2}$. The anonymous communication is required to ensure that neither $\mpcserver{1}$ nor $\mpcserver{2}$ can link the source of the message even after receiving all messages in clear, which may not be in the same order. To tackle this problem, we use an approach based on oblivious shuffling inspired by~\cite{Araki0OPRT21,eskandarian2021clarion}, which is formalised next.

\emph{Problem Statement.} Consider the vector $\vec{m} = \{m_1, \ldots, m_{\threshold}\}$ of $\threshold$ messages with $m_j \in \Z{\ell}$ for $j \in [\threshold]$. We want servers $\mpcserver{1}$ and $\mpcserver{2}$ to obtain $\permutation(\vec{m})$, where $\permutation()$ denotes a random permutation that neither $\mpcserver{1}$ nor $\mpcserver{2}$ knows. Furthermore, an attacker with access to a portion of the network and, hence, the ability to monitor network data should not be able to gain any information about the permutation~$\permutation()$.

In \pirname, the vector $\vec{m}$ corresponds to the infection likelihood messages of the form $(a_{i,j},c^e_{i,j})$ that each participant $\participant{i} \in \participantset$ sends over the network (cf.~\sect{sec:pirname}). W.l.o.g., we let $\participant{i}$ have the complete $\vec{m}$ with them. The protocol makes use of the third server $\mpcserver{0}$ in our setting and proceeds as follows:

\begin{enumerate}
	\item[1.] $\participant{i}$ generates  an additive sharing of $\vec{m}$ among $\mpcserver{0}$ and $\mpcserver{1}$:
	\begin{acm_inneritem}
		\item[a)] $\participant{i}, \mpcserver{0}$ sample random $\vec{\ashr{m}{1}} \in \Z{\ell}^{\threshold}$.
		\item[b)] $\participant{i}$ computes and sends $\vec{\ashr{m}{2}} = \vec{m} - \vec{\ashr{m}{1}}$ to $\mpcserver{1}$.
	\end{acm_inneritem}
	\item[2.] $\mpcserver{0}$ and $\mpcserver{1}$ agree on a random permutation $\permutation_{01}$ and locally apply $\permutation_{01}$ to their shares. Let $\permutation_{01}(\vec{m}) = \permutation_{01}(\vec{\ashr{m}{1}}) + \permutation_{01}(\vec{\ashr{m}{2}})$.
	\item[3.] $\mpcserver{0}, \mpcserver{1}$ perform a \emph{re-sharing} of $\permutation_{01}(\vec{m})$, denoted by $\vec{m_{01}}$, by jointly sampling a random $\vec{r_{01}}  \in \Z{\ell}^{\threshold}$ and setting  $\vec{\ashr{m_{01}}{1}} = \permutation_{01}(\vec{\ashr{m}{1}}) + \vec{r_{01}}$ and $\vec{\ashr{m_{01}}{2}} = \permutation_{01}(\vec{\ashr{m}{2}}) - \vec{r_{01}}$.
	\item[4.] $\mpcserver{1}$ sends $\vec{\ashr{m_{01}}{2}}$ to $\mpcserver{2}$. Now, $(\vec{\ashr{m_{01}}{1}}, \vec{\ashr{m_{01}}{2}})$ forms an additive sharing of $\vec{m_{01}}$ among $\mpcserver{0}$ and $\mpcserver{2}$.
	\item[5.] $\mpcserver{0}$ and $\mpcserver{2}$ agree on a random permutation $\permutation_{02}$ and apply $\permutation_{02}$ to their shares. Let $\permutation_{02}(\vec{m_{01}}) = \permutation_{02}(\vec{\ashr{m_{01}}{1}}) + \permutation_{02}(\vec{\ashr{m_{01}}{2}})$.
	\item[6.] $\mpcserver{0}$ sends $\permutation_{02}(\vec{\ashr{m_{01}}{1}})$ to $\mpcserver{2}$, who reconstructs $\permutation_{02}(\vec{m_{01}})$.
	\item[7.] $\mpcserver{2}$ generates an \emph{additive-sharing} of $\permutation_{02}(\vec{m_{01}})$, denoted by $\vec{m_{02}}$, among $\mpcserver{1}$ and $\mpcserver{2}$, by jointly sampling $\vec{\ashr{m_{02}}{1}} \in \Z{\ell}^{\threshold}$ with $\mpcserver{1}$ and locally setting  $\vec{\ashr{m_{02}}{2}} = \permutation_{02}(\vec{m_{01}}) - \vec{\ashr{m_{02}}{1}}$.
	\item[8.] $\mpcserver{2}$ sends $\vec{\ashr{m_{02}}{2}}$ to $\mpcserver{1}$, who locally compute the output as $\vec{m_{02}} = \vec{\ashr{m_{02}}{1}} + \vec{\ashr{m_{02}}{2}}$.
\end{enumerate}

\paragraph{Anonymous Communication in \teename} As discussed in \sect{sec:teename}, the server $\mpcserver{2}$ is only required to have the complete set of messages in the clear but in an unknown random order. As a result, in the case of \teename, only the first permutation ($\permutation_{01}$ in Step 2) is sufficient and steps 5-8 are no longer required. Furthermore, in addition to the communication by $\mpcserver{1}$ in step 4, $\mpcserver{0}$ sends its share of $\vec{m_{01}}$ to $\mpcserver{2}$, who can then reconstruct $\vec{m_{01}} = \permutation_{01}(\vec{m})$.

\paragraph{Security Guarantees} As discussed in \sect{sec:threatmodel}, we assume that the MPC servers $\mpcserver{i}, i\in[2]$, that also instantiate the anonymous communication channel are semi-honest. We claim that the protocol described above will produce a random permutation of the vector $\vec{m}$ that neither $\mpcserver{1}$ nor $\mpcserver{2}$ is aware of. To see this, note that 
\begin{align*}
	\vec{m_{02}} = \permutation_{02}(\vec{m_{01}}) =  \permutation_{02}(\permutation_{01}(\vec{m}))
\end{align*}
and both $\mpcserver{1}$ and $\mpcserver{2}$ know only one of the two permutations $\permutation_{01}$ and $\permutation_{02}$, but not both. Furthermore, the re-sharing performed in step 3 and the generation of additive shares in step 6 above ensures that an attacker observing the traffic cannot relate messages sent and received.

As we also consider a client-malicious security model~\cite{lehmkuhl2021,chandran2021}, where some clients might deviate from the protocol to gain additional information, we also have to take into consideration how the clients could manipulate the communication to break anonymity. For \teename{}, this is trivial: The TEE ensures that clients' messages are correctly generated and uploaded. For \pirname{}, a malicious client could manipulate how many messages it uploads. However, messages with addresses that are already used will be dropped, i.e., effectively removing the malicious client from the system. A receiver will never fetch messages with unknown, random addresses. Furthermore, the servers use secure communication channels and even send freshly re-shared shares. We also consider a global attacker being able to monitor the full network traffic to be unrealistic. Hence, considering the discussed aspects/assumptions, classical attacks on anonymous communication such as flooding~\cite{Danezis2003} are not relevant for our model.

\subsection{Distributed Point Functions (DPF)}
\label{app:dpf}
Consider a point function $P_{\alpha,\beta}: \Z{\ell} \rightarrow \Z{\ell'}$ such that for all $\alpha \in \Z{\ell}$ and $\beta \in \Z{\ell'}$, $P_{\alpha, \beta}(\alpha) = \beta$ and $P_{\alpha,\beta}(\alpha') = 0$ for all $\alpha' \ne \alpha$. That is, when evaluated at any input other than $\alpha$, the point function $P_{\alpha, \beta}$ returns $0$ and when evaluated at $\alpha$ it returns $\beta$.

An $(s, t)$-distributed point function~(DPF)~\cite{GilboaI14,CGibbsB15} distributes a point function $P_{\alpha, \beta}$ among $s$ servers in such a way that no coalition of at most $t$ servers learns anything about $\alpha$ or $\beta$ given their $t$ shares of the function. We use $(2,1)$-DPFs in \ourname{} to optimize the communication of PIR-based protocols, as discussed in \sect{sec:reduce-comm-pir}. Formally, a $(2, 1)$-DPF comprises of the following two functionalities:

\medskip
\begin{itemize}
	\item[--] \Gen($\alpha, \beta$) $\rightarrow (\pirkey_1, \pirkey_2)$. Output two DPF keys $\pirkey_1$ and $\pirkey_2$, given $\alpha \in \Z{\ell}$ and $\beta \in \Z{\ell'}$. 
	\item[--] \Eval($\pirkey, \alpha') \rightarrow \beta'$. Return $\beta' \in \Z{\ell'}$, given key $\pirkey$ generated using \Gen, and an index $\alpha' \in \Z{\ell}$. 
\end{itemize}

\medskip
A $(2,1)$-DPF is said to be \emph{correct} if for all $\alpha, x \in \Z{\ell}$, $\beta \in \Z{\ell'}$, and $(\pirkey_1, \pirkey_2) \leftarrow$ \Gen$(\alpha, \beta)$, it holds that
\[
	\Eval(\pirkey_1, x) + \Eval(\pirkey_2, x) = (x = \alpha)~?~\beta : 0.
\]
A $(2,1)$-DPF is said to be \emph{private} if neither of the keys $\pirkey_1$ and $\pirkey_2$ leaks any information about $\alpha$ or $\beta$. That is, there exists a polynomial time algorithm that can generate a computationally indistinguishable view of an adversary $\Adv$ holding DPF key $\pirkey_u$ for $u \in \{1,2\}$, when given the key $\pirkey_u$.

As mentioned in \cite{CGibbsB15,boyle2016fss}, a malicious participant could manipulate the \Gen{} algorithm to generate incorrect DPF keys that do not correspond to any point function. While~\cite{CGibbsB15} used an external non-colluding auditor to circumvent this issue in the two server setting, \cite{boyle2016fss} formalised this issue and proposed an enhanced version of DPF called Verifiable DPFs.
In addition to the standard DPF, a verifiable DPF has an additional function called \Ver{} that can be used to ensure the correctness of the DPF keys. In contrast to \Eval, \Ver{} in a $(2,1)$-verifiable DPF is an interactive protocol between the two servers, with the algorithm returning a single bit indicating whether the input DPF keys $\pirkey_1$ and $\pirkey_2$ are valid. 

A verifiable DPF is said to be \emph{correct} if for all $\alpha \in \Z{\ell}$, $\beta \in \Z{\ell'}$, keys $(\pirkey_1, \pirkey_2) \leftarrow$ \Gen$(\alpha, \beta)$,  the verify protocol \Ver{} outputs $1$ with probability $1$. \Ver{} should ensure that no additional information about $\alpha$ or $\beta$ is disclosed to the party in possession of one of the DPF keys. Furthermore, the probability that \Ver{} outputs $1$ to at least one of the two servers for a given invalid key pair $(\pirkey'_1,\pirkey'_2)$ is negligible in the security parameter $\kappa$. 

Recent results in the area of (verifiable) DPFs~\cite{de2022lightweight,Kolobov22} might be an interesting direction for future work to further enhance the efficiency of our \pirname{} construction.

\paragraph{Communication Complexity}
Using the protocol of Boyle et. al.~\cite{boyle2016fss}, a $(2,1)$-DPF protocol for a point function with domain size $N$  has key size $(\lambda + 2) \cdot \log (N/\lambda) + 2 \cdot \lambda$ bits, where $\lambda = 128$ for an AES based implementation. The additional cost in the case of verifiable DPF is for executing the \Ver{} function, which has a constant number of elements in~\cite{boyle2016fss}. Furthermore, as stated in~\cite{boyle2016fss}, the presence of additional non-colluding servers can improve the efficiency of \Ver, and we use $\mpcserver{0}$ in the case of $\PIRSUMA$, as discussed in \sect{sec:pir-sumA}.  We refer to~\cite{boyle2016fss} for more details regarding the scheme.
\fi
\iffullversion
\section{PIR-SUM Protocol Details}
\label{app:PIR-SUM}
This section provides additional details of our $\PIRSUM$ protocols introduced in~\sect{sec:naive-pir}. We begin by recalling the security guarantees of a 2-server PIR for our setting~\cite{chor1995private,GHPS22}.
Informally in a two-server PIR protocol, where the database $\db$ is held by two non-colluding servers $\mpcserver{1}$ and $\mpcserver{2}$, a single server $\mpcserver{u} \in \{\mpcserver{1}, \mpcserver{2}\}$ should not learn any information about the client's query. The security requirement is formally captured in Definition~\ref{def:sec-pir-main}.

\begin{definition}
\label{def:sec-pir-main}
(Security of 2-server PIR) A PIR scheme with two non-colluding servers is called secure if each of the servers does not learn any information about the query indices.

Let $view(\mpcserver{u}, \querySet)$ denote the view of server $\mpcserver{u} \in \{\mpcserver{1}, \mpcserver{2}\}$ with respect to a list of queries, denoted by $\querySet$. We require that for any database $\db$, and for any two $\threshold$-length list of queries~$\querySet = (\query_1, \dots, \query_\threshold)$ and~$\querySet' = (\query_1^\prime, \dots, \query_\threshold^\prime)$, no algorithm whose run time is polynomial in $\threshold$ and in computational parameter~$\kappa$ can distinguish the view of the servers $\mpcserver{1}$ and $\mpcserver{2}$, between the case of participant $\participant{i}$ using the queries in $\querySet$ ($\{view(\mpcserver{u}, \querySet)\}_{u \in \{1,2\}}$), and the case of it using $\querySet'$ ($\{view(\mpcserver{u}, \querySet')\}_{u \in \{1,2\}}$).
\end{definition}

\subsection{Linear Summation PIR for $\funcPirS$ with optimized Communication.}
\label{app:linear-summation} 
This section describes the 2-server linear summation PIR protocol in~\cite{chor1995private}, as well as how to optimize communication using DPF techniques discussed in~Appendix~\ref{app:dpf}. 
To retrieve the $\query$-th block from database~$\db$ of size $\dbsize$, the linear summation PIR proceeds as follows:
\begin{itemize}
	\item Participant $\participant{i}$ prepares an $\dbsize$-bit string $\bitvector{}_{\query} = \{\bitV{1}{\query}, \ldots, \bitV{\dbsize}{\query}\}$ with $\bitV{j}{\query} = 1$ for $j = \query$ and $\bitV{j}{\query} = 0$ and $j \ne \query$, for $j \in [\dbsize]$. 
	\item $\participant{i}$ generates a Boolean sharing of $\bitvector{}_{\query}$ among $\mpcserver{1}$ and $\mpcserver{2}$, i.e., $\participant{i}, \mpcserver{1}$ non-interactively sample the random $[\bitvector{}_{\query}]_1 \in \{0,1\}^{\dbsize}$ and $\participant{i}$ sends $[\bitvector{}_{\query}]_2 = \bitvector{}_{\query} \xor [\bitvector{}_{\query}]_1$ to $\mpcserver{2}$.
	\item $\mpcserver{u}$, for $u \in \{1,2\}$, sends $[y]_u = \Xor\limits_{j = 1}^{\dbsize} [\bitV{j}{\query}]_u \dbV{j}$ to $\participant{i}$.
	\item $\participant{i}$ locally computes $\dbV{\query} = [y]_1 \xor [y]_2$.
\end{itemize}
The linear summation PIR described above requires communication of $\dbsize + 2\ell$ bits, where $\ell$ denotes the size of each data block in $\db$.

\subsubsection{Optimizing Communication using DPFs.}
\label{main:fss-pir-scheme}
Several works in the literature~\cite{GilboaI14,CGibbsB15,boyle2016fss,GHPS22} have used DPFs~(cf.~Appendix~\ref{app:dpf}) as a primitive to improve the communication in multi-server PIR. The idea is to use a DPF function to allow the servers $\mpcserver{1}$ and $\mpcserver{2}$ to obtain the XOR shares of an $\dbsize$-bit string $\bitvector{}$ that has a zero in all positions except the one representing the query $\query$. Because DPF keys are much smaller in size than the actual database size, this method aids in the elimination of $\dbsize$-bit communication from $\participant{i}$ to the servers, as in the aforementioned linear summation PIR. 

\noindent To query the $\query$-th block from a database $\db$ of size $\dbsize$, 
\begin{itemize}
	\item[--] Participant $\participant{i}$ executes the key generation algorithm with input $\query$ to obtain two DPF keys, i.e., $(\pirkey_1, \pirkey_2) \leftarrow$ \Gen($\query, 1$).
	\item[--] $\participant{i}$ sends $\pirkey_u$ to $\mpcserver{u}$, for $u \in \{1, 2\}$.
	\item[--] $\mpcserver{u}$, for $u \in \{1, 2\}$, performs a DPF evaluation at each of the positions $j \in [\dbsize]$ using key $\pirkey_u$ and obtains the XOR share corresponding to bit vector $\bitvector{}_{\query}$. 
	\begin{acm_inneritem}
		\item $\mpcserver{u}$ expands the DPF keys as $[\bitV{j}{\query}]_u \leftarrow$ \Eval($\pirkey_u, j$) for $j \in [\dbsize]$.
	\end{acm_inneritem}
	\item[--] $\mpcserver{u}$, for $u \in \{1,2\}$, sends $[y]_u = \Xor\limits_{j = 1}^{\dbsize} [\bitV{j}{\query}]_u \dbV{j}$ to $\participant{i}$.
	\item[--] $\participant{i}$ locally computes $\dbV{\query} = [y]_1 \xor [y]_2$.
\end{itemize}

\smallskip
For the case of semi-honest participants, we use the DPF protocol of~\cite{boyle2016fss}  and the key size is $O(\lambda \cdot \log(\dbsize/\lambda))$ bits, where $\lambda = 128$ is related to AES implementation in~\cite{boyle2016fss}. 

To prevent a malicious participant from sending incorrect or malformed keys to the servers, we use the verifiable DPF construction proposed in~\cite{boyle2016fss} for the case of malicious participants. This results only in a constant communication overhead over the semi-honest case. Furthermore, as noted in~\cite{boyle2016fss}, we use the additional server $\mpcserver{0}$ for a better instantiation of the verifiable DPF, removing the need for interaction with the participant $\participant{i}$ for verification. We provide more information in Appendix~\ref{app:dpf} and  refer the reader to~\cite{boyle2016fss} for all details.
\fi

\vfill

\end{document}